\declaretheorem[name=Theorem]{thm}
\declaretheorem[name=Lemma, sibling=thm]{lemma}
\DeclareMathOperator*{\inprob}{\stackrel{P}{\longrightarrow}}
\DeclareMathOperator*{\indist}{\stackrel{d}{\longrightarrow}}
\DeclareMathOperator*{\bounded}{O_P}
\DeclareMathOperator*{\fasterthan}{o_P}
\DeclareMathOperator*{\fasterthandet}{o}
\DeclareMathOperator*{\argmin}{argmin}
\DeclareMathOperator*{\argmax}{argmax}
\DeclareMathOperator*{\expit}{expit}
\newcommand{\s}[1]{\mathscr{#1}}
\renewcommand{\d}[1]{\mathbb{#1}}
\newcommand{\n}[1]{\mathrm{#1}}
\title{Causal isotonic regression }
\author{Ted\ Westling\\ Department of Mathematics and Statistics\\ University of Massachusetts Amherst \\ twestling@math.umass.edu \and Peter Gilbert\\ Vaccine and Infectious Disease Division\\ Fred Hutchinson Cancer Research Center \\ pgilbert@scharp.org \and Marco Carone\\ Department of Biostatistics \\ University of Washington \\ mcarone@uw.edu }
\date{}
\begin{document}

\maketitle

\begin{abstract}
In observational studies, potential confounders may distort the causal relationship between an exposure and an outcome. However, under some conditions, a causal dose-response curve can be recovered using the $G$-computation formula. Most classical methods for estimating such curves when the exposure is continuous rely on restrictive parametric assumptions, which carry significant risk of model misspecification. Nonparametric estimation in this context is challenging because in a nonparametric model these curves cannot be estimated at regular rates. Many available nonparametric estimators are sensitive to the selection of certain tuning parameters, and performing valid inference with such estimators can be difficult. In this work, we propose a nonparametric estimator of a causal dose-response curve known to be monotone. We show that our proposed estimation procedure generalizes the classical least-squares isotonic regression estimator of a monotone regression function. Specifically, it does not involve tuning parameters, and is invariant to strictly monotone transformations of the exposure variable. We describe theoretical properties of our proposed estimator, including its irregular limit distribution and the potential for doubly-robust inference. Furthermore, we illustrate its performance via numerical studies, and use it to assess the relationship between BMI and immune response in HIV vaccine trials.
\end{abstract}

\doublespacing

\section{Introduction}

\subsection{Motivation and literature review}

Questions regarding the causal effect of an exposure on an outcome are ubiquitous in science. If investigators are able to carry out an experimental study in which they randomly assign a level of exposure to each participant and then measure the outcome of interest, estimating a causal effect is generally straightforward. However, such studies are often not feasible, and data from observational studies must be relied upon instead. Assessing causality is then more difficult, in large part because of potential confounding of the relationship between exposure and outcome. Many nonparametric methods have been proposed for drawing inference about a causal effect using observational data when the exposure of interest is either binary or categorical -- these include, among others, inverse probability weighted (IPW) estimators \citep{horvitz1952sampling}, augmented IPW estimators  \citep{scharfstein1999adjusting, bang2005doubly}, and targeted minimum loss-based estimators (TMLE) \citep{vanderlaan2011tmle}. 

In practice, many exposures are continuous, in the sense that they may take any value in an interval. A common approach to dealing with such exposures is to simply discretize the interval into two or more regions, thus returning to the categorical exposure setting. However, it is frequently of scientific interest to learn the causal dose-response curve, which describes the causal relationship between the exposure and outcome across a continuum of the exposure. Much less attention has been paid to continuous exposures. \cite{robins2000msm}  and \cite{zhang2016quantitative} studied this problem using parametric models, and \cite{neugebauer2007jspi} considered inference on parameters obtained by projecting a causal dose-response curve onto a parametric working model. Other authors have taken a nonparametric approach instead. \cite{rubin2006msm} and \cite{diaz2011super} discussed nonparametric estimation using flexible data-adaptive algorithms. \cite{kennedy2016continuous} proposed an estimator based on local linear smoothing. Finally, \cite{van2018non} recently presented a general framework for inference on parameters that fail to be smooth enough as a function of the data-generating distribution and for which regular root-$n$ estimation theory is therefore not available. This is indeed the case for the causal dose-response curve, and \cite{van2018non} discussed inference on such a parameter as a particular example.

Despite a growing body of literature on nonparametric estimation of causal dose-response curves, to the best of our knowledge, existing methods do not permit valid large-sample inference and may be sensitive to the selection of certain tuning parameters. For instance, smoothing-based methods are often sensitive to the choice of a kernel function and bandwidth, and these estimators typically possess non-negligible asymptotic bias, which complicates the task of performing valid inference.


In many settings, it may be known that the causal dose-response curve is monotone in the exposure.  For instance, exposures such as daily exercise performed, cigarettes smoked per week, and air pollutant levels are all known to have monotone relationships with various health outcomes. In such cases, an extensive literature suggests that monotonicity may be leveraged to derive estimators with desirable properties -- the monograph of \cite{groene2014shape} provides a comprehensive overview. For example, in the absence of confounding, isotonic regression may be employed to estimate the causal dose-response curve \citep{barlow1972order}. The isotonic regression estimator does not require selection of a kernel function or bandwidth, is invariant to strictly increasing transformations of the exposure, and upon centering and scaling by  $n^{-1/3}$, converges in law pointwise to a symmetric limit distribution with mean zero \citep{brunk1970regression}. The latter property is useful since it facilitates asymptotically valid pointwise inference.

Nonparametric inference on a monotone dose-response curve when the exposure-outcome relationship is confounded is more difficult to tackle and is the focus of this manuscript.   To the best of our knowledge, this problem has not been comprehensively studied  before.

\subsection{Parameter of interest and its causal interpretation}\label{sec:param}

The prototypical data unit we consider is $O = (Y, A,W)$, where $Y$ is a response, $A$ a continuous exposure, and $W$ a vector of covariates. The support of the true data-generating distribution $P_0$ is denoted by $\s{O} = \s{Y} \times \s{A} \times \s{W}$, where $\s{Y} \subseteq \d{R}$,  $\s{A} \subseteq \d{R}$ is an interval, and $\s{W} \subseteq \d{R}^p$.  Throughout,  the use of subscript $0$ refers to evaluation at or under $P_0$. For example, we write $\theta_0$ and $F_0$ to denote $\theta_{P_0}$ and $F_{P_0}$, respectively, and $E_0$ to denote expectation under $P_0$. 

Our parameter of interest is the so-called \emph{$G$-computed regression function} from $\mathscr{A}$ to $\mathbb{R}$, defined as \[a\mapsto \theta_0(a):=E_0\left[E_0\left(Y\mid A=a,W\right)\right]\ ,\] where the outer expectation is with respect to the marginal distribution $Q_0$ of $W$. In some scientific contexts, $\theta_0(a)$ may have a causal interpretation. Adopting the Neyman-Rubin potential outcomes framework, for each $a \in \s{A}$, we denote by $Y(a)$ a unit's potential outcome under exposure level $A =a$. The causal parameter $m_0(a) := E_0\left[Y(a)\right]$ corresponds to the average outcome under assignment of the entire population to exposure level $A=a$. The resulting curve $m_0:\s{A}\rightarrow \d{R}$ is what we formally define as the \emph{causal dose-response curve}. Under varying sets of causal conditions, $m_0(a)$ may be identified with  functionals of the observed data distribution, such as the unadjusted regression function $r_0(a) := E_0\left(Y \mid A =a\right)$ or the $G$-computed regression function $\theta_0(a)$.

Suppose that (i) each unit's potential outcomes are independent of all other units' exposures; and (ii) the observed outcome $Y$ equals the potential outcome $Y(A)$ corresponding to the exposure level $A$ actually received. Identification of $m_0(a)$ further depends on the relationship between $A$ and $Y(a)$. If (i) and (ii) hold, and in addition, (iii) $A$ and $Y(a)$ are independent, and (iv) the marginal density of $A$ is positive at $a$, then $m_0(a) = r_0(a)$.  Condition (iii) typically only holds in experimental studies (e.g., randomized trials). In observational studies, there are often common causes of $A$ and $Y(a)$ -- so-called \emph{confounders} of the exposure-outcome relationship -- that induce dependence. In such cases, $m_0(a)$ and $r_0(a)$ do not generally coincide. However, if $W$ contains a sufficiently rich collection of confounders, it may still be possible to identify $m_0(a)$ from the observed data. If (i) and (ii) hold, and in addition, (v) $A$ and $Y(a)$ are conditionally independent given $W$, and (vi) the conditional density of $A$ given $W$ is almost surely positive at $A=a$, then $m_0(a)= \theta_0(a)$. This is a fundamental result in causal inference \citep{robins1986, gill2001}. Whenever $m_0(a) = \theta_0(a)$, our methods can be interpreted as drawing inference on the causal dose-response parameter $m_0(a)$.

We note that the definition of the counterfactual outcome $Y(a)$ presupposes that the intervention setting $A=a$ is uniquely defined.  In many situations, this stipulation requires careful thought. For example, in Section~\ref{bmi} we consider an application in which body mass index (BMI) is the exposure of interest. There is an ongoing scientific debate about whether such an exposure leads to a meaningful causal interpretation, since it is not clear what it means to intervene on BMI.

Even if the identifiability conditions stipulated above do not strictly hold or the scientific question is not causal in nature, when $W$ is associated with both $A$ and $Y$, $\theta_0(a)$ often has a more appealing interpretation than the unadjusted regression function $r_0(a)$.  Specifically, $\theta_0(a)$ may be interpreted as the average value of $Y$ in a population with exposure fixed at $A =a$  but otherwise characteristic of the study population with respect to $W$. Because $\theta_0(a)$ involves both adjustment for $W$ and marginalization with respect to a single reference population that does not depend on the value $a$, the comparison of $\theta_0(a)$ over different values of $a$ is generally more meaningful than for $r_0(a)$.

When $P_0(A = a) = 0$, the parameter $P \mapsto \theta_P(a)$ is not pathwise differentiable at $P_0$ with respect to the nonparametric model \citep{diaz2011super}. Heuristically, due to the continuous nature of $A$, $\theta_P(a)$ corresponds to a local feature of $P$. As a result, regular root-$n$ rate estimators cannot be expected, and standard methods for constructing efficient estimators of pathwise differentiable parameters in nonparametric and semiparametric models (e.g., estimating equations, one-step estimation, targeted minimum loss-based estimation) cannot be used directly to target and obtain inference on $\theta_0(a)$.



\subsection{Contribution and organization of the article}

We denote by $F_P:\mathscr{A}\rightarrow\mathbb{R}$ the distribution function of $A$ under $P$, by $\mathscr{F}_\theta$ the class of non-decreasing real-valued functions on $\mathscr{A}$, and by $\mathscr{F}_F$ the class of strictly increasing and continuous distribution functions supported on $\mathscr{A}$. The statistical model we will work in is $\mathscr{M}:=\{P:\theta_P\in\mathscr{F}_\theta,F_P\in \mathscr{F}_F\}$, which consists of the collection of  distributions for which $\theta_P$ is non-decreasing over $\mathscr{A}$ and the marginal distribution of $A$ is continuous with positive Lebesgue density over $\mathscr{A}$.

In this article, we study nonparametric estimation and inference on the $G$-computed regression function $a\mapsto \theta_0(a)= E_0\left[E_0\left(Y \mid A = a, W\right)\right]$ for use when $A$ is a continuous exposure and $\theta_0$ is known to be monotone. Specifically,  our goal is to make inference about $\theta_0(a)$ for $a\in\mathscr{A}$ using independent observations $O_1,O_2,\ldots,O_n$ drawn from $P_0\in\mathscr{M}$.  This problem is an extension of classical isotonic regression to the setting in which the exposure-outcome relationship is confounded by recorded covariates -- this is why we refer to the method proposed as \emph{causal isotonic regression}.  As mentioned above, to the best of our knowledge, nonparametric estimation and inference on a monotone $G$-computed regression function has not been comprehensively studied before. In what follows, we:%

\begin{enumerate}
\item show that our proposed estimator generalizes the unadjusted isotonic regression estimator to the more realistic scenario in which there is confounding by recorded covariates;
\item investigate finite-sample and asymptotic properties of the proposed estimator, including invariance to strictly increasing transformations of the exposure, doubly-robust consistency, and doubly-robust convergence in distribution to a non-degenerate limit;
\item derive practical methods for constructing pointwise confidence intervals, including intervals that have valid doubly-robust calibration;
\item illustrate numerically the practical performance of the proposed estimator.
\end{enumerate}
%
%

We note that in \cite{westling2018monotone}, we studied estimation of $\theta_0$ as one of several examples of a general approach to monotonicity-constrained inference.  Here, we provide a comprehensive examination of estimation of a monotone dose-response curve. In particular, we establish novel theory and methods that have important practical implications. First, we provide conditions under which the estimator converges in distribution even when one of the nuisance estimators involved in the problem is inconsistent. This contrasts with the results in  \cite{westling2018monotone}, which required that both nuisance parameters be estimated consistently. We also propose two estimators of the scale parameter arising in the limit distribution, one of which requires both nuisance estimators to be consistent, and the other of which does not. Second, we demonstrate that our estimator is invariant to strictly monotone transformations of the exposure. Third, we study the joint convergence of our proposed estimator at two points, and use this result to construct confidence intervals for causal effects. Fourth, we study the behavior of our estimator in the context of discrete exposures. Fifth, we propose an alternative estimator based on cross-fitting of the nuisance estimators, and demonstrate that this strategy removes the need for empirical process conditions required in \cite{westling2018monotone}. Finally, we investigate the behavior of our estimator in comprehensive numerical studies, and compare its behavior to that of the local linear estimator of \cite{kennedy2016continuous}.


The remainder of the article is organized as follows. In Section~\ref{estimator}, we concretely define the proposed estimator. In Section~\ref{theoretical}, we study theoretical properties of the proposed estimator. In Section~\ref{inference}, we propose methods for pointwise inference. In Section~\ref{numerical}, we perform numerical studies to assess the performance of the proposed estimator, and in Section~\ref{bmi}, we use this procedure to investigate the relationship between BMI and immune response to HIV vaccines using data from several randomized trials. Finally, we provide concluding remarks in Section~\ref{discussion}. Proofs of all theorems are provided in Supplementary Material.

\section{Proposed approach}\label{estimator}



\subsection{Review of isotonic regression}

Since the proposed estimator of $\theta_0(a)$ builds upon isotonic regression, we briefly review the classical least-squares isotonic regression estimator of  $r_0(a)$. The isotonic regression $r_n$ of $Y_1,Y_n, \dotsc, Y_n$ on $A_1,A_2, \dotsc, A_n$ is the minimizer in $r$ of $\sum_{i=1}^n [Y_i - r(A_i)]^2$ over all  monotone non-decreasing functions. This minimizer can be obtained via the Pool Adjacent Violators Algorithm \citep{ayer1955empirical, barlow1972order}, and can also be represented in terms of greatest convex minorants (GCMs). The GCM of a bounded function $f$ on an interval $[a,b]$ is defined as the supremum over all convex functions $g$ such that $g \leq f$. Letting $F_n$ be the empirical distribution function of $A_1,A_2, \dotsc, A_n$, $r_n(a)$ can be shown to equal the left derivative, evaluated at $F_n(a)$, of the GCM over the interval $[0,1]$ of the linear interpolation of the so-called \emph{cusum diagram} \[\bigg{\{} \tfrac{1}{n}\bigg{(}i, \sum_{j=0}^i Y_{(i)}^*\bigg{)} : i = 0, 1, \dotsc, n\bigg{\}}\ ,\] where $Y_{(0)}^* := 0$ and $Y_{(i)}^*$ is the value of $Y$ corresponding to the observation with $i^{th}$ smallest value of $A$.

The isotonic regression estimator $r_n$ has many attractive properties. First, unlike smoothing-based estimators, isotonic regression does not require the choice of a kernel function, bandwidth, or any other tuning parameter. Second, it is invariant to strictly increasing  transformations of $A$. Specifically, if $H : \s{A} \to \d{R}$ is a strictly increasing function, and $r_n^*$ is the isotonic regression of $Y_1, Y_2,\dotsc, Y_n$ on $H(A_1),H(A_2), \dotsc, H(A_n)$, it follows that $r_n^* = r_n \circ H^{-1}$. Third, $r_n$ is uniformly consistent on any strict subinterval of $\s{A}$. Fourth, $n^{1/3}[r_n(a) - r_0(a)]$ converges in distribution to $\left[ 4 r_0'(a) \sigma_0^2(a) / f_0(a) \right]^{1/3} \d{W}$ for any interior point $a$ of $\s{A}$ at which $r_0'(a)$, $f_0(a) := F_0'(a)$ and $\sigma_0^2(a) := E_0\left\{ [Y-r_0(a)]^2 \mid A = a\right\}$ exist, and are positive and continuous in a neighborhood of $a$. Here, $\d{W} := \argmax_{u \in \d{R}} \{ Z_0(u) - u^2\}$, where $Z_0$ denotes a two-sided Brownian motion originating from zero, and is said to follow \emph{Chernoff's distribution}. Chernoff's distribution has been extensively studied: among other properties, it is a log-concave and symmetric law centered at zero, has moments of all orders, and can be approximated by a $N(0, 0.52)$ distribution \citep{chernoff1964, groeneboom2001jcgs}. It appears often in the limit distribution of monotonicity-constrained estimators.

\subsection{Definition of proposed estimator}\label{sec:defn}

For any given $P \in \s{M}$, we  define the outcome regression pointwise as $\mu_P(a,w) := E_{P}\left(Y  \mid A=a, W=w\right)$,  and the normalized exposure density as $g_P(a, w) := \pi_P(a \mid w) / f_P(a)$, where $\pi_P(a \mid w)$ is the evaluation at $a$ of the conditional density function of $A$ given $W=w$ and $f_P$ is the marginal density function of $A$ under $P$. Additionally, we define the pseudo-outcome $\xi_{\mu, g, Q}(y, a, w)$ as
\[ \xi_{\mu, g, Q}(y, a, w) := \frac{y - \mu(a, w)}{g(a , w)}  + \int \mu(a, z) Q(dz) \ .\]
As noted by \cite{kennedy2016continuous},  $E_{0}\left[ \xi_{\mu, g, Q_0}(Y, A, W) \mid A = a\right] = \theta_0(a)$ if \emph{either} $\mu = \mu_0$ or $g = g_0$. They used this fact to motivate an estimator $\theta_{n,h}(a)$ of $\theta_0(a)$, defined as the local linear regression with bandwidth $h > 0$ of the pseudo-outcomes $\xi_{\mu_n, g_n, Q_n}(Y_1, A_1, W_1), \xi_{\mu_n, g_n, Q_n}(Y_2, A_2, W_2), \dotsc, \xi_{\mu_n, g_n, Q_n}(Y_n, A_n, W_n)$ on $A_1,A_2, \dotsc, A_n$, where $\mu_n$ is an estimator of $\mu_0$, $g_n$ is an estimator of $g_0$, and $Q_n$ is the empirical distribution function based on $W_1, W_2,\dotsc, W_n$. The study of this nonparametric regression problem is not standard because these pseudo-outcomes are dependent when the nuisance function estimators $\mu_n$ and $g_n$ are estimated from the data. Nevertheless, \cite{kennedy2016continuous} showed that their estimator is consistent if either $\mu_n$ or $g_n$ is consistent. Additionally, under regularity conditions, they showed that if both nuisance estimators converge fast enough and the bandwidth $h_n^*$ tends to zero at rate $n^{-1/5}$, then $n^{2/5}[\theta_{n,h_n^*}(a) - \theta_0(a)] \indist N(b_0(a), v_0(a))$, where $b_0(a)$ is an asymptotic bias depending on the second derivative of $\theta_0$, and $v_0(a)$ is an asymptotic variance.


In our setting, $\theta_0$ is known to be monotone. Therefore, instead of using a local linear regression to estimate the conditional mean of the pseudo-outcomes, it is natural to consider as an estimator the isotonic regression of the pseudo-outcomes on $A_1,A_2, \dotsc, A_n$. Using the GCM representation of isotonic regression stated in the previous section, we can summarize our estimation procedure as follows:
\begin{enumerate}
\item Construct estimators $\mu_n$ and $g_n$ of $\mu_0$ and $g_0$, respectively.
\item For each $a$ in the unique values of $A_1, A_2,\dotsc, A_n$, compute and set 
\begin{equation}\Gamma_{n}(a)  :=\frac{1}{n}\sum_{i=1}^nI_{(-\infty, a]}(A_i)\left[\frac{ Y_i - \mu_n(A_i,W_i)}{g_n(A_i, W_i)}\right] + \frac{1}{n^2} \sum_{i=1}^n \sum_{j=1}^n I_{(-\infty, a]}(A_i) \mu_n(A_i, W_j) \ . \label{gamma_one_step} \end{equation}
\item Compute the GCM $\overline\Psi_n$ of the set of points $\left\{ (0,0)\right\} \cup \left\{ \left(F_n(A_i), \Gamma_n(A_i) \right) : i = 1, 2, \dotsc, n \right\}$ over $[0,1]$.
\item Define $\theta_n(a)$ as the left derivative of $\overline\Psi_n$ evaluated at $F_n(a)$.
\end{enumerate}

As in the work of \cite{kennedy2016continuous}, while the proposed estimator $\theta_n$ can be defined as an isotonic regression, the asymptotic properties of our estimator do not appear to simply follow from classical results for isotonic regression because the pseudo-outcomes depend on the estimators $\mu_n$, $g_n$ and $Q_n$, which themselves depend on all the observations. However, $\theta_n$ is of generalized Grenander-type, and thus the asymptotic results of \cite{westling2018monotone} can be used to study its asymptotic properties. To see that $\theta_n$ is a generalized Grenander-type estimator, we define $\psi_P := \theta_P \circ  F_P^{-1}$ and note that since $\theta_P$ and $F_P^{-1}$ are increasing, so is $\psi_P$. Therefore, the primitive function $\Psi_P(t) := \int_0^t \psi_P(u) du = \int_{-\infty}^{F_P^{-1}(t)} \theta_P(v) F_P(dv)$ is convex. Next, we define $\Gamma_P := \Psi_P \circ F_P$, so that $\Gamma_P(a) = \int_{-\infty}^{a} \theta_{P} (u)F_P(du) = \iint_{-\infty}^{a} \mu_P(u, w) F_P(du) Q_P(dw)$. The parameter $\Gamma_P(a_0)$ is pathwise differentiable at $P$ in $\s{M}$ for each $a_0$, and its nonparametric efficient influence function $ \phi_{\mu_P, g_P, F_P, Q_P,a_0}^*$ can be computed to be
\begin{align*}
(y,a,w)\mapsto I_{(-\infty, a_0]}(a) \left[\frac{ y - \mu_P(a,w)}{g_P(a, w)}\right] +\int_{-\infty}^{a_0}  \mu_P(u, w) \, F_P(du)  + I_{(-\infty, a_0]}(a) \theta_{P}(a) - 2\Gamma_{P}(a_0)\ .
 \end{align*}
Denoting by $P_n$ any estimator of $P_0$ compatible with estimators $\mu_n$, $g_n$, $F_n$ and $Q_n$ of $\mu_0$, $g_0$, $F_0$ and $Q_0$, respectively,  the one-step estimator of $\Gamma_0(a)$ is given by $\Gamma_{n}(a) := \Gamma_{\mu_n, F_n, Q_n}(a) + \frac{1}{n}\sum_{i=1}^n\phi_{\mu_n, g_n, F_n, Q_n,a}^*(O_i)$, where we define $\Gamma_{\mu_n, F_n, Q_n}(a) :=  \iint_{-\infty}^{a} \mu_n(u, w) F_n(du) Q_n(dw)$. This one-step estimator is equivalent to that defined in \eqref{gamma_one_step}. We then define $\Psi_n := \Gamma_n \circ F_n^{-}$ for $F_n^{-}$ the empirical quantile function of $A$ as our estimator of $\Psi_0$, and $\psi_n$ as the left derivative of the GCM of $\Psi_n$. Thus, we find that $\theta_n = \psi_n \circ F_n$ is the estimator defined in steps 1--4. This form of the estimator was described in \cite{westling2018monotone}, where it was briefly discussed as one of several examples of a general strategy for nonparametric monotone inference.  

If $\theta_0(a)$ were only known to be monotone on a fixed sub-interval $\s{A}_0 \subset \s{A}$, we would define $F_P(a) := P(A \leq a \mid A \in \s{A}_0)$ as the marginal distribution function restricted to $\s{A}_0$, and $F_n$ as its empirical counterpart. Similarly, $I_{(-\infty, a]}(A_i)$ in \eqref{gamma_one_step} would be replaced with $I_{(-\infty, a]\cap \s{A}_0}(A_i)$. In all other respects, our estimation procedure would remain the same.

Finally, as alluded to earlier, we observe that the proposed estimator generalizes classical isotonic regression in a way we now make precise. If it is known that $A$ is independent of $W$ (Condition 1), so that $g_0(a ,w ) = 1$ for all supported $(a,w)$, we may take $g_n = 1$. If, furthermore, it is known that $Y$ is independent of $W$ given $A$ (Condition 2), then we may construct $\mu_n$ such that  $\mu_n(a, w) = \mu_n(a)$ for all supported $(a, w)$. Inserting $g_n = 1$ and any such $\mu_n$ into \eqref{gamma_one_step}, we obtain that $\Gamma_n(a)  =\frac{1}{n}\sum_{i=1}^n I_{(-\infty, a]}(A_i) Y_i$ and thus that $\theta_n(a) = r_n(a)$ for each $a$. Hence, in this case, our estimator reduces to least-squares isotonic regression.

\section{Theoretical properties}\label{theoretical}

\subsection{Invariance to strictly increasing exposure transformations}

An important feature of the proposed estimator is that, as with the isotonic regression estimator, it is invariant to any strictly increasing transformation of $A$. This is a desirable property because the scale of a continuous exposure is often arbitrary from a statistical perspective. For instance, if $A$ is temperature, whether $A$ is measured in degrees Fahrenheit,  Celsius or Kelvin does not change the information available. In particular, if the parameters $\theta_0$ and $\theta_0^*$ correspond to using as exposure $A$ and $H(A)$, respectively, for $H$ some strictly increasing transformation, then $\theta_0$ and $\theta_0^*$ encode exactly the same information about the effect of $A$ on $Y$ after adjusting for $W$. It is therefore natural to expect any sensible estimator to be invariant to the scale on which the exposure is measured. 

Setting $X := H(A)$ for a strictly  increasing  function $H : \s{A} \to \d{R}$, we first note that the function $\theta_0^*:x\mapsto E_0\left[ E_0\left(Y \mid  X= x, W\right)\right] = \theta_0\circ H^{-1}(x)$ is non-decreasing. Next, we define $\mu_0^*(x, w) := E_0\left(Y \mid X = x, W = w\right)$ and $g_0^*(x,w) =\pi_0^*(x\mid w)/f_0^*(x)$, where $\pi^*_0(x\mid w)$ is the evaluation at $x$ of the conditional density function of $X$ given $W=w$ and $f^*_0$ is the marginal density function of $X$ under $P_0$, and we denote by $\mu_n^*$ and $g_n^*$ estimators of $\mu_0^*$ and  $g_0^*$, respectively. The estimation procedure defined in the previous section but using exposure $X$ instead of $A$ then leads to estimator $\theta_n^*(x) :=\psi_n^* \circ F_n^*(x)$, where $F_n^* := F_n \circ H^{-1}$ is the empirical distribution function based on $X_1,X_2, \dotsc, X_n$, and $\psi_n^*$ is the left derivative of the GCM of $\Psi_n^* := \Gamma_n^* \circ F_n^{*-}$ for 
\begin{align*}
\Gamma_n^*(x)\ :=&\ \ \frac{1}{n}\sum_{i=1}^n\left\{I_{(-\infty, x]}(X_i) \left[\frac{ Y_i - \mu_n^*(X_i,W_i)}{g_n^*(X_i, W_i)}\right] + \int_{-\infty}^{x}\mu_n^*(x, W_i) \, F_n^*(dx)\right\} \\
 =&\ \ \frac{1}{n}\sum_{i=1}^n\left\{ I_{(-\infty, H^{-1}(x)]}(A_i)  \left[\frac{ Y_i - \mu_n^*(H(A_i),W_i)}{g_n^*(H(A_i), W_i)}\right] + \int_{-\infty}^{H^{-1}(x)} \mu_n^*(H(a), W_i) \, F_n(da)\right\}\ .
 \end{align*}
If it is the case that $\mu_n^*(H(a), w) = \mu_n(a, w)$ and $g_n^*(H(a), w) = g_n(a, w)$, implying that nuisance estimators $\mu_n$ and $g_n$ are themselves invariant to strictly increasing transformation of $A$, then we have that $\Gamma_n^* = \Gamma_n\circ H^{-1}$, and so, $\Psi_n^* = \Gamma_n \circ H^{-1} \circ H \circ F_n = \Psi_n$. It follows then that $\theta_n^* = \theta_n \circ H^{-1}$. In other words, the proposed estimator $\theta_n$ of $\theta_0$ is invariant to any strictly increasing transformation of the exposure variable.

We note that it is easy to ensure that $\mu_n^*(H(a), w) = \mu_n(a, w)$ and $g_n^*(H(a), w) = g_n(a, w)$. Set $U := F_n(A)$, which is also equal to $F_n^*(X)$, and let $\bar\mu_n(u,w)$ be an estimator of the conditional mean of $Y$ given $(U,W)=(u,w)$. Then, taking $\mu_n(a,w) := \bar\mu_n(F_n(a), w)$, we have that $\mu_n^*(x,w) := \bar\mu_n(F_n^*(x), w)$ satisfies the desired property. Similarly, letting $\bar{g}_n(u, w)$ be an estimator of the conditional density of $U = u$ given $W = w$, and setting $g_n(a, w) := \bar{g}_n(F_n(a), w)$, we may take  $g_n^*(x, w) := \bar{g}_n(F_n^*(x), w)$.

\subsection{Consistency}

We now provide sufficient conditions under which consistency of $\theta_n$ is guaranteed. Our conditions require controlling the uniform entropy of certain classes of functions. For a uniformly bounded class of functions $\s{F}$, a finite discrete probability measure $Q$, and any $\varepsilon > 0$, the $\varepsilon$-covering number $N(\varepsilon, \s{F}, L_2(Q))$ of $\s{F}$ relative to the $L_2(Q)$ metric is the smallest number of $L_2(Q)$-balls of radius less than or equal to $\varepsilon$ needed to cover $\s{F}$. The uniform $\varepsilon$-entropy of $\s{F}$ is then defined as $\log \sup_Q N(\varepsilon, \s{F}, L_2(Q))$, where the supremum is taken over all finite discrete probability measures. For a thorough treatment of covering numbers and their role in empirical process theory, we refer readers to \cite{van1996weak}.

Below, we state three sufficient conditions we will refer to in the following theorem.
\begin{description}[style=multiline,leftmargin=1cm]
\item[(A1)] There exist constants $C, \delta, K_{0},K_{1},K_{2}\in(0,+\infty)$ and $V\in[0, 2)$ such that, almost surely as $n \to \infty$, $\mu_n$ and $g_n$ are contained in classes of functions  $\s{F}_{0}$ and $\s{F}_{1}$, respectively, satisfying: 
\begin{enumerate}[(a)]
\item $|\mu| \leq K_{0}$ for all $\mu \in \s{F}_{0}$, and $K_{1} \leq g \leq K_{2}$ for all $g \in \s{F}_{1}$;
\item $\log \sup_Q N(\varepsilon, \s{F}_{0}, L_2(Q)) \leq C\varepsilon^{-V/2}$ and $\log \sup_Q N(\varepsilon, \s{F}_{1}, L_2(Q)) \leq C\varepsilon^{-V}$ for all $\varepsilon \leq \delta$.
 \end{enumerate}
\item[(A2)] There exist $\mu_{\infty} \in \s{F}_{0}$ and $g_{\infty} \in \s{F}_{1}$ such that $P_0(\mu_n - \mu_{\infty})^2 \inprob 0$ and $P_0(g_n - g_{\infty})^2  \inprob 0$.
\item[(A3)] There exist subsets $\s{S}_1, \s{S}_2$ and $\s{S}_3$ of $\s{A} \times\s{W}$ such that $P_0(\s{S}_1 \cup \s{S}_2 \cup \s{S}_3) = 1$ and:
\begin{enumerate}[(a)]
\item $\mu_{\infty}(a,w) = \mu_0(a,w)$ for all $(a,w) \in \s{S}_1$;
\item $g_{\infty}(a,w) = g_0(a,w)$ for all $(a,w) \in \s{S}_2$;
\item $\mu_{\infty}(a,w) = \mu_0(a,w)$ and $g_{\infty}(a,w) = g_0(a,w)$ for all $(a,w) \in \s{S}_3$.
\end{enumerate}
\end{description}
Under these three conditions, we have the following result.
\begin{thm}[Consistency]
If conditions (A1)--(A3) hold, then $\theta_n(a) \inprob \theta_0(a)$ for any value $a \in \s{A}$ such that $F_0(a) \in (0,1)$, $\theta_0$ is continuous at $a$, and $F_0$ is strictly increasing in a neighborhood of $a$. If $\theta_0$ is uniformly continuous and $F_0$ is strictly increasing on $\s{A}$, then $\sup_{a \in \s{A}_0} | \theta_n(a) - \theta_0(a) | \inprob 0$ for any bounded strict subinterval $\s{A}_0\subsetneq\s{A}$.
\label{thm:consistency}\end{thm}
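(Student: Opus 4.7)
The plan is to exploit the generalized Grenander-type structure of $\theta_n$ emphasized in the paper. Since $\theta_n(a) = \psi_n(F_n(a))$, where $\psi_n$ is the left derivative of the GCM of $\Psi_n = \Gamma_n \circ F_n^{-}$, a standard Marshall-type continuity argument — formalized for this exact setting in \cite{westling2018monotone} — reduces consistency of $\theta_n(a)$ to two ingredients: (i) uniform convergence in probability of $\Gamma_n$ to $\Gamma_0$ on a compact subinterval around $a$, and (ii) uniform consistency of $F_n$ for $F_0$. The second ingredient is immediate from Glivenko--Cantelli, so the heart of the proof is (i).

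For (i), I would exploit the one-step representation $\Gamma_n(a) = \Gamma_{\mu_n, F_n, Q_n}(a) + P_n\phi^*_{\mu_n, g_n, F_n, Q_n, a}$ and decompose
\begin{equation*}
\Gamma_n(a) - \Gamma_0(a) = (P_n - P_0)\phi^*_{\mu_n, g_n, F_n, Q_n, a} + R_n(a),
\end{equation*}
where $R_n(a)$ collects the deterministic drift. Condition (A1) provides uniform entropy control on the class $\{\phi^*_{\mu,g,F,Q,a} : \mu \in \s{F}_0, g \in \s{F}_1, a \in \s{A}\}$: because the indicator class $\{I_{(-\infty,a]} : a \in \s{A}\}$ is VC and the nuisance classes have polynomial uniform entropy with $V \in [0,2)$, products and sums of these classes are Donsker and thus certainly Glivenko--Cantelli, so the empirical process term is $o_P(1)$ uniformly in $a$. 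A standard doubly-robust expansion reduces the leading piece of $R_n(a)$ to a $P_0$-integral of a product involving $(\mu_n - \mu_0)$ and $(g_n - g_0)/g_n$; this converges to zero in probability, uniformly in $a$, by Cauchy--Schwarz together with (A2) and — crucially — the structural assumption (A3), which forces the limiting product $(\mu_\infty - \mu_0)(g_\infty - g_0)$ to vanish $P_0$-almost surely on each of the pieces $\s{S}_1, \s{S}_2, \s{S}_3$ whose union has full measure.

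Having established $\Gamma_n \to \Gamma_0$ uniformly in probability, I would convert this to uniform convergence of $\Psi_n = \Gamma_n \circ F_n^-$ to $\Psi_0 = \Gamma_0 \circ F_0^-$ on a compact subinterval of $(0,1)$, using Glivenko--Cantelli for $F_n^-$ at quantiles where $F_0$ is strictly increasing. Continuity of the left-derivative-of-GCM functional at convex limits with continuous derivative then yields $\psi_n(F_0(a)) \inprob \psi_0(F_0(a)) = \theta_0(a)$; the hypothesis that $\theta_0$ is continuous at $a$ and $F_0$ is strictly increasing near $a$ ensures that $\psi_0 = \theta_0 \circ F_0^{-1}$ is continuous at $F_0(a)$, as required. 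A final application of Glivenko--Cantelli for $F_n$ together with continuity of $\psi_0$ at $F_0(a)$ replaces $\psi_n(F_0(a))$ by $\psi_n(F_n(a)) = \theta_n(a)$, giving the pointwise claim. Under the global hypotheses of the uniform statement, $\psi_0$ is uniformly continuous on a compact subinterval of $(0,1)$, and the Marshall-type inequality upgrades the pointwise conclusion to uniform convergence on $\s{A}_0$.

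The main obstacle is the empirical process control in step (i), uniformly over $a$ and over the data-driven nuisance estimates $(\mu_n, g_n)$; (A1) is exactly the condition engineered to make this work, and it is precisely the need for this condition that the later cross-fitted variant of the estimator is designed to circumvent. Everything else — the doubly-robust cancellation in $R_n$, the Glivenko--Cantelli passage through $F_n$ and $F_n^-$, and the Marshall-type continuity of the GCM — is standard once (i) is in hand.
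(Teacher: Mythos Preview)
Your overall architecture is right and matches the paper's: invoke the general consistency theorem of \cite{westling2018monotone}, reduce to uniform convergence of $\Gamma_n$ to $\Gamma_0$, and handle $F_n$ by Glivenko--Cantelli. The doubly-robust drift term and the Marshall-type passage through the GCM are also treated correctly.

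There is, however, a genuine gap in how you control the empirical process piece. The function $\phi^*_{\mu_n, g_n, F_n, Q_n, a}$ you write down depends on the data not only through $(\mu_n, g_n)$ but also through $F_n$ and $Q_n$, via the term $\int_{-\infty}^{a}\mu_n(u,w)\,F_n(du)$. Equivalently, $\Gamma_n(a)$ contains the double average $n^{-2}\sum_{i,j} I_{(-\infty,a]}(A_i)\mu_n(A_i, W_j)$, which is a $V$-statistic, not an empirical mean. Consequently $(P_n - P_0)\phi^*_{\mu_n, g_n, F_n, Q_n, a}$ is not a standard empirical process indexed by a data-independent function class, and the Donsker/Glivenko--Cantelli argument you sketch does not apply to it as stated. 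The paper deals with this by isolating an additional remainder
\[
R_{n,a_0,3}\ =\ \iint_{-\infty}^{a_0}\mu_n(a,w)\,(F_n-F_0)(da)\,(Q_n-Q_0)(dw),
\]
which after centering is a $P_0$-degenerate $U$-process indexed by $a_0\in\s{A}$ and $\mu\in\s{F}_0$, and controls it uniformly via a Nolan--Pollard-type maximal inequality for $U$-processes. This is exactly why condition (A1)(b) imposes the asymmetric entropy bound $\log\sup_Q N(\varepsilon,\s{F}_0,L_2(Q))\lesssim\varepsilon^{-V/2}$ on $\s{F}_0$: the relevant $U$-process entropy integral is $\int_0^1[1+\log N]\,d\varepsilon$ rather than the ordinary $\int_0^1[\log N]^{1/2}\,d\varepsilon$, so one needs $V/2<1$. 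Your proposal neither identifies this term nor explains the asymmetry in (A1), so the argument as written is incomplete; once you separate out $R_{n,a_0,3}$ and bound it with the $U$-process lemma, the remaining empirical process term involves only $\phi'_{\mu,g,a_0}$ built from $F_0,Q_0$ and can indeed be handled by the Donsker argument you describe.
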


We note that in the pointwise statement of Theorem~\ref{thm:consistency}, $F_0(a)$ is required to be in the interior of $[0,1]$, and similarly, the uniform statement of Theorem~\ref{thm:consistency} only covers strict subintervals of $\s{A}$. This is due to the well-known boundary issues with Grenander-type estimators. Various remedies have been proposed in particular settings, and it would be interesting to consider these in future work (see, e.g., \citealp{woodroofe1993penalized, balabdaoui2011grenander, kulikov2006}).

Condition (A1) requires that $\mu_n$ and $g_n$ eventually be contained in uniformly bounded function classes that are small enough for certain empirical process terms to be controlled. This condition is easily satisfied if, for instance, $\s{F}_{0}$ and $\s{F}_{1}$ are parametric classes. It is also satisfied for many infinite-dimensional function classes. Uniform entropy bounds for many such classes may be found in Chapter 2.6 of \cite{van1996weak}. We note that there is an asymmetry between the entropy requirements for $\s{F}_{0}$ and $\s{F}_{1}$ in part (b) of (A1). This is due to the term $\iint_{-\infty}^{a} \mu_n(u, w) F_n(du) \, Q_n(dw)$ appearing in $\Gamma_n(a)$. To control this term, we use an upper bound of the form $\int_0^1 \log \sup_Q N(\varepsilon, \s{F}_{0}, L_2(Q)) d\varepsilon$ from the theory of empirical $U$-processes \citep{nolan1987uprocess} -- this contrasts with the uniform entropy integral $\int_0^1 [\log \sup_Q N(\varepsilon, \s{F}, L_2(Q))]^{1/2} d\varepsilon$ that bounds ordinary empirical processes indexed by a uniformly bounded class $\s{F}$. In Section~\ref{sec:cv}, we consider the use of cross-fitting to avoid the entropy conditions in (A1).

Condition (A2) requires that $\mu_n$ and $g_n$ tend to limit functions $\mu_{\infty}$ and $g_{\infty}$, and condition (A3) requires that either $\mu_{\infty}(a,w) = \mu_0(a,w)$ or $g_{\infty}(a,w) = g_0(a,w)$ for $(F_0 \times Q_0)$-almost every $(a,w)$. If either (i) $\s{S}_1$ and $\s{S}_3$ are null sets or (ii) $\s{S}_2$ and $\s{S}_3$ are null sets, then condition (A3) is known simply as \emph{double-robustness} of the estimator $\theta_n$ relative to the nuisance functions $\mu_0$ and $g_0$: $\theta_n$ is consistent as long as $\mu_\infty=\mu_0$ or $g_\infty=g_0$. Doubly-robust estimators are at this point a mainstay of causal inference and have been studied for over two decades (see, e.g., \citealp{robins1994estimation, rotnitzky1998semiparametric, scharfstein1999adjusting, van2003unified, neugebauer2005prefer, bang2005doubly}). However, (A3) is more general than classical double-robustness, as it allows neither $\mu_n$ nor $g_n$ to tend to their true counterparts over the whole domain, as long as at least one of $\mu_n$ or $g_n$ tends to the truth for almost every point in the domain.

\subsection{Convergence in distribution}

We now study the convergence in distribution of $n^{1/3}[\theta_n(a) - \theta_0(a)]$ for fixed $a$. We first define for any square-integrable functions $h_1, h_2 : \s{A} \times \s{W} \to \d{R}$, $\varepsilon >0$ and $\s{S} \subseteq \s{A} \times \s{W}$ the pseudo-distance
\begin{equation}d(h_1, h_2; a, \varepsilon, \s{S}) := \left[\sup_{|u - a| \leq \varepsilon} E_{0} \left\{ I_{\s{S}}(u ,W) \left[ h_1(u, W) - h_2(u, W)\right]^2 \right\}\right]^{1/2}.\label{eq:discrep}\end{equation}
We also denote by $\sigma_0^2(a, w)$ the conditional variance $E_{0} \left\{ \left[Y - \mu_0(A, W)\right]^2 \middle| A = a, W =w\right\}$ of $Y$ given $A = a$ and $W = w$ under $P_0$. Below, we will refer to these two additional conditions:
\begin{description}[style=multiline,leftmargin=1cm]
\item[(A4)] There exists $\varepsilon_0 > 0$ such that: 
\begin{enumerate}[(a)]
\item  $\max\{d(\mu_n, \mu_{\infty};a, \varepsilon_0, \s{S}_1), d(g_n, g_{\infty};a, \varepsilon_0, \s{S}_2)\} = \fasterthan(n^{-1/3})$; 
\item $\max\{d(\mu_n, \mu_{\infty};a, \varepsilon_0, \s{S}_2), d(g_n, g_{\infty};a, \varepsilon_0, \s{S}_1)\} = \fasterthan(1)$;
\item $ d(\mu_n, \mu_{\infty};a, \varepsilon_0, \s{S}_3)d(g_n, g_{\infty};a, \varepsilon_0, \s{S}_3)  = \fasterthan(n^{-1/3})$.
\end{enumerate}
\item[(A5)] $F_0, \mu_0, \mu_{\infty}, g_0, g_{\infty}$ and $\sigma_0^2$ are continuously differentiable in a neighborhood of $a$ uniformly over $w \in \s{W}$.
\end{description} Under conditions introduced so far, we have the following distributional result.
\begin{thm}[Convergence in distribution]
 If conditions (A1)--(A5) hold, then
 \[ n^{1/3}\left[ \theta_{n}(a) - \theta_0(a)\right] \indist \left[\frac{4 \theta_0'(a) \kappa_0(a)}{f_0(a)}\right]^{1/3} \d{W}\ ,\] for any $a \in \s{A}$ such that $F_0(a)\in(0,1)$, where $\d{W}$ follows the standard Chernoff distribution and
 \[\kappa_0(a) := E_{0} \left\{ E_{0} \left[ \left\{ \left[\frac{Y - \mu_{\infty}(a,W)}{g_{\infty}(a,W)}\right]  + \theta_{\infty}(a) - \theta_0(a)\right\}^2 \middle| A = a, W\right] g_0(a, W)\right\}\] with $\theta_\infty(a)$ denoting $\int \mu_\infty(a,w)Q_0(dw)$. 
\label{thm:dose_response}
\end{thm}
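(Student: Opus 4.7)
The plan is to cast the claim within the generalized Grenander-type framework of \cite{westling2018monotone}. Since $\theta_n = \psi_n \circ F_n$ with $\psi_n$ the left derivative of the GCM of $\Psi_n = \Gamma_n \circ F_n^-$, the general theory there reduces the distributional statement to three ingredients: pointwise consistency of $\theta_n$ at $a$, already supplied by Theorem~\ref{thm:consistency}; a uniform linearization of $\Gamma_n - \Gamma_0$ on shrinking neighborhoods $[a - Kn^{-1/3}, a + Kn^{-1/3}]$ with remainder of uniform magnitude $\fasterthan(n^{-2/3})$; and application of the switch relation $\psi_n(t) \leq c \iff \argmax_s\{cs - \Psi_n(s)\} \geq t$ followed by the argmax continuous mapping theorem. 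The passage from $\theta_n(a) = \psi_n(F_n(a))$ to $\psi_n(F_0(a))$ contributes only $\fasterthan(n^{-1/3})$, since $|F_n(a) - F_0(a)| = \bounded(n^{-1/2})$ and $\psi_0$ is locally Lipschitz near $F_0(a)$ by (A5).

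The key analytic step is the linearization $\Gamma_n(a) - \Gamma_0(a) = (\d{P}_n - P_0)\phi^*_{\mu_0, g_0, F_0, Q_0, a} + R_{1n}(a) + R_{2n}(a)$, where $R_{1n}(a) := (\d{P}_n - P_0)[\phi^*_{\mu_n, g_n, F_n, Q_n, a} - \phi^*_{\mu_\infty, g_\infty, F_0, Q_0, a}]$ is an empirical-process drift and $R_{2n}(a)$ collects the second-order bias. Under (A1)--(A2), the classes containing $\mu_n$ and $g_n$ are uniformly bounded with sufficiently small uniform entropy to be $P_0$-Donsker, and $L^2$-consistency at the limits $\mu_\infty, g_\infty$ gives an equicontinuity bound yielding $R_{1n}(a + un^{-1/3}) - R_{1n}(a) = \fasterthan(n^{-2/3})$ uniformly over $|u| \leq K$; the $U$-process contribution from $Q_n$ is handled via the stronger entropy bound on $\s{F}_{0}$ flagged in the discussion following Theorem~\ref{thm:consistency}. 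For $R_{2n}$, I would partition the integrand along $\s{S}_1, \s{S}_2, \s{S}_3$ from (A3): on $\s{S}_3$ the usual doubly-robust cross term is controlled by (A4)(c); on $\s{S}_1$ and $\s{S}_2$, only one nuisance leg of $\phi^*$ fails to cancel, and the resulting first-order bounds are supplied by (A4)(a)--(b). Because each localized increment integrates over an $F_0$-set of measure $\boundeddet(n^{-1/3})$, the pointwise rates from (A4) pick up an extra $n^{-1/3}$ factor, yielding the required $\fasterthan(n^{-2/3})$ local bound.

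Feeding the linearization into the switch-relation argmax, the localized process in $u$ around $a$ takes the form $\alpha u - \beta u^2 - \gamma Z_0(u) + \fasterthan(1)$ uniformly on compact $u$-sets, with $\alpha = xf_0(a)$ arising from the shift $c = \theta_0(a) + xn^{-1/3}$, $\beta = \tfrac{1}{2}\theta_0'(a) f_0(a)$ from the Taylor expansion of $a' \mapsto \theta_0(a) F_0(a') - \Gamma_0(a')$ at its interior maximum $a$ under (A5), and $\gamma = [\kappa_0(a) f_0(a)]^{1/2}$ from a Kim--Pollard-style local CLT applied to the indicator $I_{(a, a + un^{-1/3}]}$ weighted by the efficient-influence-function integrand. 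The coefficient $\kappa_0(a)$ emerges because the conditional variance given $A = a$ of the integrand $(Y - \mu_\infty(a,W))/g_\infty(a,W) + \theta_\infty(a) - \theta_0(a)$, expressed as an expectation over the marginal law of $W$, equals $E_{Q_0}\{E_0[(\cdot)^2 \mid A = a, W]\, g_0(a, W)\}$; the $g_0(a, W)$ weight reflects the change of measure from the conditional density $w \mapsto g_0(a, w) q_0(w)$ of $W$ given $A = a$ to the marginal $Q_0$. Applying the argmax continuous mapping theorem together with the Brownian scaling identity $\argmax_u\{\gamma Z_0(u) - \beta u^2\} \stackrel{d}{=} (\gamma/\beta)^{2/3} \d{W}$ and unwinding $\alpha, \beta, \gamma$ then gives $n^{1/3}[\theta_n(a) - \theta_0(a)] \indist \theta_0'(a)(\gamma/\beta)^{2/3} \d{W} = [4 \theta_0'(a) \kappa_0(a)/f_0(a)]^{1/3} \d{W}$, as claimed.

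The hardest step is the uniform-in-$u$ control of $R_{2n}$ under the three-region decomposition of (A3)--(A4): without a single doubly-robust cancellation one must verify that each region's contribution picks up the local measure factor $n^{-1/3}$ while the product or linear rates from (A4) deliver the remaining $n^{-1/3}$. The pointwise neighborhood $\varepsilon_0$ in (A4) eventually contains any shrinking window since $Kn^{-1/3} < \varepsilon_0$ for large $n$, so the difficulty is one of bookkeeping rather than rate; combined with the $U$-process piece inside $R_{1n}$ and the local-CLT covariance calculation, this is where the main analytical work concentrates.
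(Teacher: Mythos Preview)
Your approach is the same route the paper takes: cast the estimator in the generalized Grenander framework of \cite{westling2018monotone}, linearize $\Gamma_n-\Gamma_0$ on shrinking $n^{-1/3}$-windows, and invoke the switch relation plus the argmax continuous mapping theorem. The paper packages this as a verification of conditions (WC.B1)--(WC.B5) and (WC.A4)--(WC.A5) of that reference, but the underlying mechanics are exactly what you describe. Two points in your write-up need correction, however.

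First, your linearization centers the leading empirical-process term at $\phi^*_{\mu_0, g_0, F_0, Q_0, a}$ while your drift $R_{1n}$ subtracts $\phi^*_{\mu_\infty, g_\infty, F_0, Q_0, a}$. These agree only when $\mu_\infty=\mu_0$ and $g_\infty=g_0$, which is precisely what the theorem does \emph{not} assume. The leading term must be centered at the limits $\mu_\infty,g_\infty$ (this is the content of the paper's Lemma~6) in order for the local covariance calculation to produce the stated $\kappa_0(a)$. You do compute $\kappa_0$ with $\mu_\infty,g_\infty$ two paragraphs later, so the earlier $\mu_0,g_0$ is presumably a slip, but as written the decomposition is internally inconsistent and would deliver the wrong scale when one nuisance is misspecified.

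Second, the argument ``$\psi_n(F_n(a))-\psi_n(F_0(a))=\fasterthan(n^{-1/3})$ because $\psi_0$ is locally Lipschitz'' does not work: $\psi_n$ is a step function, and Lipschitz continuity of the \emph{limit} $\psi_0$ says nothing about increments of $\psi_n$ over an $\bounded(n^{-1/2})$ gap. The paper (following WC) avoids this step entirely by running the switch relation directly on the $a$-scale: $\theta_n(a)>c$ becomes an argmax of $v\mapsto cF_n(v)-\Gamma_n(v)$ lying below $F_n^-(F_n(a))$, and one then shows $n^{1/3}[F_n^-(F_n(a))-a]\inprob 0$ rather than comparing $\psi_n$ at two nearby arguments.
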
 

We note that the limit distribution in Theorem~\ref{thm:dose_response} is the same as that of the standard isotonic regression estimator up to a scale factor. As noted above, when either (i) $Y$ and $W$ are independent given $A$ or (ii) $A$ is independent of $W$, the functions $\theta_0$ and $r_0$ coincide. As such, we can directly compare the respective limit distributions of $n^{1/3}\left[\theta_n(a)-\theta_0(a)\right]$ and $n^{1/3}\left[r_n(a)-r_0(a)\right]$ under these conditions. When both $\mu_{\infty} = \mu_0$ and $g_\infty = g_0$, $r_n(a)$ is asymptotically more concentrated than $\theta_n(a)$ in scenario (i), and less concentrated  in scenario (ii). This is analogous to findings in linear regression, where including a covariate uncorrelated with the outcome inflates the standard error of the estimator of the coefficient corresponding to the exposure, while including a covariate correlated with the outcome but uncorrelated with the exposure deflates its standard error.


 Condition (A4) requires that, on the set $\s{S}_1$ where $\mu_n$ is consistent but $g_n$ is not, $\mu_n$ converges faster than $n^{-1/3}$ uniformly in a neighborhood of $a$, and similarly for $g_n$ on the set $\s{S}_2$. On the set $\s{S}_3$ where both $\mu_n$ and $g_n$ are consistent, only the product of their rates of convergence must be faster than $n^{-1/3}$. Hence, a non-degenerate limit theory is available as long as at least one of the nuisance estimators is consistent at a rate faster than $n^{-1/3}$, even if the other nuisance estimator is inconsistent. This suggests the possibility of performing doubly-robust inference for $\theta_0(a)$, that is, of constructing confidence intervals and tests based on $\theta_n(a)$ with valid calibration even when one of $\mu_0$ and $g_0$ is inconsistently estimated. This is explored in Section \ref{inference}. Finally, as in Theorem~\ref{thm:consistency}, we allow that neither $\mu_n$ nor $g_n$ be consistent everywhere, as long as for $(F_0 \times Q_0)$-almost every $(a,w)$ at least one of $\mu_n$ or $g_n$ is consistent.

We remark that if it is known that $\mu_n(a,\cdot)$ is consistent for $\mu_0(a,\cdot)$ in an $L_2(Q_0)$ sense at rate faster than $n^{-1/3}$, the isotonic regression of the plug-in estimator $\theta_{\mu_n}(a):=\int \mu_n(a,w)Q_n(dw)$ -- which can be equivalently obtained by setting $g_n(a,\cdot)=+\infty$ in the construction of $\theta_n(a)$ -- achieves a faster rate of convergence to $\theta_0(a)$ than does $\theta_n(a)$. This might motivate an analyst to use $\theta_{\mu_n}(a)$ rather than $\theta_n(a)$ in such a scenario. However, the consistency of $\theta_{\mu_n}(a)$ hinges entirely on the fact that $\mu_\infty=\mu_0$, and in particular, $\theta_{\mu_n}(a)$ will be inconsistent if $\mu_\infty\neq\mu_0$, even if $g_\infty=g_0$. Additionally, the estimator $\theta_{\mu_n}(a)$ may not generally admit a tractable limit theory upon which to base the construction of valid confidence intervals, particularly when machine learning methods are used to build $\mu_n$.

\subsection{Grenander-type estimation without domain transformation}

As indicated earlier, the isotonic regression estimator based on estimated pseudo-outcomes coincides with a generalized Grenander-type estimator for which the marginal exposure empirical distribution function is used as domain transformation. An alternative estimator could be constructed via Grenander-type estimation without the use of any domain transformation. Specifically, we let $a_-, a_+ \in \d{R}$ be fixed, and we define $\Theta_0(a) = \int_{a_-}^{a} \theta_0(u) du$. Under regularity conditions, for $a \leq a_+$, the one-step estimator of $\Theta_0(a)$ given by \[\Theta_n(a):=\frac{1}{n}\sum_{i=1}^n\left\{I_{(a_-, a]}(A_i)\left[\frac{ Y_i - \mu_n(A_i,W_i)}{\pi_n(A_i, W_i)}\right] + \int_{a_-}^a \mu_n(u, W_i)du\right\}\] is asymptotically efficient, where $\pi_n$ is an estimator of $\pi_0$, the conditional density of $A$ given $W$ under $P_0$. The left derivative of the GCM of $\Theta_n$ over $[a_-, a_+]$ defines an alternative estimator $\bar\theta_n(a)$.

It is natural to ask how $\bar\theta_n$ compares to the estimator $\theta_n$ we have studied thus far. First, we note that, unlike $\theta_n$, $\bar\theta_n$ neither generalizes the classical isotonic regression estimator nor is invariant to strictly increasing transformations of $A$. Additionally, utilizing the transformation $F_0$ fixes $[0,1]$ as the interval over which the GCM should be performed. If $\s{A}$ is known to be a bounded set, $[a_-, a_+]$ can be taken as the endpoints of $\s{A}$, but otherwise  the domain $[a_-, a_+]$ must be chosen in defining $\bar\theta_n$. Turning to an asymptotic analysis, using the results of \cite{westling2018monotone}, it is possible to establish conditions akin to (A1)--(A5) under which $n^{1/3}\left[\bar\theta_n(a) - \theta_0(a)\right] \indist \left[4 \theta_0'(a) \bar\kappa_0(a)\right]^{1/3} \d{W}$ with scale parameter 
 \[\bar\kappa_0(a):= E_{0} \left[ E_{0} \left\{  \left[\frac{Y - \mu_{\infty}(A,W)}{\pi_{\infty}(A \mid W)}\right]^2 \middle| A = a, W\right\} \pi_0(a \mid W)\right] ,\] 
 where $\pi_\infty$ is the limit of $\pi_n$ in probability. We denote by $[4\tau_0(a)]^{1/3}$ and $[4\bar\tau_0(a)]^{1/3}$ the limit scaling factors of $n^{1/3}\left[\theta_n(a) - \theta_0(a)\right]$ and  $n^{1/3}\left[\bar\theta_n(a) - \theta_0(a)\right]$, respectively. If $g_{\infty}=\pi_{\infty}/f_0$ and $\mu_{\infty} = \mu_0$, then $\tau_0(a) = \bar\tau_0(a)$, and $n^{1/3}\left[\theta_n(a) - \theta_0(a)\right]$  and $n^{1/3}\left[\bar\theta_n(a) - \theta_0(a)\right]$ have the same limit distribution. If instead $g_{\infty} = \pi_{\infty} / f_0 = g_0$ but $\mu_{\infty} \neq \mu_0$, this is no longer the case. In fact, we can show that
\begin{align*}
\tau_0(a)\ &=\ \theta_0'(a)E_{0} \left[ \frac{E_0\{[Y-\mu_\infty(a,W)]^2\mid A=a,W\}}{\pi_0(a \mid W)}\right] - \theta_0'(a) \frac{\{\theta_\infty(a)- \theta_0(a)\}^2}{f_0(a)}\\
&\leq\  \theta_0'(a) E_{0} \left[ \frac{E_0\{[Y-\mu_\infty(a,W)]^2\mid A=a,W\}}{\pi_0(a \mid W)}\right]\ =\ \bar{\tau}(a)\ .
\end{align*}
Hence, when the outcome regression estimator $\mu_n$ is inconsistent, gains in efficiency are achieved by utilizing the transformation, and the relative gain in efficiency is directly related to the amount of asymptotic bias in the estimation of $\mu_{0}$.

\subsection{Discrete domains}

In some circumstances, the exposure $A$ is discrete rather than continuous. Our estimator works equally well in these cases, since, as we highlight below, it turns out to then be asymptotically equivalent to the well-studied augmented IPW (AIPW) estimator. As a result, the large-sample properties of our estimator can be derived from the large-sample properties of the AIPW estimator, and asymptotically valid inference can be obtained using standard influence function-based techniques.

Suppose that $\s{A} = \{a_1 <  a_2 < \cdots < a_m\}$ and $f_{0,j} := P_0(A = a_j) > 0$ for all $j \in \{1, 2, \dotsc, m\}$ and $\sum_{j=1}^m f_{0,j} = 1$. Our estimation procedure remains the same with one exception: in defining $g_0 := \pi_0 / f_0$, we now take $\pi_0$ to be the conditional probability $\pi_0(a_j \mid w) := P_0(A = a_j \mid W = w)$ rather than the corresponding conditional density, and we take $f_0$ as the marginal probability $f_0(a_j) := P_0( A= a_j) = f_{0,j}$ rather than the corresponding marginal density. We then set $g_{n} := \pi_n / f_n$ as the estimator of $g_0$, where $\pi_n$ is any estimator of $\pi_0$ and $f_{n}(a_j) := n_j / n$ for $n_j := \sum_{i=1}^n I(A_i = a_j)$.  In all other respects, our estimation procedure is identical to that defined previously. With these definitions, we denote by $\xi_{n,i}$ the estimated pseudo-outcome for observation $i$. Our estimator is then the isotonic regression of $\xi_{n,1},\xi_{n,2}, \dotsc, \xi_{n,n}$ on $A_1,A_2,  \dotsc, A_n$. However, since for each $i$ there is a unique $j$ such that $A_i = a_j$, this is equivalent to performing isotonic regression of $\theta_n^{\dagger}(a_1),\theta_n^\dagger(a_2), \dotsc, \theta_n^{\dagger}(a_m)$ on $a_1, a_2, \dotsc, a_m$, where $\theta_n^{\dagger}(a_j) := n_j^{-1}\sum_{i=1}^n I_{\{a_j\}}(A_i) \xi_{n,i}$. It is straightforward to see that \[\theta_n^{\dagger}(a_j) = \frac{1}{n} \sum_{i=1}^n \left\{I_{\{a_j\}}(A_i) \left[\frac{Y_i - \mu_n(a_j, W_i)}{\pi_n(a_j \mid W_i)} \right]+ \mu_n(a_j, W_i) \right\}\ ,\] which is exactly the AIPW estimator of $\theta_0(a_j)$. Therefore, in this case, our estimator reduces to the isotonic regression of the classical AIPW estimator constructed separately for each element of the exposure domain. 

The large-sample properties of $\theta_n^{\dagger}$, including doubly-robust consistency and convergence in distribution at the regular parametric rate $n^{-1/2}$, are well-established \citep{robins1994estimation}. Therefore, many properties of $\theta_n$ in this case can be determined using the results of \cite{westling2018projection}, which studied the behavior of the isotonic correction of an initial estimator. In particular, $\max_{a \in \s{A}}| \theta_n(a) - \theta_0(a)| \leq \max_{a \in \s{A}}| \theta_n^{\dagger}(a) - \theta_0(a)|$ as long as $\theta_0$ is non-decreasing on $\s{A}$. Uniform consistency of $\theta_n^{\dagger}$ over $\s{A}$ thus implies uniform consistency of $\theta_n$. Furthermore, if $\theta_0$ is strictly increasing on $\s{A}$ and $\{n^{1/2}[ \theta_n^{\dagger}(a) - \theta_0(a)] : a \in \s{A} \}$ converges in distribution, then $\max_{a \in \s{A}}| \theta_n(a) - \theta_n^{\dagger}(a)| = \fasterthan\left(n^{-1/2}\right)$, so that large-sample standard errors for $\theta_n^\dagger$ are also valid for $\theta_n$. If $\theta_0$ is not strictly increasing on $\s{A}$ but instead has flat regions, then $\theta_n$ is more efficient than $\theta_0$ on these regions, and confidence intervals centered around $\theta_n$ but based upon the limit theory for $\theta_n^\dagger$ will be conservative.

\subsection{Large-sample results for causal effects}

In many applications, in addition to the causal dose response curve $a \mapsto m_0(a)$ itself, causal effects of the form $(a_1, a_2) \mapsto m_0(a_1) - m_0(a_2)$ are of scientific interest as well. Under the identification conditions discussed in Section~\ref{sec:param} applied to each of $a_1$ and $a_2$, such causal effects are identified with the observed-data parameter $\theta_0(a_1) - \theta_0(a_2)$. A natural estimator for such a causal effect in our setting is $\theta_n(a_1) - \theta_n(a_2)$. If the conditions of Theorem~\ref{thm:consistency} hold for both $a_1$ and $a_2$, then the continuous mapping theorem implies that $\theta_n(a_1) - \theta_n(a_2) \inprob \theta_0(a_1) - \theta_0(a_2)$. However, since Theorem~\ref{thm:dose_response} only provides marginal distributional results, and thus does not describe the joint convergence of $Z_n(a_1, a_2) := (n^{1/3}[\theta_n(a_1) - \theta_0(a_1)],n^{1/3}[\theta_n(a_2) - \theta_0(a_2)])$, it cannot be used to determine the large-sample behavior of $n^{1/3}\left\{ \left[ \theta_n(a_1) - \theta_n(a_2) \right] - \left[ \theta_0(a_1) - \theta_0(a_2) \right] \right\}$. The following result demonstrates that such joint convergence can be expected under the aforementioned conditions, and that the bivariate limit distribution of $Z_n(a_1,a_2)$ has independent components.
\begin{thm}[Joint convergence in distribution]
 If conditions (A1)--(A5) hold for $a \in \{a_1, a_2\} \subset \s{A}$ and $F_0(a_1), F_0(a_2) \in (0,1)$, then $Z_n(a_1, a_2)$ converges in distribution to $(\left[4\tau_0(a_1) \right]^{1/3} \d{W}_1,  \left[4\tau_0(a_2)\right]^{1/3} \d{W}_2)$, where $\d{W}_1$ and $\d{W}_2$ are independent standard Chernoff distributions and the scale parameter $\tau_0$ is as defined in Theorem~\ref{thm:dose_response}.
\label{thm:joint_conv}
\end{thm}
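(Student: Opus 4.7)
The plan is to reduce the joint distributional limit to a joint weak convergence statement for two local processes near $t_1:=F_0(a_1)$ and $t_2:=F_0(a_2)$, and then invoke the switching/argmax representation of the slope of the GCM. Because Theorem~\ref{thm:dose_response} already pins down each marginal, the only genuinely new content is the asymptotic independence of the two components of $Z_n(a_1,a_2)$.

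First I would recall that $\theta_n(a)=\psi_n(F_n(a))$, where $\psi_n$ is the left derivative of the GCM $\overline\Psi_n$ of $\Psi_n=\Gamma_n\circ F_n^-$ on $[0,1]$. Since $F_n(a_i)-F_0(a_i)=O_P(n^{-1/2})$ and $\psi_0$ is continuously differentiable at $t_i$ by condition (A5), the substitution of $F_n(a_i)$ for $t_i$ costs only $\fasterthan(n^{-1/3})$ after multiplying by $n^{1/3}$, so it suffices to study the joint behavior of the left derivative of $\overline\Psi_n$ at $t_1$ and $t_2$. For this, I would use the standard switching relation: the event $\{n^{1/3}[\psi_n(t_i)-\psi_0(t_i)]\leq c_i\}$ can be rewritten, up to negligible boundary effects, as an event on the location of the minimizer of $s\mapsto \Psi_n(s)-[\psi_0(t_i)+n^{-1/3}c_i]s$ near $t_i$. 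Rescaling via $s=t_i+n^{-1/3}h$ turns each such event into a condition on the argmin of the localized process
\[
\hat Z_{n,i}(h):=n^{2/3}\bigl\{\Psi_n(t_i+n^{-1/3}h)-\Psi_n(t_i)-\psi_0(t_i)\,n^{-1/3}h\bigr\}\ .
\]

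Second, I would establish joint weak convergence of $(\hat Z_{n,1},\hat Z_{n,2})$ on $C(K)\times C(K)$ for every compact $K\subset\mathbb{R}$, with limit $(Z_1,Z_2)$, where
\[
Z_i(h)=\sqrt{\kappa_0(a_i)/f_0(a_i)^{\,3}}\,B_i(h)+\tfrac{1}{2}\psi_0'(t_i)h^2\ ,
\]
and $B_1,B_2$ are independent two-sided Brownian motions originating from zero. For each $i$, marginal convergence of $\hat Z_{n,i}$ to $Z_i$ is precisely the content that underlies Theorem~\ref{thm:dose_response} via the results of \cite{westling2018monotone}: the deterministic curvature term comes from the Taylor expansion of $\Psi_0$ under (A5) while the stochastic part comes from the empirical-process increments of the efficient influence function $\phi_{\mu_n,g_n,F_n,Q_n,\cdot}^*$ over a shrinking $n^{-1/3}$-window around $F_0^{-1}(t_i)$; conditions (A1)--(A4) ensure that the nuisance-driven remainders are $\fasterthan(1)$ uniformly on $h\in K$. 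The joint convergence then reduces, by the Cramér--Wold device and standard tightness arguments (each coordinate is tight by the marginal statement), to convergence of the finite-dimensional distributions. For $n$ large enough the two $n^{-1/3}$-windows around $t_1$ and $t_2$ are disjoint, so the corresponding increments of the efficient influence function have supports whose $L_2(P_0)$ inner product is $o(1)$; the Lindeberg--Feller CLT for triangular arrays then yields a jointly Gaussian limit whose cross-covariances vanish. This is precisely the independent-increments structure of Brownian motion restricted to disjoint shrinking intervals, and it delivers independence of $B_1$ and $B_2$.

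Third, I would transfer this joint functional limit back to the slopes. The map that sends a continuous function $z$ on $\mathbb{R}$ with $z(h)\to\infty$ as $|h|\to\infty$ to the location of the argmin of $z$ is continuous at strictly convex-plus-Brownian paths almost surely; applied coordinatewise to $(\hat Z_{n,1},\hat Z_{n,2})$ together with the switching relation, the continuous mapping theorem delivers
\[
Z_n(a_1,a_2)\indist \Bigl(\psi_0'(t_1)\,\mathrm{argmin}_h\{Z_1(h)-c_1 h\}_{c_1=\cdot},\ \psi_0'(t_2)\,\mathrm{argmin}_h\{Z_2(h)-c_2 h\}_{c_2=\cdot}\Bigr)\ ,
\]
which by the standard reduction of Chernoff's distribution equals $([4\tau_0(a_1)]^{1/3}\mathbb{W}_1,[4\tau_0(a_2)]^{1/3}\mathbb{W}_2)$ with $\mathbb{W}_1,\mathbb{W}_2$ independent because they are obtained by applying separate measurable functionals to the independent Brownian motions $B_1,B_2$. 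A final conversion from slopes-in-the-$t$-scale to slopes-in-the-$a$-scale via the chain rule (accounting for $f_0(a_i)$) reconciles the constants with $\tau_0(a_i)=\theta_0'(a_i)\kappa_0(a_i)/f_0(a_i)$ as defined in Theorem~\ref{thm:dose_response}.

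The main obstacle will be the asymptotic independence in Step~2: rigorously showing that the nuisance-driven remainder terms in $\Gamma_n$ are simultaneously $\fasterthan(1)$ on the two localized windows, and that the dominant empirical-process term can be linearized so that its joint finite-dimensional distributions at $t_1$ and $t_2$ decorrelate in the limit. Once the joint functional convergence of $(\hat Z_{n,1},\hat Z_{n,2})$ to $(Z_1,Z_2)$ with independent Brownian components is in hand, the remaining continuous-mapping step is essentially the same argument that underlies the marginal Theorem~\ref{thm:dose_response}.
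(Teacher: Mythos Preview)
Your proposal is correct and follows essentially the same architecture as the paper's proof: switching relation, localization to $n^{-1/3}$-windows, joint convergence of the two local processes to independent two-sided Brownian motions plus quadratic drift, and an argmax continuous mapping step. The one substantive difference is in how independence is obtained. You argue via Lindeberg--Feller that the cross-covariances of the influence-function increments over the two disjoint shrinking windows are $o(1)$, yielding a jointly Gaussian limit with zero off-diagonal. The paper makes a sharper and simpler observation: after reducing $W_{n,a}$ to the dominant empirical-process term $\d{G}_n n^{1/6}\phi_{\infty,a,vn^{-1/3}}^\dagger$, whose integrand is $I_{a,vn^{-1/3}}(A)$ times a bounded function, the two processes at $a_1$ and $a_2$ depend on \emph{disjoint subsets of the observations} for all $n>2M/|a_2-a_1|$ and $|u|,|v|\le M$, hence are \emph{exactly} independent for such $n$. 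Joint convergence then follows immediately from the marginal process-level convergence already established for Theorem~\ref{thm:dose_response} together with the elementary fact that marginal weak convergence of independent sequences implies joint weak convergence to an independent product (Example~1.4.6 of van der Vaart and Wellner). This bypasses the need to redo a triangular-array CLT or to argue tightness jointly. A second cosmetic difference is that the paper localizes directly in the $a$-scale rather than first passing to $t=F_0(a)$; this absorbs your separate ``substitution of $F_n(a_i)$ for $t_i$'' step into the right-hand side of the switching inequality via $n^{1/3}[F_n^-(F_n(a_i))-a_i]\to 0$.
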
 
Theorem~\ref{thm:joint_conv} implies that, under the stated conditions, $n^{1/3}\left\{ \left[ \theta_n(a_1) - \theta_n(a_2) \right] - \left[ \theta_0(a_1) - \theta_0(a_2) \right] \right\}$ converges in distribution to $\left[4\tau_0(a_1) \right]^{1/3} \d{W}_1 - \left[4\tau_0(a_2)\right]^{1/3} \d{W}_2$.

\subsection{Use of cross-fitting to avoid empirical process conditions}\label{sec:cv}

Theorems~\ref{thm:consistency} and~\ref{thm:dose_response} reveal that the statistical properties of $\theta_n$ depend on the nuisance estimators $\mu_n$ and $g_n$ in two important ways. First, we require in condition (A1) that $\mu_n$ or $g_n$ fall in small enough classes of functions, as measured by metric entropy, in order to control certain empirical process remainder terms. Second, we require in conditions (A2)--(A3) that at least one of $\mu_n$ or $g_n$ be consistent almost everywhere (for consistency), and in condition (A4)  that the product of their rates of convergence be faster than $n^{-1/3}$ (for convergence in distribution). In observational studies, researchers can rarely specify a priori correct parametric models for $\mu_0$ and $g_0$. This motivates use of data-adaptive estimators of these nuisance functions in order to meet the second requirement. However, such estimators often lead to violations of the first requirement, or it may be onerous to determine that they do not. Thus, because it may be difficult to find nuisance estimators that are both data-adaptive enough to meet required rates of convergence and fall in small enough function classes to make empirical process terms negligible, simultaneously satisfying these two requirements can be challenging in practice.

In the context of asymptotically linear estimators, it has been noted that cross-fitting nuisance estimators can resolve this challenge by eliminating empirical process conditions \citep{zheng2011cvtmle, belloni2018uniform, kennedy2019incremental}. We therefore propose employing cross-fitting of $\mu_n$ and $g_n$ in the estimation of $\Gamma_0$ in order to avoid entropy conditions in Theorems~\ref{thm:consistency} and~\ref{thm:dose_response}. Specifically, we fix $V \in \{2, 3, \dotsc, n/2\}$ and suppose that the indices $\{1, 2, \dotsc, n\}$ are randomly partitioned into $V$ sets $\s{V}_{n,1}, \s{V}_{n,2}, \dotsc, \s{V}_{n,V}$. We assume for convenience that $N := n / V$ is an integer and that $|\s{V}_{n,v}| = N$ for each $v$, but all of our results hold as long as $\max_v n / |\s{V}_{n,v}| = \bounded(1)$. For each $v \in \{1, 2, \dotsc, V\}$, we define $\s{T}_{n,v} := \{ O_i : i \notin \s{V}_{n,v}\}$ as the \emph{training set} for fold $v$, and denote by $\mu_{n,v}$ and $g_{n,v}$ the nuisance estimators constructed using only the observations from $\s{T}_{n,v}$. We then define pointwise the cross-fitted estimator $\Gamma_n^\circ$ of $\Gamma_0$ as 
\begin{equation}\Gamma_{n}^\circ(a)  :=\frac{1}{V} \sum_{v=1}^V \left\{ \frac{1}{N}\sum_{i \in \s{V}_{n,v}} I_{(-\infty, a]}(A_i) \left[\frac{ Y_i - \mu_{n,v}(A_i,W_i)}{g_{n,v}(A_i, W_i)} \right]+ \frac{1}{N^2} \sum_{i, j \in \s{V}_{n,v}} I_{(-\infty, a]}(A_i) \mu_{n,v}(A_i, W_j) \right\}\ . \label{gamma_one_step_cv} \end{equation}
Finally, the cross-fitted estimator $\theta_n^\circ$ of $\theta_0$ is constructed using steps 1--4 outlined in Section~\ref{sec:defn}, with $\Gamma_n$ replaced by $\Gamma_n^\circ$.

As we now demonstrate, utilizing the cross-fitted estimator $\theta_n^\circ$ allows us to avoid the empirical process condition (A1b). We first introduce the following two conditions, which are analogues of conditions (A1) and (A2).
\begin{description}[style=multiline,leftmargin=1cm]
\item[(B1)] There exist constants $C', \delta', K_{0}',K_{1}',K_{2}', K_3'\in(0,+\infty)$ such that, almost surely as $n \to \infty$ and for all $v$, $\mu_{n,v}$ and $g_{n,v}$ are contained in classes of functions $\s{F}_0'$ and $\s{F}_1'$, respectively, satisfying:
\begin{enumerate}[(a)]
\item $|\mu| \leq K_{0}'$ for all $\mu \in \s{F}_{0}'$, and $K_{1}' \leq g \leq K_{2}'$ for all $g \in \s{F}_{1}'$;
\end{enumerate}
and $\sigma_0^2(a,w) \leq K_3'$ for almost all $a,w$.
\item[(B2)] There exist $\mu_{\infty} \in \s{F}_0'$ and $g_{\infty} \in \s{F}_1'$ such that $\max_v P_0(\mu_{n,v} - \mu_{\infty})^2 \inprob 0$ and $\max_v P_0 (g_{n,v} - g_{\infty})^2  \inprob 0$.
\end{description}
We then have the following analogue of Theorem~\ref{thm:consistency} establishing consistency of the cross-fitted estimator $\theta_n^\circ$.
\begin{thm}[Consistency of the cross-fitted estimator]
If conditions (B1)--(B2) and (A3) hold, then $\theta_n^\circ(a) \inprob \theta_0(a)$ for any $a \in \s{A}$ such that $F_0(a) \in (0,1)$, $\theta_0$ is continuous at $a$, and $F_0$ is strictly increasing in a neighborhood of $a$. If $\theta_0$ is uniformly continuous and $F_0$ is strictly increasing on $\s{A}$, then $\sup_{a \in \s{A}_0} | \theta_n^\circ(a) - \theta_0(a) | \inprob 0$ for any bounded strict subinterval $\s{A}_0\subsetneq\s{A}$.
\label{thm:consistency_cv}\end{thm}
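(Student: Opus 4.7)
The plan is to reduce the theorem to a uniform convergence statement for the cumulative process $\Gamma_n^\circ$, and then reuse \emph{verbatim} the continuous-mapping argument from the proof of Theorem~\ref{thm:consistency}. Once one has $\sup_{a \in \s{A}_0}|\Gamma_n^\circ(a) - \Gamma_0(a)| \inprob 0$ on a suitable closed sub-interval of $\s{A}$, the combination of uniform convergence of $F_n$ to $F_0$, strict monotonicity of $F_0$, convexity of the limit $\Psi_0 = \Gamma_0 \circ F_0^{-1}$, and continuity of the left-derivative-of-GCM functional at such a limit delivers both the pointwise and uniform consistency assertions exactly as in the proof of Theorem~\ref{thm:consistency}. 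Thus the only new work is proving uniform convergence of $\Gamma_n^\circ$ to $\Gamma_0$ from (B1), (B2), and (A3), \emph{without} any entropy assumption on $\s{F}_0'$ or $\s{F}_1'$.

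To establish that, I would condition on the training samples and split $\Gamma_n^\circ(a) - \Gamma_0(a)$ into a fluctuation and a bias piece. Letting $\Gamma_n^{\mathrm{b}}(a) := E[\Gamma_n^\circ(a) \mid \s{T}_{n,1},\ldots,\s{T}_{n,V}]$, a direct calculation shows that, up to a diagonal correction of order $N^{-1}$ uniform in $a$,
\[ \Gamma_n^{\mathrm{b}}(a) \;=\; \frac{1}{V}\sum_{v=1}^V \iint_{u \leq a} \Bigl\{ [\mu_0(u,w) - \mu_{n,v}(u,w)] \,\frac{g_0(u,w)}{g_{n,v}(u,w)} + \mu_{n,v}(u,w)\Bigr\} f_0(u)\, du\, dQ_0(w). \]
The fluctuation $\Gamma_n^\circ - \Gamma_n^{\mathrm{b}}$ is handled fold-by-fold by conditioning on $\s{T}_{n,v}$: because $\mu_{n,v}$ and $g_{n,v}$ are then fixed bounded functions by (B1), both the linear term in the pseudo-outcome and the bilinear double-sum term reduce to two-sample empirical averages of $I_{(-\infty, a]}$ times uniformly bounded functions. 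The only class varying with $a$ is the one-dimensional VC class $\{I_{(-\infty, a]} : a \in \d{R}\}$, for which uniform Glivenko--Cantelli control is immediate; no entropy bound on $\s{F}_0'$ or $\s{F}_1'$ is required since the \emph{single, fixed} functions $\mu_{n,v}, g_{n,v}$ are absorbed into the envelope. A Chebyshev bound conditional on $\s{T}_{n,v}$ combined with this indicator GC property yields $\sup_a |\Gamma_n^\circ(a) - \Gamma_n^{\mathrm{b}}(a)| \inprob 0$.

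For the bias term, Cauchy--Schwarz together with the uniform lower bound on $g_{n,v}$ and $g_\infty$ from (B1) and the $L_2$ convergence in (B2) show that, uniformly in $a$, $\Gamma_n^{\mathrm{b}}(a)$ is an $\fasterthan(1)$ perturbation of
\[ \iint_{u \leq a} \Bigl\{[\mu_0(u,w) - \mu_\infty(u,w)]\,\frac{g_0(u,w)}{g_\infty(u,w)} + \mu_\infty(u,w)\Bigr\} f_0(u)\, du\, dQ_0(w). \]
The main obstacle, as in the proof of Theorem~\ref{thm:consistency}, is then verifying that this limiting integrand equals $\mu_0(u,w)$ for $(F_0 \times Q_0)$-almost every $(u,w)$, so that the limit coincides with $\Gamma_0(a)$. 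This is where condition (A3) enters: splitting the domain into $\s{S}_1 \cup \s{S}_2 \cup \s{S}_3$ and applying, on each piece, the appropriate equality among $\mu_\infty = \mu_0$ and $g_\infty = g_0$, the double-robust algebraic identity reduces the integrand to $\mu_0$ in each case. Combining the two displayed limits delivers the required uniform convergence, and hence the theorem.
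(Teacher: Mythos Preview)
Your approach is correct and takes a genuinely different route from the paper. The paper does not split $\Gamma_n^\circ - \Gamma_0$ into fluctuation and bias around the conditional mean $\Gamma_n^{\mathrm b}$; instead it establishes a first-order expansion (their Lemma for the cross-fitted case) of the form $\Gamma_n^\circ(a_0)-\Gamma_0(a_0)=\d{P}_n\phi_{\infty,a_0}^*+R_{n,a_0,1}^\circ+R_{n,a_0,2}^\circ+R_{n,a_0,3}^\circ$, centered at the limiting influence function $\phi_{\infty,a_0}^*$. The leading empirical term is $O_P(n^{-1/2})$ by a Donsker argument on the \emph{fixed} class $\{\phi_{\infty,a_0}^*:a_0\}$; $R_1^\circ$ is the same second-order product you recover in your bias calculation; $R_2^\circ=(\d{P}_n^v-P_0)(\phi_{\mu_{n,v},g_{n,v}}'-\phi_{\mu_\infty,g_\infty}')$ is shown to be $o_P(n^{-1/2})$ by conditioning on $\s{T}_{n,v}$ and applying the maximal inequality of \cite[Theorem~2.14.1]{van1996weak} with an envelope whose $L_2(P_0)$ norm is controlled by $\|\mu_{n,v}-\mu_\infty\|_{P_0,2}+\|g_{n,v}^{-1}-g_\infty^{-1}\|_{P_0,2}$; and $R_3^\circ$ is handled by a degenerate $U$-process bound. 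Your decomposition is more elementary for the present theorem: you do not need the envelope to vanish (only to be $P_0$-integrable), so plain Glivenko--Cantelli on the indicator class, conditional on training, suffices. What the paper's decomposition buys is that exactly the same expansion, with the rates already in hand, is recycled in Theorem~\ref{thm:dose_response_cv}; your split would have to be redone there.

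One technical point to make explicit: the bilinear double sum does not reduce \emph{entirely} to two one-sample empirical averages. After H\'ajek projection, the two linear pieces are indeed of the form $N^{-1}\sum_i I_{(-\infty,a]}(A_i)\int\mu_{n,v}(A_i,w)\,Q_0(dw)$ and $N^{-1}\sum_j\int_{-\infty}^a\mu_{n,v}(u,W_j)\,F_0(du)$, which your indicator-GC argument handles; but there remains a degenerate second-order $U$-process indexed by $a$ through $I_{(-\infty,a]}$. Controlling its supremum over $a$ requires a $U$-process maximal inequality (the paper invokes Nolan--Pollard, its Lemma~\ref{lemma:u_process}), not merely a pointwise Chebyshev bound. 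The bound is routine---the kernel is uniformly bounded by (B1) and the indicator class is VC, yielding a uniform $O_P(N^{-1})$---but you should state it rather than fold it into the Glivenko--Cantelli claim.
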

For convergence in distribution, we introduce the following analogue of condition (A4).
\begin{description}[style=multiline,leftmargin=1cm]
\item[(B4)] There exists $\varepsilon_0 > 0$ such that: 
\begin{enumerate}[(a)]
\item  $\max_v \max\{d(\mu_{n,v}, \mu_{\infty};a, \varepsilon_0, \s{S}_1), d(g_{n,v}, g_{\infty};a, \varepsilon_0, \s{S}_2)\} = \fasterthan(n^{-1/3})$; 
\item $\max_v \max\{d(\mu_{n,v}, \mu_{\infty};a, \varepsilon_0, \s{S}_2), d(g_{n,v}, g_{\infty};a, \varepsilon_0, \s{S}_1)\} = \fasterthan(1)$;
\item $ \max_v d(\mu_{n,v}, \mu_{\infty};a, \varepsilon_0, \s{S}_3)d(g_{n,v}, g_{\infty};a, \varepsilon_0, \s{S}_3)  = \fasterthan(n^{-1/3})$.
\end{enumerate}
\end{description}
We then have the following analogue of Theorem~\ref{thm:dose_response} for the cross-fitted estimator $\theta_n^\circ$.
\begin{thm}[Convergence in distribution for the cross-fitted estimator]
 If conditions (B1), (B2), (A3), (B4), and (A5) hold, then  $n^{1/3}\left[ \theta_{n}^\circ(a) - \theta_0(a)\right] \indist \left[4\tau_0(a)\right]^{1/3} \d{W}$ for any $a \in \s{A}$ such that $F_0(a)\in(0,1)$.
\label{thm:dose_response_cv}
\end{thm}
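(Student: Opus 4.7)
The plan is to follow the same Grenander-type argmax strategy used to establish Theorem~\ref{thm:dose_response}, but to replace each invocation of the entropy condition (A1b) by a conditioning argument that exploits the independence between $\s{T}_{n,v}$ and $\s{V}_{n,v}$ within each fold. Concretely, I would first observe that $\theta_n^\circ=\psi_n^\circ\circ F_n$ is itself a generalized Grenander-type estimator for $\psi_0=\theta_0\circ F_0^{-1}$, with primitive estimator $\Psi_n^\circ:=\Gamma_n^\circ\circ F_n^-$. Thus, by the abstract results in \cite{westling2018monotone}, the limit distribution of $n^{1/3}[\theta_n^\circ(a)-\theta_0(a)]$ is determined by (i) uniform consistency of $\Gamma_n^\circ$ in a neighborhood of $a$, (ii) smoothness of $\Gamma_0$ at $a$ (given by (A5) together with (A3)), and (iii) a local-process convergence statement of the form
\[
n^{2/3}\bigl[\Gamma_n^\circ(a+n^{-1/3}s)-\Gamma_n^\circ(a)-\theta_0(a)(F_0(a+n^{-1/3}s)-F_0(a))\bigr]\;\rightsquigarrow\;\sqrt{\kappa_0(a)f_0(a)}\,Z_0(s)
\]
in $\ell^\infty$ over compact sets of $s$, where $Z_0$ is two-sided Brownian motion. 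The $n^{1/3}$ limit with scale $[4\tau_0(a)]^{1/3}=[4\theta_0'(a)\kappa_0(a)/f_0(a)]^{1/3}$ then follows from the argmax continuous-mapping theorem exactly as in the proof of Theorem~\ref{thm:dose_response}.

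Consistency of $\Gamma_n^\circ$ is already implicit in Theorem~\ref{thm:consistency_cv}. For the local-process statement, I would decompose
\[
\Gamma_n^\circ(a)-\Gamma_0(a)=(\mathbb{P}_n-P_0)\phi^*_{\mu_\infty,g_\infty,F_0,Q_0,a}+R_{n,1}(a)+R_{n,2}(a),
\]
where $R_{n,1}(a):=V^{-1}\sum_v(\mathbb{P}_{N,v}-P_0)\bigl[\phi^*_{\mu_{n,v},g_{n,v},F_{n,v},Q_{n,v},a}-\phi^*_{\mu_\infty,g_\infty,F_0,Q_0,a}\bigr]$ and $R_{n,2}(a)$ collects the deterministic drift $P_0\bigl[\phi^*_{\mu_{n,v},g_{n,v},F_{n,v},Q_{n,v},a}-\phi^*_{\mu_\infty,g_\infty,F_0,Q_0,a}\bigr]$ averaged over $v$, plus the $U$-statistic cross-term arising from the double sum in \eqref{gamma_one_step_cv}. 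The leading empirical process term is handled by the same Donsker argument used in Theorem~\ref{thm:dose_response}: by (A5) and boundedness in (B1), the class $\{\phi^*_{\mu_\infty,g_\infty,F_0,Q_0,a+s}:|s|\le\varepsilon_0\}$ is a bounded variation class of functions over a compact interval, hence Donsker, and its localized covariance is $\kappa_0(a)f_0(a)|s-s'|+o(1)$ by (A5).

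The key novelty, and the step where cross-fitting pays off, is controlling $R_{n,1}$ locally without any entropy hypothesis on $\{\mu_{n,v}\}$ or $\{g_{n,v}\}$. For this I would condition on the training set $\s{T}_{n,v}$, under which $(\mu_{n,v},g_{n,v})$ is deterministic and $\mathbb{P}_{N,v}$ is an empirical measure of $N$ i.i.d.\ observations drawn from $P_0$ and independent of the nuisance estimators. Then, uniformly over $|s|\le\varepsilon_0$,
\[
\mathrm{Var}\bigl\{\sqrt{N}(\mathbb{P}_{N,v}-P_0)[\phi^*_{\mu_{n,v},g_{n,v},F_{n,v},Q_{n,v},a+s}-\phi^*_{\mu_\infty,g_\infty,F_0,Q_0,a+s}]\mid\s{T}_{n,v}\bigr\}
\]
can be bounded by a constant times $P_0(\mu_{n,v}-\mu_\infty)^2+P_0(g_{n,v}-g_\infty)^2+o(1)$, which is $o_P(1)$ by (B2). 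A Chebyshev/maximal-inequality argument (applied over a fine grid in $s$, with monotonicity filling the gaps) gives $\sup_{|s|\le\varepsilon_0}|R_{n,1}(a+s)|=o_P(n^{-1/2})$, which is far finer than the $o_P(n^{-2/3})$ needed locally on the $n^{-1/3}$ scale. The drift $R_{n,2}$ is the standard doubly-robust second-order remainder: by expanding the influence function, its magnitude is bounded by the products and set-decomposed sums appearing in condition (B4), so (B4) yields $\sup_{|s|\le n^{-1/3}M}|R_{n,2}(a+s)|=o_P(n^{-2/3})$ for every $M$. The $U$-statistic term from the double sum can be linearized via a Hoeffding decomposition, with the degenerate part controlled by (B1) and cross-fitting, exactly as above.

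The main obstacle is ensuring that these pointwise rates hold \emph{uniformly in $s$ on compact $n^{-1/3}$-neighborhoods of $a$}. The entropy-based Donsker argument used in Theorem~\ref{thm:dose_response} is no longer available for the nuisance-dependent increment, so I would exploit monotonicity of $s\mapsto I_{(-\infty,a+s]}$: writing $R_{n,1}(a+s)-R_{n,1}(a)$ as an empirical process indexed by half-intervals multiplied by a class of bounded functions, the conditional bracketing entropy (given $\s{T}_{n,v}$) is $O(\log(1/\varepsilon))$, so Bernstein's inequality conditionally on $\s{T}_{n,v}$ gives the required uniform negligibility once (B2) is used to drive the envelope to zero in $L_2(P_0)$. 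Combining the uniform control of $R_{n,1}+R_{n,2}$ with the Donsker limit of the leading term and (A5), the localized process converges to $\sqrt{\kappa_0(a)f_0(a)}\,Z_0(s)+\tfrac12\theta_0'(a)f_0(a)^2 s^2$, from which Brownian scaling and the argmax continuous-mapping theorem (see Lemma 4.1 of \citealp{westling2018monotone}) yield the claimed Chernoff limit with scale $[4\tau_0(a)]^{1/3}$.
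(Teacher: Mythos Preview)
Your overall strategy---reduce to the Grenander-type framework of \cite{westling2018monotone}, re-use the verification of the fixed-function conditions from Theorem~\ref{thm:dose_response}, and replace the empirical-process entropy arguments by conditioning on $\s{T}_{n,v}$---is exactly what the paper does. The decomposition into leading term, empirical-process remainder, doubly-robust drift, and $U$-statistic piece also matches (the paper splits into three remainders $R_{n,a_0,1}^\circ,R_{n,a_0,2}^\circ,R_{n,a_0,3}^\circ$ where your $R_{n,1}$ corresponds to their $R_{n,a_0,2}^\circ$).

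There is, however, a genuine gap in your handling of the empirical-process remainder. Your claim that $\sup_{|s|\le\varepsilon_0}|R_{n,1}(a+s)|=o_P(n^{-1/2})$ is ``far finer than the $o_P(n^{-2/3})$ needed'' is backwards: $n^{-1/2}\gg n^{-2/3}$, so $o_P(n^{-1/2})$ is a \emph{weaker} statement and does not imply $n^{2/3}$ times the localized increment tends to zero. A Chebyshev bound over a \emph{fixed} $\varepsilon_0$-neighborhood cannot deliver $o_P(n^{-2/3})$ here; one must exploit the localization to the $n^{-1/3}$ scale. The paper does this directly: conditional on $\s{T}_{n,v}$, it applies the maximal inequality (Theorem~2.14.1 of \cite{van1996weak}) to the localized class $\s{G}_{n,v,\delta n^{-1/3}}$, whose envelope contains the indicator $I_{[a_0-\delta n^{-1/3},\,a_0+\delta n^{-1/3}]}(a)$ and therefore has $L_2(P_0)$ norm bounded by $\delta^{1/2}n^{-1/6}$ times $[P_0(\mu_{n,v}-\mu_\infty)^2]^{1/2}+[P_0(g_{n,v}^{-1}-g_\infty^{-1})^2]^{1/2}$. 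Combined with the $N^{-1/2}$ from the empirical process and (B2), this yields $E_0[K_{n,2}^\circ(\delta)]\lesssim \delta^{1/2}\cdot o(1)$. Your final paragraph moves toward this idea via conditional bracketing and Bernstein, but you never extract the crucial $n^{-1/6}$ localization gain, and without it the argument does not close. Note also that the WC framework requires not just $K_{n,j}^\circ(\delta)\inprob 0$ but that $\delta\mapsto\delta^{-\beta}E_0[K_{n,j}^\circ(\delta)]$ be decreasing for some $\beta\in(1,2)$; the $\delta^{1/2}$ dependence above is what delivers this, and your sketch does not address it.
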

The conditions of Theorems~\ref{thm:consistency_cv} and~\ref{thm:dose_response_cv} are analogous to those of Theorems~\ref{thm:consistency} and~\ref{thm:dose_response}, with the important exception that the entropy condition (A1b) is no longer required. Therefore, the estimators $\mu_{n,v}$ and $g_{n,v}$ may be as data-adaptive as one desires without concern for empirical process terms, as long as they satisfy the boundedness conditions stated in (B1).

\section{Construction of confidence intervals}\label{inference}

\subsection{Wald-type confidence intervals}\label{sec:ci_effect}

The distributional results of Theorem~\ref{thm:dose_response} can be used to construct a confidence interval for $\theta_0(a)$. Since the limit distribution of $n^{1/3}\left[\theta_n(a)-\theta_0(a)\right]$ is symmetric around zero, a Wald-type construction seems  appropriate. Specifically, writing $\tau_0(a):= \theta_0'(a) \kappa_0(a) / f_0(a)$ and denoting by $\tau_n(a)$ any consistent estimator of $\tau_0(a)$, a Wald-type $1 - \alpha$ level asymptotic confidence interval for $\theta_0(a)$ is given by
\[ \left[ \theta_n(a) - \left[\frac{4 \tau_n(a)}{n}\right]^{1/3} q_{1-\alpha / 2}, \ \theta_n(a)+ \left[\frac{4 \tau_n(a)}{n}\right]^{1/3} q_{1-\alpha / 2} \right], \]
where $q_p$ denotes the $p^{th}$ quantile of $\d{W}$. Quantiles of the standard Chernoff distribution have been numerically computed and tabulated on a fine grid \citep{groeneboom2001jcgs}, and are readily available in the statistical programming language \texttt{R}. Estimation of $\tau_0(a)$ involves, either directly or indirectly, estimation of $\theta_0'(a)/f_0(a)$ and $\kappa_0(a)$. We focus first on the former.

We note that $\theta_0'(a) / f_0(a)= \psi_0'(F_0(a))$ with $\psi_0 := \theta_0\circ F_0^{-1}$.  This suggests that we could either estimate $\theta_0'$ and $f_0$ separately and consider the ratio of these estimators, or that we could instead estimate $\psi_0'$ directly and compose it with the estimator of $F_0$ already available. The latter approach has the desirable property that the resulting scale estimator is invariant to strictly monotone transformations of the exposure. As such, this is the strategy we favor. To estimate $\psi_0'$, we recall that the estimator $\psi_n$ from Section~\ref{estimator} is a step function and is therefore not differentiable. A natural solution consists of computing the derivative of a smoothed version of $\psi_n$. We have found local quadratic kernel smoothing of points $\{ (u_j, \psi_n(u_j)): j = 1, 2,\dotsc K\}$, for $u_j$ the midpoints of the jump points of $\psi_n$, to work well in practice.

Theorem~\ref{thm:joint_conv} can  be used to construct Wald-type confidence intervals for causal effects of the form $\theta_0(a_1) -\theta_0(a_2)$. We first construct estimates $\tau_{n}(a_1)$ and $\tau_n(a_2)$ of the scale parameters $\tau_0(a_1)$ and $\tau_0(a_2)$, respectively, and then compute an approximation $\bar{q}_{n,1-\alpha/2}$ of the $(1-\alpha/2)$-quantile of $\left[4\tau_n(a_1) \right]^{1/3} \d{W}_1 - \left[4\tau_n(a_2)\right]^{1/3} \d{W}_2$, where $\d{W}_1$ and $\d{W}_2$ are independent Chernoff distributions, using Monte Carlo simulations, for example. An asymptotic $1-\alpha$ level Wald-type confidence interval for $\theta_0(a_1) - \theta_0(a_2)$ is then $\theta_n(a_1) - \theta_n(a_2) \pm \bar{q}_{n,1-\alpha/2}n^{-1/3}$.

In the next two subsections, we discuss different strategies for estimating the scale factor $\kappa_0(a)$.

\subsection{Scale estimation relying on consistent nuisance estimation}

We first consider settings in which both $\mu_n$ and $g_n$ are consistent estimators, that is, $g_{\infty} = g_0$ and $\mu_\infty = \mu_0$. In such cases, we have that $\kappa_0(a) = E_{0} \left[ \sigma_0^2(a, W) / g_0(a, W)\right]$ with $\sigma^2_0(a,w)$ denoting the conditional variance $E_0\{[Y-\mu_0(a,W)]^2\mid A=a,W=w\}$. Any regression technique could be used to estimate the conditional expectation of $Z_n:=[Y - \mu_n(A,W)]^2$ given $A$ and $W$, yielding an estimator $\sigma_n^2(a,w)$ of $\sigma_0^2(a,w)$. A plug-in estimator of $\kappa_0(a)$ is then given by
\[ \kappa_{n}(a) := \frac{1}{n} \sum_{i=1}^n \frac{\sigma_n^2(a, W_i)}{g_n(a, W_i)}\ .\]
Provided $\mu_n$, $g_n$ and $\sigma_n^2$ are consistent estimators of $\mu_0$, $g_0$ and $\sigma_0^2$, respectively, $\kappa_{n}(a)$ is a consistent estimator of $\kappa_0(a)$. We note that in the special case of a binary outcome, the fact that $\sigma_0^2(a,w) = \mu_0(a,w)[1-\mu_0(a,w)]$ motivates the use of $\mu_n(a,w) [1 - \mu_n(a,w)]$ as estimator $\sigma_n^2(a,w)$, and thus eliminates the need for further regression beyond the construction of $\mu_n$ and $g_n$. In practice,  we typically recommend the use of an ensemble method -- for example, the SuperLearner \citep{vanderlaan2007super} -- to combine a variety of regression techniques, including machine learning techniques, to minimize the risk of inconsistency of $\mu_n$, $g_n$ and $\sigma_n^2$. 

\subsection{Doubly-robust scale estimation}

As noted above, Theorem~\ref{thm:dose_response} provides the limit distribution of $n^{1/3}\left[\theta_n(a)-\theta_0(a)\right]$ even if one of the nuisance estimators is inconsistent, as long as the consistent nuisance estimator converges fast enough. We now show how we may capitalize on this result to provide a doubly-robust estimator of $\kappa_0(a)$. Since $\psi_n$ is itself a doubly-robust estimator of $\psi_0$, so will be the proposed estimator $\psi_n'$ of $\psi_0'$ and hence also of the resulting estimator $\tau_n(a)$ of $\tau_0(a)$. This contrasts with the estimator of $\kappa_0(a)$ described in the previous section, which required the consistency of both $\mu_n$ and $g_n$.

To construct an estimator of $\kappa_0(a)$ consistent even if either $\mu_{\infty}\neq \mu_0$ or $g_{\infty}\neq g_0$, we begin by noting that $\kappa_0(a) = \lim_{h \downarrow 0} E_{0} \left[ K_h\left( F_0(A) - F_0(a)\right) \eta_\infty(Y, A, W) \right]$, where $K_h:u\mapsto h^{-1}K(uh^{-1})$ for some bounded density function $K$ with bounded support, and we have defined
\[ \eta_\infty:(y, a, w)\mapsto \left[ \frac{y - \mu_{\infty}(a, w)}{g_{\infty}(a, w)} + \theta_{\infty}(a) - \theta_0(a)\right]^2.\] Setting $\theta_{\mu_n}(a) := \int \mu_n(a, w) Q_n(dw)$ with $Q_n$ the empirical distribution based on $W_1,W_2,\ldots,W_n$, we define $\kappa^*_{n,h}(a) := \frac{1}{n}\sum_{i=1}^n K_h\left( F_n(A_i) - F_n(a)\right)\eta_n(Y_i, A_i, W_i)$ with $\eta_n$ obtained by substituting $\mu_\infty$, $g_\infty$, $\theta_\infty$ and $\theta_0$ by $\mu_n$, $g_n$, $\theta_{\mu_n}$ and $\theta_n$, respectively, in the definition of $\eta_\infty$. Under conditions (A1)--(A5), it can be shown that $\kappa^*_{n,h_n}(a) \inprob \kappa_0(a)$ by standard kernel smoothing arguments for any sequence $h_n \to 0$. In particular,  $\kappa^*_{n,h_n}(a)$ is consistent under the general form of doubly-robustness specified by condition (A3).

To determine an appropriate value of the bandwidth $h$ in practice, we propose the following empirical criterion. We first define the integrated scale $\gamma_0 := \int \kappa_0(a) F_0(da)$, and construct the estimator $\gamma_n(h) := \int \kappa_{n,h}(a) F_n(da)$ for any candidate $h>0$. Furthermore, we observe that $\gamma_0= E_{0}\left[\eta_\infty(Y, A, W)\right]$, which suggests the use of the empirical estimator $\bar{\eta}_n:=\frac{1}{n} \sum_{i=1}^n  \eta_n(Y_i, A_i, W_i)$. This motivates us to define $h_n^* := \argmin_h \left[\gamma_n(h)-\bar{\eta}_n\right]^2 $, that is, the value of $h$ that makes $\gamma_n(h)$ and $\bar{\eta}_n$ closest. The proposed doubly-robust estimator of $\kappa_0(a)$ is thus $\kappa_{n, \n{DR}}(a) := \kappa_{n,h_n^*}(a)$.

We make two final remarks regarding this doubly-robust estimator of $\kappa_0(a)$. First, we note that this estimator only depends on $A$ and $a$ through the ranks $F_n(A)$ and $F_n(a)$. Hence, as before, our estimator is invariant to strictly monotone transformations of the exposure $A$.  Second, we note that if $\mu_n(a, w) = \mu_n(a)$ does not depend on $w$ and $g_n = 1$, $\kappa_{n,\n{DR}}(a)$ tends to the conditional variance $\n{Var}_0(Y \mid A = a)$, which is precisely the scale parameter appearing in standard isotonic regression.


\subsection{Confidence intervals via sample splitting}\label{sec:samplesplit}

As an alternative, we note here that the sample-splitting method recently proposed by \cite{banerjee2019divide} could also be used to perform inference. Specifically, to implement their approach in our context, we randomly split the sample into $m$ subsets of roughly equal size, perform our estimation procedure on each subset to form subset-specific estimates $\theta_{n,1}, \theta_{n,2}, \dotsc, \theta_{n,m}$, and then define $\bar\theta_{n,m}(a) := \frac{1}{m} \sum_{j=1}^m \theta_{n,j}(a)$. \cite{banerjee2019divide} demonstrated that if $m > 1$ is fixed, then under mild conditions $\bar\theta_{n,m}(a)$ has strictly better asymptotic mean squared error than $\theta_n(a)$, and that for moderate $m$,
\begin{equation} \left[ \bar\theta_{n,m}(a) - \frac{\sigma_{n,m}(a)}{\sqrt{m} n^{1/3}} t_{1-\alpha/2, m-1}, \ \bar\theta_{n,m}(a) + \frac{\sigma_{n,m}(z)}{\sqrt{m}n^{1/3} } t_{1-\alpha/2, m-1}\right] \label{eq:pooled_ci}\end{equation}
forms an asymptotic $1-\alpha$ level confidence interval for $\theta_0(a)$, where $\sigma_{n,m}^2(a) := \frac{1}{m-1} \sum_{j=1}^m [ \theta_{n,j}(a) - \bar\theta_{n,m}(a)]^2$ and $t_{1-\alpha/2, m-1}$ is the $(1-\alpha/2)$-quantile of the $t$-distribution with $m-1$ degrees of freedom.

\section{Numerical studies}\label{numerical}

In this section, we perform numerical experiments to assess the performance of the proposed estimators of $\theta_0(a)$ and of the three approaches for constructing confidence intervals, which we also compare to that of the local linear estimator and associated confidence intervals proposed in \cite{kennedy2016continuous}.

In our experiments, we simulate data as follows. First, we generate $W \in \d{R}^4$ as a vector of four independent standard normal variates. A natural next step would be to generate $A$ given $W$. However, since our estimation procedures requires estimating the conditional density of $U:=F_0(A)$ given $W$, we instead generate $U$ given $W$, and then transform $U$ to obtain $A$. This strategy makes it easier to construct correctly-specified parametric nuisance estimators in the context of these simulations. Given $W=w$, we generate $U$ from the distribution with conditional density function $\bar{g}_0(u\mid w) = I_{[0,1]}(u)\{ \lambda(w) + 2u[1 - \lambda(w)]\}$ for $\lambda(w) := 0.1 + 1.8 \expit(\beta^\top w)$. We note that $\bar{g}_0(u\mid w) \geq 0.1$ for all $u\in [0,1]$ and $w \in \d{R}^4$, and also, that $\int \bar{g}_0(u\mid w) Q_0(dw) = I_{[0,1]}(u)$, so that $U$ is marginally uniform. We then take  $A$ to be the evaluation at $U$ of the quantile function of an equal-weight mixture of two normal distributions with means $-2$ and $2$ and standard deviation 1, which implies that $A$ is marginally distributed according to this bimodal normal mixture. Finally, conditionally upon $A=a $ and $W=w$, we simulate $Y$ as a Bernoulli random variate with conditional mean function given by $\mu_0(a, w) :=  \expit\left(\gamma_1^\top \underbar{w} + \gamma_2^\top \underbar{w} a +  \gamma_3 a^2\right)$, where $\underbar{w}$ denotes $(1, w)$. We set $\beta = (-1, -1, 1, 1)^\top$, $\gamma_1 = (-1, -1,-1,1,1)^\top$, $\gamma_2 = (3, -1,-1,1,1)^\top$ and $\gamma_3 = 3$ in the experiments we report on.

We estimate the true confounder-adjusted dose-response curve $\theta_0$ using the causal isotonic regression estimator $\theta_n$, the local linear estimator of \cite{kennedy2016continuous}, and the sample-splitting version of $\theta_n$ proposed by \cite{banerjee2019divide} with $m=5$ splits. For the local linear estimator, we use the data-driven bandwidth selection procedure proposed in Section~3.5 of \cite{kennedy2016continuous}. We consider three settings in which either both $\mu_n$ and $g_n$ are consistent; only $\mu_n$ consistent; and only $g_n$ consistent. To construct a consistent estimator $\mu_n$, we use a correctly specified logistic regression model, whereas to construct a consistent estimator $g_n$, we use a maximum likelihood estimator based on a correctly specified parametric model. To construct an inconsistent estimator $\mu_n$, we still use a logistic regression model but omit covariates $W_3$, $W_4$ and all interactions. To construct an inconsistent estimator $g_n$, we posit the same parametric model  as before but omit $W_3$ and $W_4$. We construct pointwise confidence intervals for $\theta_0$ in each setting using the Wald-type construction described in Section~\ref{inference} using both the plug-in and doubly-robust estimators of $\kappa_0(a)$. We expect intervals based on the doubly-robust estimator of $\kappa_0(a)$ to provide asymptotically correct coverage rates for $\theta_0(a)$ for each of the three settings, but only expect asymptotically correct coverage rates in the first setting when the plug-in estimator of $\kappa_0(a)$ is used. We construct pointwise confidence intervals for the local linear estimator using the procedure proposed in \cite{kennedy2016continuous}, and for the sample splitting procedure using the procedure discussed in Section~\ref{sec:samplesplit}. We consider the performance of these inferential procedures for values of $a$ between $-3$ and $3$.

The left panel of Figure~\ref{fig:sim_example} shows a single sample path of the causal isotonic regression estimator based on a sample of size $n=5000$ and consistent estimators $\mu_n$ and $g_n$. Also included in that panel are asymptotic 95\% pointwise confidence intervals constructed using the doubly-robust estimator of $\kappa_0(a)$. The right panel shows the unadjusted isotonic regression estimate based on the same data and corresponding 95\% asymptotic confidence intervals.  The true causal and unadjusted  regression curves are shown in red. We note that $\theta_0(a) \neq r_0(a)$ for $a \neq 0$, since the relationship between $Y$ and $A$ is confounded by $W$, and indeed the unadjusted regression curve does not have a causal interpretation. Therefore, the marginal isotonic regression estimator will not be consistent for the true causal parameter. In this data-generating setting, the causal effect of $A$ on $Y$ is larger in magnitude than the marginal effect of $A$ on $Y$ in the sense that $\theta_0(a)$ has greater variation over values of $a$ than does $r_0(a)$.

\begin{figure}[ht]
\centering
\includegraphics[width=\linewidth]{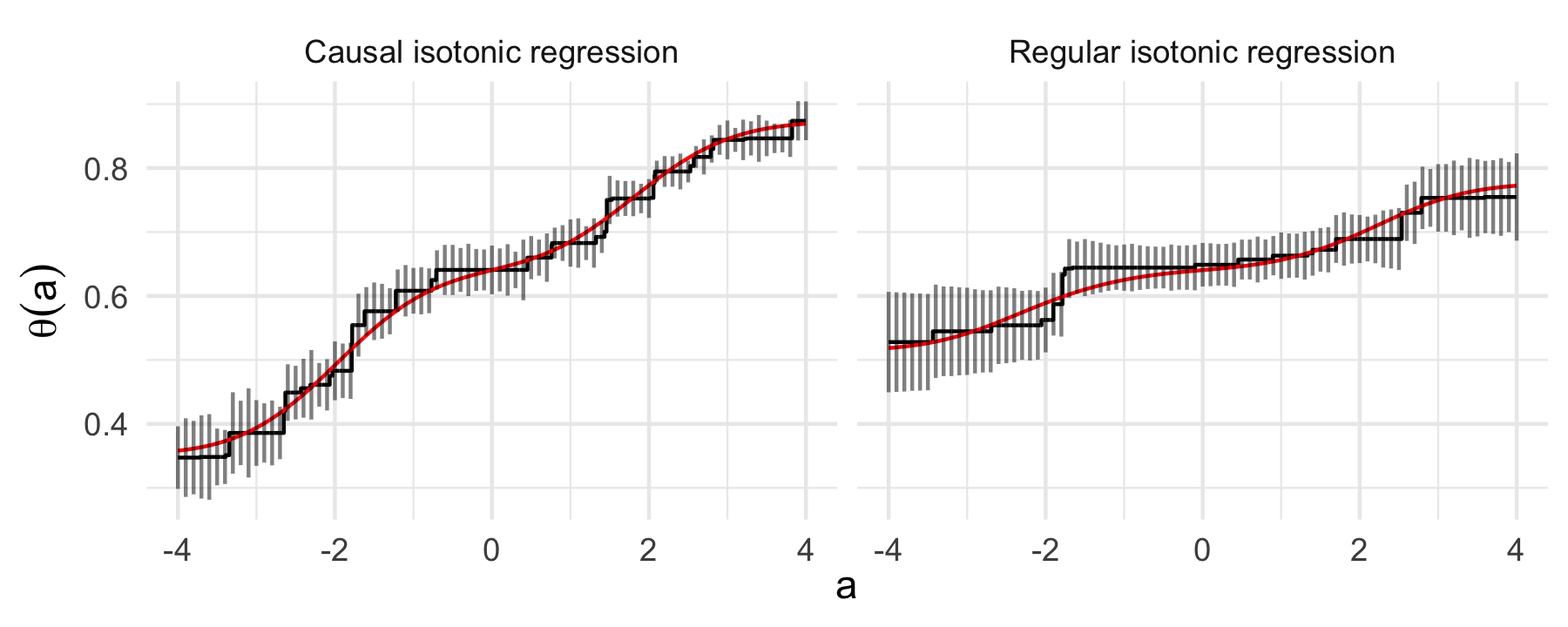}
\caption{Causal isotonic regression estimate using consistent nuisance estimators $\mu_n$ and $g_n$ (left), and regular isotonic regression estimate (right). Pointwise 95\% confidence intervals constructed using the doubly-robust estimator are shown as vertical bars. The true functions are shown in red.}
\label{fig:sim_example}
\end{figure}

We perform 1000 simulations, each with $n\in\{500, 1000, 2500, 5000\}$ observations. Figure~\ref{fig:se} displays the empirical standard error of the three considered estimators over these 1000 simulated datasets as a function of $a$ and for each value of $n$. We first note that the standard error of the local linear estimator is smaller than that of $\theta_n$, which is expected due to the faster rate of convergence of the local linear estimator. The sample splitting procedure also reduces the standard error of $\theta_n$. Furthermore, the standard deviation of the local linear estimator appears to decrease faster than $n^{-1/3}$, whereas the standard deviation of the estimators based on $\theta_n$ do not, in line with the theoretical rates of convergence of these estimators. We also note that inconsistent estimation of the propensity has little impact on the standard errors of any of the estimators, but inconsistent estimation of the outcome regression results in slightly larger standard errors.


\begin{figure}[ht]
\centering
\includegraphics[width=\linewidth]{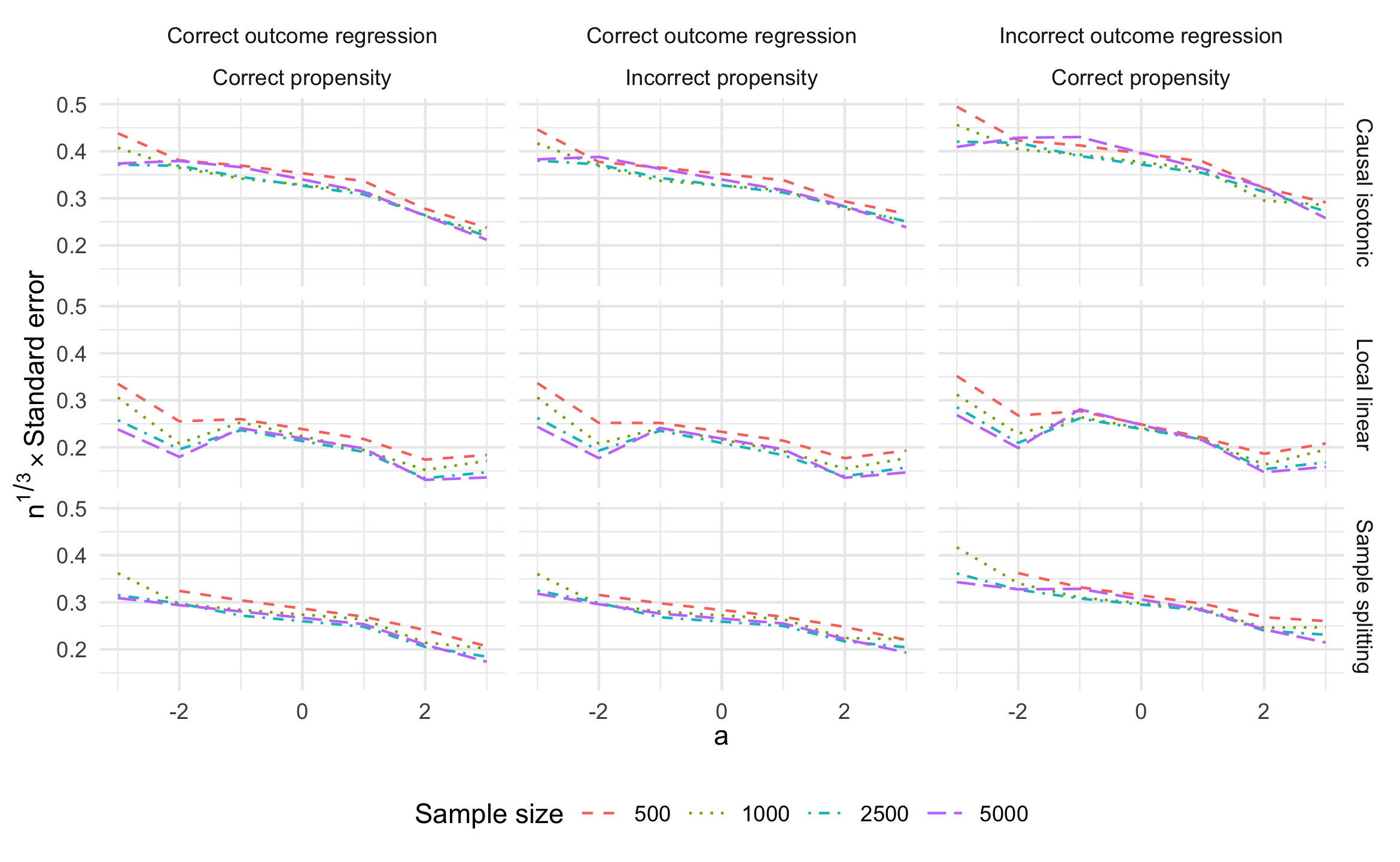}
\caption{Standard error of the three estimators scaled by $n^{1/3}$ as a function of $n$ for different values of $a$ and in contexts in which $\mu_n$ and $g_n$ are either consistent or inconsistent, computed empirically over 1000 simulated datasets of different sizes.}
\label{fig:se}
\end{figure}

Figure~\ref{fig:bias} displays the absolute bias of the three estimators. For most values of $a$, the estimator $\theta_n$ proposed here has smaller absolute bias than the local linear estimator, and its absolute bias decreases faster than $n^{-1/3}$. The absolute bias of the local linear estimator depends strongly on $a$, and in particular is largest where the second derivative of $\theta_0$ is large in absolute value, agreeing with the large-sample theory described in \cite{kennedy2016continuous}. The sample splitting estimator has larger absolute bias than $\theta_n$ because it inherits the bias of $\theta_{n/m}$. The bias is especially large for values of $a$ in the tails of the marginal distribution of $A$.

\begin{figure}[ht]
\centering
\includegraphics[width=\linewidth]{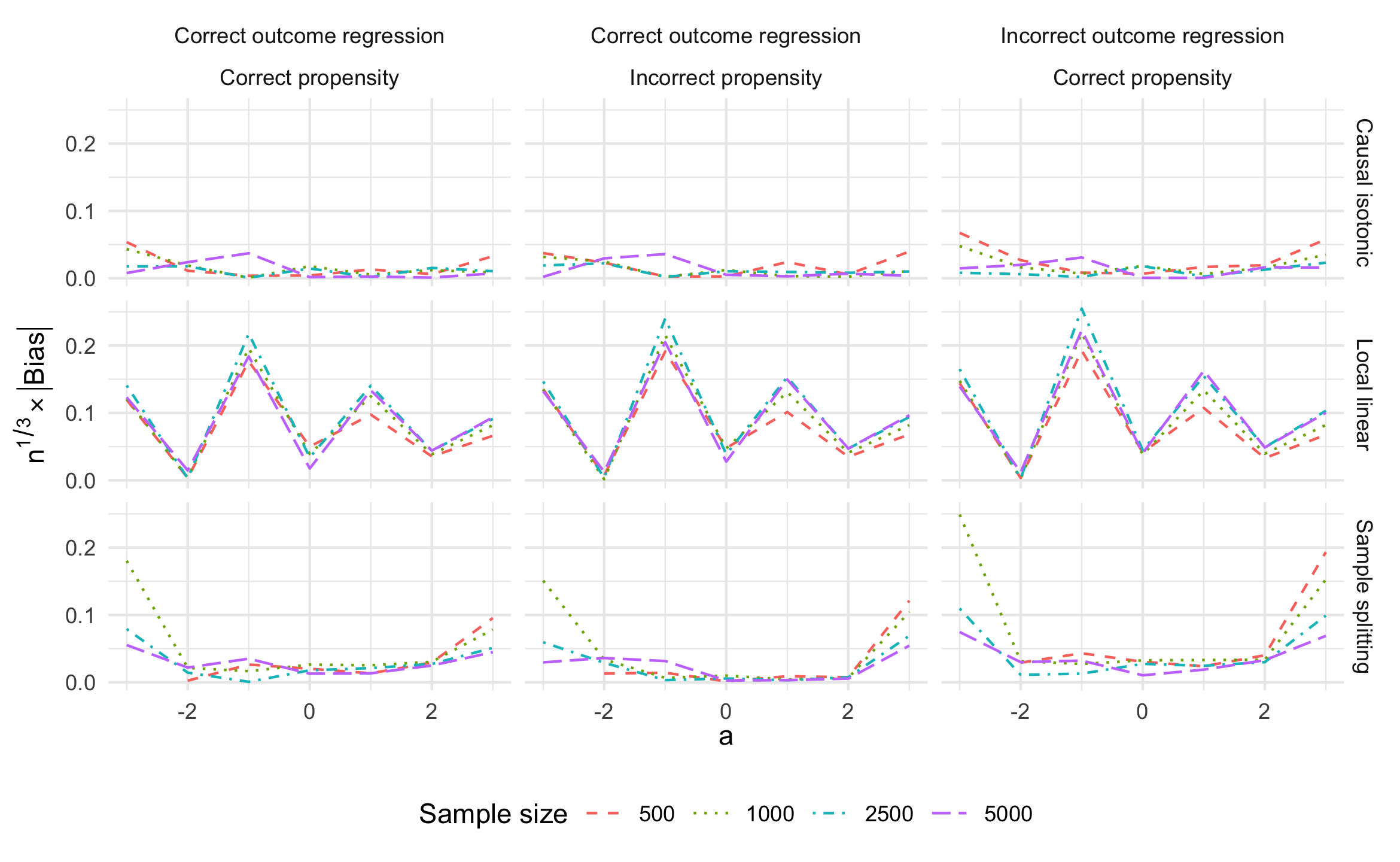}
\caption{Absolute bias of the three estimators scaled by $n^{1/3}$ as a function of $n$ for different values of $a$ and in contexts in which $\mu_n$ and $g_n$ are either consistent or inconsistent, computed empirically over 1000 simulated datasets of different sizes.}
\label{fig:bias}
\end{figure}

Figure~\ref{fig:coverages} shows the empirical coverage of nominal 95\% pointwise confidence intervals for a range of values of $a$ for the four methods considered. For both the plug-in and doubly-robust intervals centered around $\theta_n$, the coverage improves as $n$ grows, especially for values of $a$ in the tails of the marginal distribution of $A$.  Under correct specification of outcome and propensity regression models, the plug-in method attains close to nominal coverage rates for $a$ between $-3$ and $3$ by $n=1000$. When the propensity estimator is inconsistent, the plug-in method still performs well in this example, although we do not expect this to always be the case. However, when $\mu_n$ is inconsistent, the plug-in method is very conservative for positive values of $a$. The doubly-robust method attains close to nominal coverage for large samples as long as one of $g_n$ or $\mu_n$ is consistent. Compared to the plug-in method, the doubly-robust method requires larger sample sizes to achieve good coverage, especially for extreme values of $a$.  This is because the doubly-robust estimator of $\kappa_0(a)$ has a slower rate of convergence than does the plug-in estimator, as demonstrated by box plots of these estimators provided in Supplementary Material. 

The confidence intervals associated with the local linear estimator have poor coverage for values of $a$ where the bias of the estimator is large, which, as mentioned above, occurs when the second derivative of $\theta_0$ is large in absolute value. Overall, the sample splitting estimator has excellent coverage, except perhaps for values of $a$ in the tails of the marginal distribution of $A$ when $n$ is small or moderate, in which case the coverage is near 90\%.

\begin{figure}[h!]
\centering
\includegraphics[width=\linewidth]{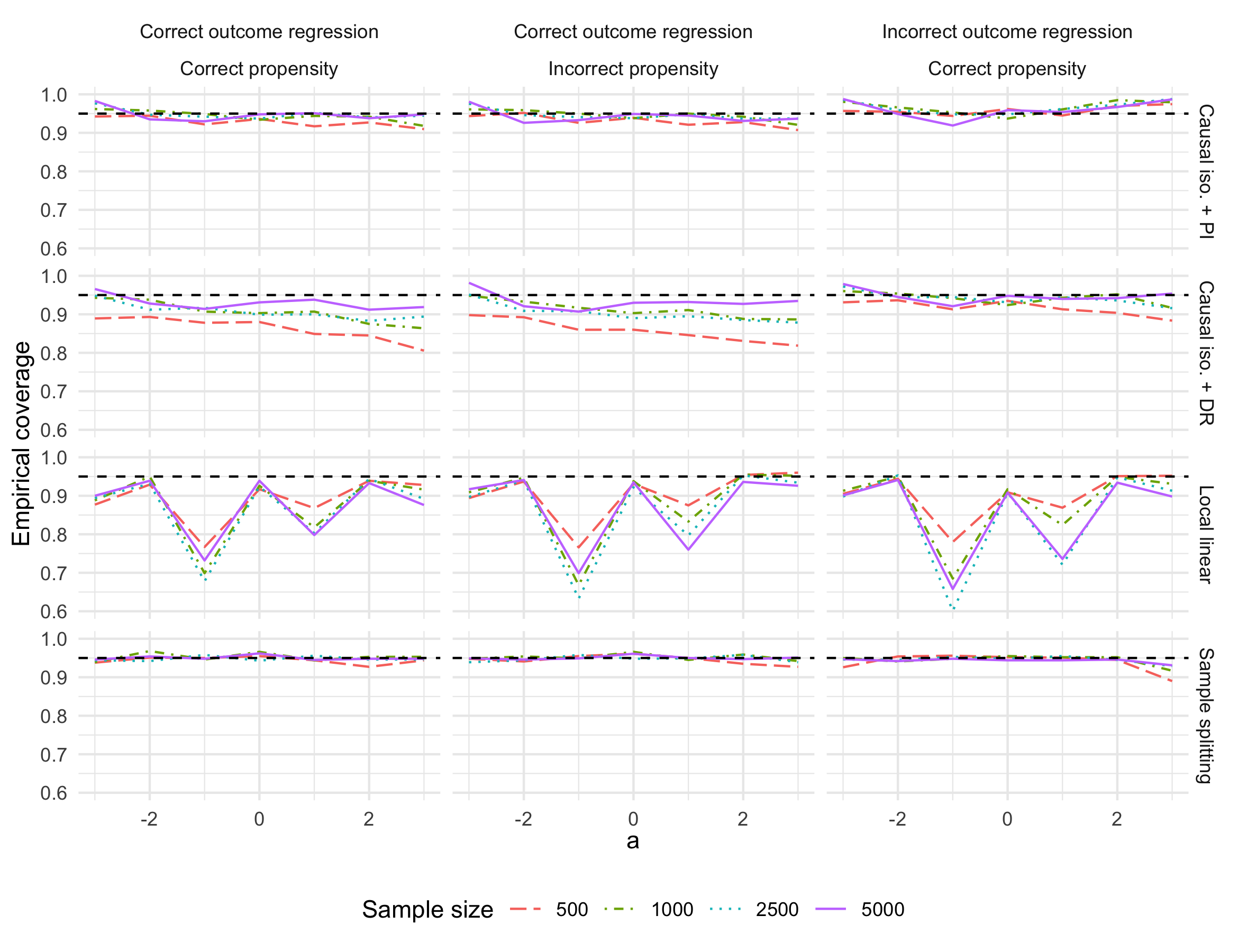}
\caption{Observed coverage of pointwise 95\% confidence intervals using $\theta_n$ and the plug-in method (top row), $\theta_n$ and eht doubly-robust method (second row), the local linear estimator and associated intervals (third row), and the sample splitting estimator (bottom row), considered for different values of $a$ and computed empirically over 1000 simulated datasets of different sizes. Columns indicate whether $\mu_n$ and $g_n$ is consistent or not. Black dashed lines indicate the nominal coverage rate.}
\label{fig:coverages}
\end{figure}

We also conducted a small simulation study to illustrate the performance of the proposed procedures when machine learning techniques are used to construct $\mu_n$ and $g_n$. To consistently estimate $\mu_0$, we used a Super Learner \citep{vanderlaan2007super} with a library consisting of generalized linear models, multivariate adaptive regression splines, and generalized additive models. To consistently estimate $g_0$, we used the method proposed by \cite{diaz2011super} with covariate vector $(W_1, W_2,W_3, W_4)$. To produce inconsistent estimators $\mu_n$ or $g_n$, we used the same estimators but omitted covariates $W_1$ and $W_2$. We also considered the estimator $\theta_n^\circ$ obtained via cross-fitting these nuisance parameters, as discussed in Section~\ref{sec:cv}, as well as the local linear estimator. Due to computational limitations, we performed 1000 simulations at sample size $n=1000$ only. Figure~\ref{fig:coverages_ml} shows the coverage of nominal 95\% confidence intervals. The plug-in intervals achieve very close to nominal coverage under consistent estimation of both nuisances, and also achieve surprisingly good coverage rates when the propensity is inconsistently estimated. The plug-in intervals are somewhat conservative when the outcome regression is inconsistently estimated. The doubly-robust method is anti-conservative under inconsistent estimation of both nuisances and also when the propensity is inconsistently estimated, with coverage rates mostly between 90 and 95\%. Good coverage rates are also achieved when the outcome regression is inconsistently estimated. These results suggest that the doubly-robust intervals may require larger sample sizes to achieve good coverage, particularly when machine learning estimators are used for $\mu_n$ and $g_n$. The plug-in intervals appear to be relatively robust to moderate misspecification of models for the nuisance parameters in smaller samples. Histograms of the estimators of $\kappa_0(a)$ and $\psi_0'(a)$ are provided in the Supplementary Material. Confidence intervals based on the local linear estimator show a similar pattern as in the previous simulation study, undercovering where the second derivative of the true function is large in absolute value. Cross-fitting had little impact on coverage.

As noted above, we found in our numerical experiments that the plug-in estimator of the scale parameter was surprisingly robust to inconsistent estimation of the nuisance parameters, while its doubly-robust estimator was anti-conservative even when the nuisance parameters were estimated consistently. This phenomenon can be explained in terms of the bias and variance of the two proposed scale estimators. On one hand, under inconsistent estimation of any nuisance function, the plug-in estimator of the scale parameter is biased, even in large samples. However, its variance decreases relatively quickly with sample size, since it is a simple empirical average of estimated functions. On the other hand, the doubly-robust estimator is asymptotically unbiased, but its variance decreases much slower with sample size. These trends can be observed in the figures provided in the Supplementary Material. In sufficiently large samples, the doubly-robust estimator is expected to outperform the plug-in estimator in terms of mean squared error when one of the nuisances is inconsistently estimated. However, the sample size required for this trade-off to significantly affect confidence interval coverage depends on the degree of inconsistency. While we did not see this tradeoff occur at the sample sizes used in our numerical experiments, we expect the benefits of the doubly-robust confidence interval construction to become apparent in smaller samples in other settings.

\begin{figure}[h!]
\centering
\includegraphics[width=\linewidth]{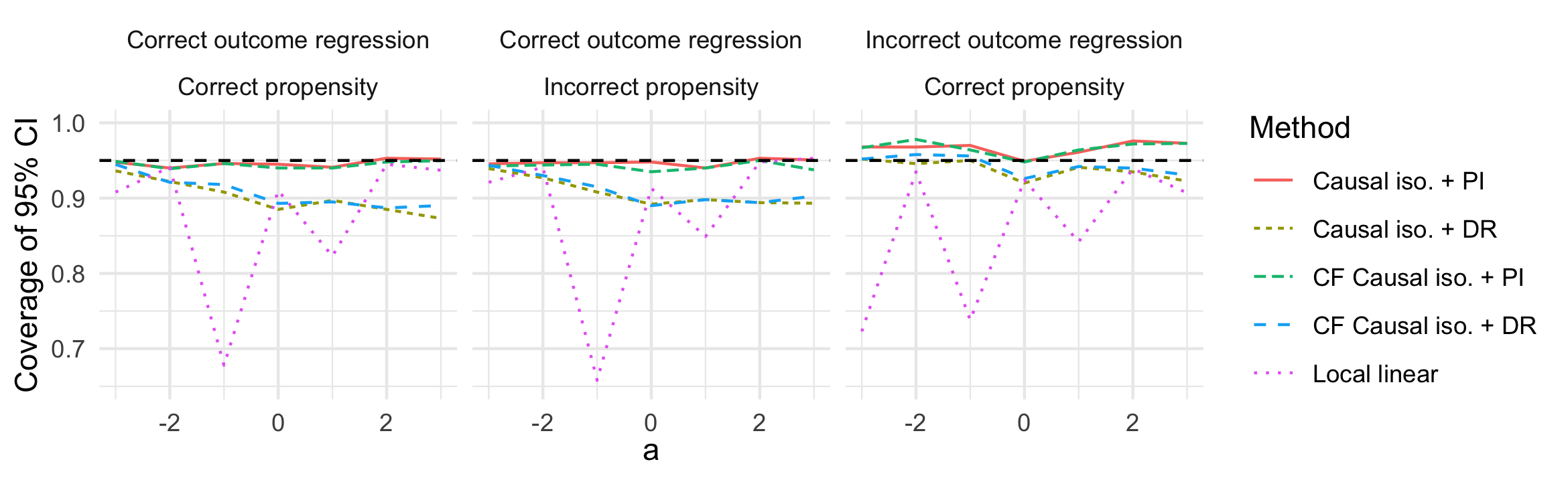}
\caption{Observed coverage of pointwise 95\% doubly-robust and plug-in confidence intervals using machine learning estimators based on simulated data including $n=1000$ observations. Columns indicate whether $\mu_n$ and $g_n$ are consistent or not. Black dashed lines indicate the nominal coverage rate. CF stands for cross-fitted; PI for plug-in; DR for doubly-robust.}
\label{fig:coverages_ml}
\end{figure}

\section{BMI and T-cell response in HIV vaccine studies}\label{bmi}

The scientific literature indicates that, for several vaccines, obesity or BMI is inversely associated with immune responses to vaccination (see, e.g.\ \citealp{sheridan2012obesity, young2013obesity, jin2015multiple, painter2015weight,liu2017influences}). Some of this literature has investigated potential mechanisms of how obesity or higher BMI might lead to impaired immune responses. For example, \cite{painter2015weight} concluded that obesity may alter cellular immune responses, especially in adipose tissue, which varies with BMI. \cite{sheridan2012obesity} found that obesity is associated with decreased CD8+ T-cell activation and decreased expression of functional proteins in the context of influenza vaccines. \cite{liu2017influences} found that obesity reduced Hepatitis B immune responses through ``leptin-induced systemic and B cell intrinsic inflammation, impaired T cell responses and lymphocyte division and proliferation." Given this evidence of a monotone effect of BMI on immune responses, we  used the methods presented in this paper to assess the covariate-adjusted relationship between BMI and CD4+ T-cell responses using data from a collection of clinical trials of candidate HIV vaccines. We present the results of our analyses here.

In \cite{jin2015multiple}, the authors compared the compared the rate of CD4+ T cell response to HIV peptide pools among low (BMI $<25$) medium ($25 \leq$ BMI $<30$) and high (BMI $\geq 30$) BMI participants, and they found that low BMI participants had a statistically significantly greater response rate than high BMI participants using Fisher's exact test. However, such a marginal assessment of the relationship between BMI and immune response can be misleading because there are known common causes, such as age and sex, of both BMI and immune response. For this reason, \cite{jin2015multiple} also performed a logistic regression of the binary CD4+ responses against sex, age, BMI (not discretized), vaccination dose, and number of vaccinations. In this adjusted analysis, they found a significant association between BMI and CD4+ response rate after adjusting for all other covariates (OR: 0.92; 95\% CI: 0.86, 0.98; $p$=0.007). However, such an adjusted odds-ratio only has a formal causal interpretation under strong parametric assumptions. As discussed in Section~\ref{sec:param}, the covariate-adjusted dose-response function $\theta_0$ is identified with the causal dose-response curve without making parametric assumptions, and is therefore of interest for understanding the continuous covariate-adjusted relationship between BMI and immune responses. 

 We note that there is some debate in the causal inference literature about whether exposures such as BMI have a meaningful interpretation in formal causal modeling. In particular, some researchers suggest that causal models should always be tied to hypothetical randomized experiments (see, e.g., \citealp{bind2017bridging}), and it is difficult to imagine a hypothetical randomized experiment that would assign participants to levels of BMI. From this perspective, it may therefore not be sensible to interpret $\theta_0(a)$ in a causal manner in the context of this example. Nevertheless, as discussed in the introduction, we contend that $\theta_0(a)$ is still of interest. In particular, it provides a meaningful summary of the relationship between BMI and immune response accounting for measured potential confounders. In this case, we interpret $\theta_0(a)$ as the  probability of immune response in a population of participants with BMI value $a$ but sex, age, vaccination dose, number of vaccinations, and study with a similar distribution to that of the entire study population.
 
We pooled data from the vaccine arms of 11 phase I/II clinical trials, all conducted through the HIV Vaccine Trials Network (HVTN). Ten of these trials were previously studied in the analysis presented in \cite{jin2015multiple}, and a detailed description of the trials are contained therein. The final trial in our pooled analysis is HVTN 100, in which 210 participants were randomized to receive four doses of the ALVAC-HIV vaccine (vCP1521). The ALVAC-HIV vaccine, in combination with an AIDSVAX boost, was found to have statistically significant vaccine efficacy against HIV-1 in the RV-144 trial conducted in Thailand \citep{rerks2009vaccination}. CD4+ and CD8+ T-cell responses to HIV peptide pools were measured in all 11 trials using validated intracellular cytokine staining at HVTN laboratories. These continuous responses were converted to binary indicators of whether there was a significant change from baseline using the method described in \cite{jin2015multiple}. We analyzed these binary responses at the first visit following administration of the last vaccine dose--either two or four weeks after the final vaccination  depending on the trial. After accounting for missing responses from a small number of participants, our analysis datasets consisted of a total of $n=439$ participants for the analysis of CD4+ responses and $n=462$ participants for CD8+ responses. Here, we focus on analyzing CD4+ responses; we present the analysis of CD8+ responses in Supplementary Material.

We assessed the relationship between BMI and T-cell response by estimating the covariate-adjusted dose-response function $\theta_0$ using our cross-fitted estimator $\theta_n^\circ$,  the local linear estimator, and the sample-splitting version of our estimator with $m=5$ splits. We adjusted for sex, age, vaccination dose, number of vaccinations, and study. We estimated  $\mu_0$ and $g_0$ as in the machine learning-based simulation study described in Section~\ref{numerical}, and constructed confidence intervals for our estimator using both the plug-in and doubly-robust estimators described above.

Figure~\ref{fig:tcell_responses} presents the estimated probability of a positive CD4+ T-cell response as a function of BMI for BMI values between the 0.05 and 0.95 quantile of the marginal empirical distribution of BMI using our estimator (left panel), the local linear estimator (middle panel), and the sample-splitting estimator (right panel). Pointwise 95\% confidence intervals are shown as dashed/dotted lines. The three methods found qualitatively similar results. We found that the change in probability of CD4+ response appears to be largest for BMI $< 20$ and BMI $>30$. We estimated the probability of having a positive CD4+ T-cell response, after adjusting for potential confounders, to be 0.52 (95\% doubly-robust CI: 0.44--0.59) for a BMI of 20, 0.47 (0.42--0.52) for a BMI of 25, 0.47 (0.32--0.62) for a BMI of 30, and 0.29 (0.12--0.47) for a BMI of 35. We estimated the difference between these probabilities for BMIs of 20 and 35 to be 0.22 (0.03--0.41).


\begin{figure}[h!]
\centering
\includegraphics[width=6.5in]{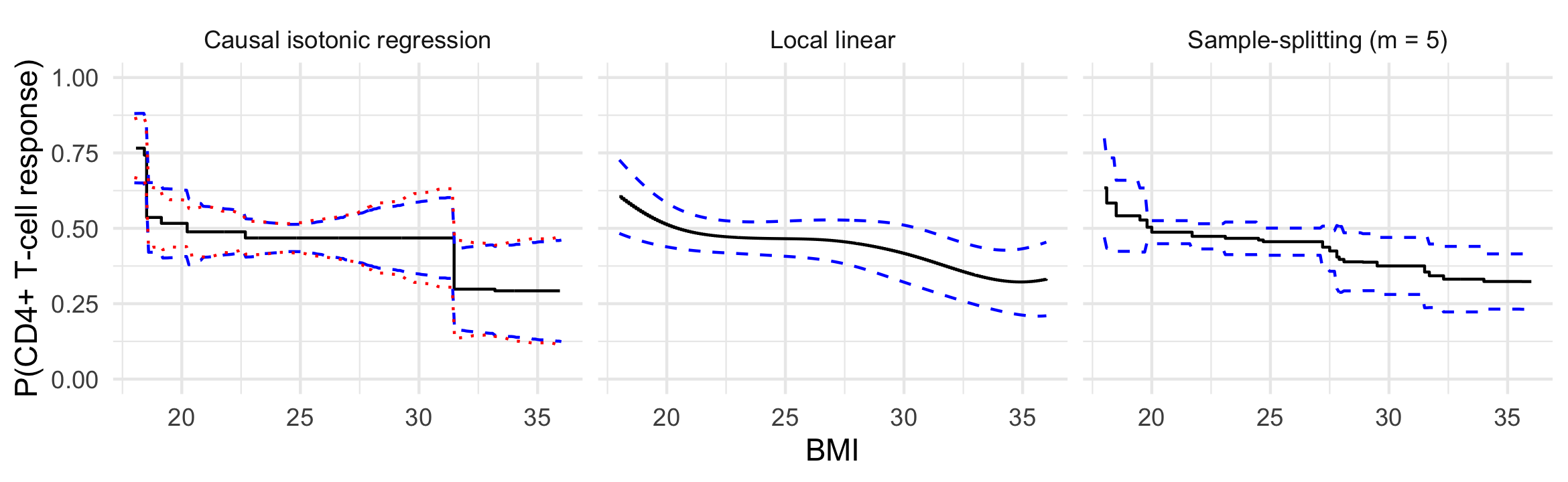}
\caption{Estimated probabilities of CD4+ T-cell response and 95\% pointwise confidence intervals as a function of BMI, adjusted for sex, age, number of vaccinations received, vaccine dose, and study. The left panel displays the estimator proposed here, the middle panel the local linear estimator of \cite{kennedy2016continuous}, and the right panel the sample-splitting version of our estimator with $m=5$ splits. In the left panel, the blue dashed lines are confidence intervals based on the plug-in estimator of the scale parameter, and the dotted lines are based on the doubly-robust estimator of the scale parameter.}
\label{fig:tcell_responses}
\end{figure}

\section{Concluding remarks}\label{discussion}


The work we have presented in this paper lies at the interface of causal inference and shape-constrained nonparametric inference, and there are natural future directions building on developments in either of these areas. Inference on a monotone causal dose-response curve when outcome data are only observed subject to potential coarsening, such as censoring, truncation, or missingness, is needed to increase the applicability of our proposed method. To tackle such cases, it appears most fruitful to follow the general primitive strategy described in \cite{westling2018monotone} based on a revised causal identification formula allowing such coarsening.

It would be useful to develop tests of the monotonicity assumption, as \cite{durot2003test} did for regression functions. Such a test could likely be developed by studying the large-sample behavior of $\|\bar\Psi_n  -\Psi_n \|_p$ under the null hypothesis that $\theta_0$ is monotone, where $\Psi_n$ and $\bar\Psi_n$ are the primitive estimator and its greatest convex minorant as defined in Section~\ref{sec:defn}. Such a result would likely permit testing with a given asymptotic size when $\theta_0$ is strictly increasing, and asymptotically conservative inference otherwise. It would also be useful to develop methods for uniform inference. Uniform inference is difficult in this setting due to the fact that $\{ n^{1/3}[\theta_n(a) - \theta_0(a)]: a \in \s{A} \}$ does not convergence weakly as a process in the space $\ell^\infty(\s{A})$ of bounded functions on $\s{A}$ to a tight limit process. Indeed, Theorem~\ref{thm:joint_conv} indicates that  $\{ n^{1/3}[\theta_n(a) - \theta_0(a)]: a \in \s{A} \}$ converges to an independent white noise process, which is not tight, so that this convergence is not useful for constructing uniform confidence bands. Instead, it may be possible to extend the work of \cite{durot2012} to our setting (and other generalized Grenander-type estimators) by  demonstrating that $\log n \left[ (n / \log n)^{1/3} \sup_{a \in \s{A}_n} |\theta_n(a) - \theta_0(a)| / \alpha_0 - c_n\right]$ converges in distribution to a non-degenerate limit for some constant $\alpha_0$ depending upon $P_0$, a deterministic sequence $c_n$, and a suitable sequence of subsets $\s{A}_n$ increasing to $\s{A}$. Developing procedures for uniform inference and tests of the monotonicity assumption are important areas for future research.

An alternative approach to estimating a causal dose-response curve is to use local linear regression, as \cite{kennedy2016continuous} did. As is true in the context of estimating classical univariate functions such as density, hazard, and regression functions, there are certain trade-offs between local linear smoothing and monotonicity-based methods. On the one hand, local linear regression estimators exhibit a faster $n^{-2/5}$ rate of convergence whenever optimal tuning rates are used and the true function possesses two continuous derivatives. However, the limit distribution involves an asymptotic bias term depending on the second derivative of the true function, so that confidence intervals based on optimally-chosen tuning parameters provide asymptotically correct coverage only for a \emph{smoothed} parameter rather than the true parameter of interest. In contrast, monotonicity-constrained estimators such as the estimator proposed here exhibit an $n^{-1/3}$ rate of convergence whenever the true function is strictly monotone and possesses one continuous derivative, do not require choosing a tuning parameter, are invariant to strictly increasing transformations of the exposure, and their limit theory does not include any asymptotic bias (as illustrated by Theorem~\ref{thm:dose_response}). We note that both estimators achieve the optimal rate of convergence for pointwise estimation of a univariate function under their respective smoothness constraints. In our view, the ability to perform asymptotically valid inference using a monotonicity-constrained estimator is one of the most important benefits of leveraging the monotonicity assumption rather than using smoothing methods. This advantage was evident in our numerical studies when comparing the isotonic estimator proposed here and the local linear method of \cite{kennedy2016continuous}. Under-smoothing can be used to construct calibrated confidence intervals using kernel-smoothing estimators, but performing adequate under-smoothing in practice is challenging.

The two methods for pointwise asymptotic inference we presented require estimation of the derivative $\theta_0'(a)$ and the scale parameter $\kappa_0(a)$. We found that the plug-in estimator of $\kappa_0(a)$ had low variance but possibly large bias depending on the levels of inconsistency of $\mu_n$ and $g_n$, and that its doubly-robust estimator instead had high variance but low bias as long as either $\mu_n$ or $g_n$ is consistent. In practice, we found the low variance of the plug-in estimator to often outweigh its bias, resulting in better coverage rates for intervals based on the plug-in estimator of $\kappa_0(a)$, especially in samples of small and moderate sizes. Whether a doubly-robust estimator of $\kappa_0(a)$ with smaller variance can be constructed is an important question to be addressed in future work. We found that sample splitting with as few as $m=5$ splits provided doubly-robust coverage, and the sample splitting estimator also had smaller variance than the original estimator, at the expense of some additional bias.

It would be even more desirable to have inferential methods that do not require estimation of additional nuisance parameters or sample splitting. Unfortunately, the standard nonparametric bootstrap is not generally consistent in Grenander-type estimation settings, and although alternative bootstrap methods have been proposed, to our knowledge, all such proposals require the selection of critical tuning parameters \citep{kosorok2008bootstrap, sen2010bootstrap}. Likelihood ratio-based inference for Grenander-type estimators has proven fruitful in a variety of contexts (see, e.g. \citealp{banerjee2001ratio, groeneboom2015nonparametric}), and extending such methods to our context is also an area of significant interest in future work.

\vspace{0.1in}
\singlespacing
\bibliographystyle{chicago}
\bibliography{../../monotone_dose_response.bib}

\clearpage
\doublespacing

\section*{Supplementary material: technical results}

We will use the notation $Pf$ to refer to $\int f dP$ for any probability measure $P$ and $P$-integrable function $f$. We will denote by $\d{P}_n$ the empirical distribution based on $O_1,O_2,\ldots,O_n$, so that $\d{P}_n f := \frac{1}{n} \sum_{i=1}^n f(O_i)$. We will denote by $\d{G}_n$ the empirical process $n^{1/2}(\d{P}_n - P_0)$. Finally, we will say that $a \lesssim b$ if there exists a $c < \infty$ such that $a \leq cb$. Below, for brevity, we will refer to \cite{westling2018monotone} as WC.

Throughout the Supplementary Material, we will refer to $a_0$ as any element of $\s{A}$ at which we evaluate functions such as $\theta_0$, $\theta_n$, $\Gamma_0$ or $\Gamma_n$. We will reserve $a$ for arguments to integrands and influence functions.

\subsection*{Supporting lemmas}

Before proceeding to proofs for Theorems 1 and 2, we state three lemmas that we will use. First, we derive a first-order expansion of $\Gamma_n(a_0)$ that we will rely upon.  
We define $\phi_{\infty, a_0} := \phi_{\mu_\infty, g_\infty, a_0}$ with
\begin{align*}
\phi_{\mu, g, a_0}(y, a, w)\ :=&\ \ I_{(-\infty, a_0]}(a) \left[\frac{ y - \mu(a,w)}{g(a,w)}+ \int \mu( a, \tilde{w})  Q_0(d\tilde{w})\right]\\
&\ \ \quad+ \int_{-\infty}^{a_0}\mu(a, w)  F_0(da) - \iint_{-\infty}^{a_0} \mu(a, \tilde{w}) F_0(da) Q_0(d\tilde{w})\ ,\\
\phi_{\mu, g, a_0}'(y,a,w)\ :=&\ \ I_{(-\infty, a_0]}(a)\left[  \frac{ y - \mu(a,w)}{g(a,w)} + \int \mu( a, \tilde{w}) \,Q_0(d\tilde{w})\right] + \int_{-\infty}^{a_0} \mu(\tilde{a}, w) \, F_0(d\tilde{a})\ .
\end{align*}
and $\phi_{\infty, a_0}^*:=\phi_{\infty, a_0} - \Gamma_0(a_0)$. We also define
\begin{align*}
\gamma_{\mu, a_0}(o_i, o_j)\ :=&\ \ I_{(-\infty, a_0]}(a_i) \mu(a_i, w_j) + I_{(-\infty, a_0]}(a_j) \mu(a_j, w_i) \\
&\ \ \quad- \int \left[ I_{(-\infty, a_0]}(a_i) \mu(a_i, w) + I_{(-\infty, a_0]}(a_j)\mu(a_j, w)\right]Q_0(dw) \\
&\ \ \quad- \int_{-\infty}^{a_0} \left[ \mu(a, w_i) + \mu(a, w_j)\right]F_0(da) +  2\iint I_{(-\infty, a_0]}(a) \mu(a, w) F_0(da) Q_0(dw)\ .
\end{align*}
We then have the following first-order expansion.
\begin{lemma}
If condition (A3) holds, then $\Gamma_{n}(a_0) -  \Gamma_0(a_0) = \d{P}_n\phi_{\infty, a_0}^* + R_{n,a_0}$, where we have defined $R_{n,a_0} := R_{n,a_0,1}+R_{n,a_0,2}+R_{n,a_0,3}$ with
\begin{align*}
R_{n,a_0,1}\ &:=\  \iint_{-\infty}^{a_0} \left[\mu_n(a, w) - \mu_0(u, w)\right] \left[ 1 - \frac{g_0(a, w)}{g_n(a,w)}\right] F_0(da) Q_0(dw)\ ,\\
R_{n,a_0,2}\ &:=\ (\d{P}_n - P_0)(\phi_{\mu_n, g_n, a_0}'  - \phi_{\mu_{\infty}, g_{\infty}, a_0}')\ , \\
R_{n,a_0,3}\ &:=\ \frac{1}{2n^2} \sum_{i \neq j} \gamma_{\mu_n, a_0}(O_i, O_j)+\frac{1}{2n^{3/2}}\d{G}_n \gamma_{\mu_n, a_0}+ \frac{1}{n}E_{0} \left[I_{(-\infty, a_0]}(A) \mu_n(A, W) \left[ 1- \frac{1}{g_0(A, W)} \right]\right].
\end{align*}
\end{lemma}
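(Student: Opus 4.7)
The statement is an exact algebraic identity under condition (A3), so the plan is a deliberate bookkeeping computation organized around three elementary moves: separate the $V$-statistic correction coming from plug-in of $F_n$ and $Q_n$, linearize the nuisance estimators around their limits, and exploit the doubly-robust structure of the EIF to kill $P_0\phi^*_{\infty,a_0}$.

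First, I would observe that $\mathbb{P}_n\phi'_{\mu_n,g_n,a_0}$ almost reproduces $\Gamma_n(a_0)$, differing only in that the outer double integral uses $F_n\otimes Q_0$ and $F_0\otimes Q_n$ in place of $F_n\otimes Q_n$. Collecting these via the identity $F_nQ_n - F_nQ_0 - F_0Q_n + F_0Q_0 = (F_n-F_0)(Q_n-Q_0)$, and then invoking the constant relation $\phi^*_{\mu,g,a_0} = \phi'_{\mu,g,a_0} - \Gamma_\mu(a_0) - \Gamma_0(a_0)$ between the three influence-function variants, yields the clean starting identity
\begin{equation*}
\Gamma_n(a_0) - \Gamma_0(a_0) \;=\; \mathbb{P}_n\phi^*_{\mu_n,g_n,a_0} + V_n, \qquad V_n := \iint_{-\infty}^{a_0}\mu_n(a,w)(F_n-F_0)(da)(Q_n-Q_0)(dw).
\end{equation*}

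Second, I would linearize the empirical-mean term by writing
\begin{equation*}
\mathbb{P}_n\phi^*_{\mu_n,g_n,a_0} \;=\; \mathbb{P}_n\phi^*_{\infty,a_0} \;+\; (\mathbb{P}_n - P_0)(\phi^*_{\mu_n,g_n,a_0} - \phi^*_{\infty,a_0}) \;+\; P_0(\phi^*_{\mu_n,g_n,a_0} - \phi^*_{\infty,a_0}).
\end{equation*}
Since $\phi^*$ and $\phi'$ differ by a $(y,a,w)$-independent (though $\mu$-dependent) constant, the middle term equals $R_{n,a_0,2}$. For the bias term, a standard double-robustness calculation using $E_0[Y\mid A,W]=\mu_0$ and $\pi_0 = g_0 f_0$ gives $P_0\phi^*_{\mu,g,a_0} = \iint_{-\infty}^{a_0}(\mu-\mu_0)(1 - g_0/g)\,F_0(da)\,Q_0(dw)$, so plugging in $(\mu,g)=(\mu_n,g_n)$ recovers $R_{n,a_0,1}$. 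Condition (A3) then forces $P_0\phi^*_{\infty,a_0} = 0$: on each of $\s{S}_1,\s{S}_2,\s{S}_3$ at least one of the factors $\mu_\infty-\mu_0$ or $1-g_0/g_\infty$ is identically zero, and because $g_0$ is bounded away from zero, full $P_0$-measure of $\s{S}_1\cup\s{S}_2\cup\s{S}_3$ transfers to full $(F_0\times Q_0)$-measure.

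Third, I would identify $V_n$ with $R_{n,a_0,3}$ via a $V$-statistic decomposition. Expanding $(F_n-F_0)(Q_n-Q_0)$ by linearity and collecting gives $\frac{1}{n^2}\sum_{i,j}\gamma_{\mu_n,a_0}(O_i,O_j) = 2V_n$; in other words $\gamma_{\mu_n,a_0}$ is exactly the symmetric, doubly Hoeffding-centered kernel naturally attached to the bilinear form $V_n$. Splitting this double sum into its off-diagonal ($i\neq j$) and diagonal ($i=j$) parts immediately produces the first summand of $R_{n,a_0,3}$. For the diagonal contribution, the single-argument function $o \mapsto \gamma_{\mu_n,a_0}(o,o)$ has $P_0$-mean equal to $2E_0[I_{(-\infty,a_0]}(A)\mu_n(A,W)(1 - 1/g_0(A,W))]$, as seen again by the change of measure $\pi_0=g_0 f_0$; writing $\mathbb{P}_n = P_0 + n^{-1/2}\mathbb{G}_n$ on that diagonal sum then delivers the remaining two summands of $R_{n,a_0,3}$ exactly.

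The main obstacle is purely notational: juggling the three variants $\phi,\phi',\phi^*$ (and the attendant constants $\Gamma_\mu(a_0), \Gamma_0(a_0)$), and recognizing the doubly-centered kernel $\gamma_{\mu_n,a_0}$ already encoded inside $V_n$. No probabilistic input is needed beyond the classical doubly-robust bias identity and the elementary diagonal/off-diagonal split of a second-order $V$-statistic.
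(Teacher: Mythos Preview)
Your proposal is correct and follows essentially the same approach as the paper. The paper starts from $\Gamma_n(a_0)=\d{P}_n\phi_{n,a_0}$ (where $\phi_{n,a_0}$ has $F_n,Q_n$ baked in) and then adds and subtracts to extract the three remainders simultaneously; you instead first peel off the $V$-statistic piece $V_n$ to reach $\Gamma_n(a_0)-\Gamma_0(a_0)=\d{P}_n\phi^*_{\mu_n,g_n,a_0}+V_n$, and then linearize $\d{P}_n\phi^*_{\mu_n,g_n,a_0}$ --- the same three ingredients (doubly-robust bias identity, empirical-process difference, diagonal/off-diagonal split of the $V$-statistic) appear in both, just in a different order, and your explicit note that $P_0$-full measure of $\s{S}_1\cup\s{S}_2\cup\s{S}_3$ transfers to $(F_0\times Q_0)$-full measure via positivity of $g_0$ is a detail the paper leaves implicit.
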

\begin{proof}
We define
\begin{align*}
\phi_{n,a_0}(y, a, w)\ :=&\ \  I_{(-\infty, a_0]}(a) \left[  \frac{ y - \mu_n(a,w)}{g_n(a,w)}+ \int \mu_n( a, \tilde{w}) Q_n(d\tilde{w})\right]\\
&\ \ \quad+ \int_{-\infty}^{a_0} \mu_n(a, w) F_n(da) -  \iint_{-\infty}^{a_0} \mu_n(a, \tilde{w}) F_n(da) Q_n(d\tilde{w})\ ,
\end{align*}
so that $\Gamma_{n}(a_0) = \d{P}_n \phi_{n,a_0}$. By (A3), we have that
\[ P_0 \phi_{\infty,a_0}  = \iint_{-\infty}^{a_0}\left[\mu_{\infty}(a,w) - \mu_0(a,w)\right] \left[1 - \frac{g_0(a,w)}{g_{\infty}(a,w)}\right] F_0(da) \, Q_0(dw) + \Gamma_0(a_0) = \Gamma_0(a_0)\ . \]
Thus, we have the expansion $\Gamma_{n}(a_0) -  \Gamma_0(a_0) = \d{P}_n\phi_{\infty, a_0}^* + R_{n,a_0}$ for $R_{n,a_0} := (\d{P}_n - P_0)(\phi_{n,a_0} - \phi_{\infty, a_0})+ P_0 \phi_{n,a_0} - \Gamma_0(a_0)$. By adding and subtracting terms and rearranging, we can write  $R_{n,a_0}$ as follows:
\begin{align*}
R_{n,a_0} &= \iiint  I_{(-\infty, a_0]}(a) \left[  \frac{ y - \mu_n(a,w)}{g_n(a,w)} - \frac{ y - \mu_\infty(a,w)}{g_\infty(a,w)} \right] (\d{P}_n - P_0)(dy, da, dw) \\
&\qquad + \iint_{-\infty}^{a_0} \mu_n(a,w) \left[F_n(da) Q_n(dw) - F_0(da) Q_0(dw)\right]\\
&\qquad -  \iint_{-\infty}^{a_0} \mu_\infty(a,w) \left[F_n(da) Q_0(dw) +F_0(da) Q_n(dw)  - 2F_0(da) Q_0(dw)\right] \\
&\qquad + \iint_{-\infty}^{a_0} \left\{ \left[\mu_0(a,w) - \mu_n(a,w)\right] \frac{g_0(a,w)}{g_n(a,w)} +  \left[\mu_n(a,w) - \mu_0(a,w)\right] \right\}  F_0(da) Q_0(dw) \\
&= \iiint  I_{(-\infty, a_0]}(a) \left[  \frac{ y - \mu_n(a,w)}{g_n(a,w)} - \frac{ y - \mu_\infty(a,w)}{g_\infty(a,w)} \right] (\d{P}_n - P_0)(dy, da, dw) \\
&\qquad + \iint_{-\infty}^{a_0} \mu_n(a,w) \left[F_n(da) Q_n(dw)  - F_n(da) Q_0(dw) - F_0(da) Q_n(dw) + F_0(da) Q_0(dw)\right]\\
&\qquad +  \iint_{-\infty}^{a_0}\left[   \mu_n(a,w)  - \mu_\infty(a,w)\right] \left[F_n(da) Q_0(dw) +F_0(da) Q_n(dw)  - 2F_0(da) Q_0(dw)\right] \\
&\qquad + \iint_{-\infty}^{a_0} \left[\mu_n(a,w) - \mu_0(a,w)\right] \left[ 1- \frac{g_0(a,w)}{g_n(a,w)} \right] F_0(da) Q_0(dw) \ .
\end{align*}
The sum of the first and third lines in the preceding display can be expressed as $(\d{P}_n - P_0)(\phi_{\mu_n, g_n, a_0}'  - \phi_{\mu_{\infty}, g_{\infty}, a_0}')$. Therefore, we can decompose the remainder term $R_{n,a_0}$ into $R_{n,a_0,1}+R_{n,a_0,2}+R_{n,a_0,3}$ as claimed, where
\begin{align*}
R_{n,a_0,3}\ :=\ \iint_{-\infty}^{a_0} \mu_{n}(a, w)(F_n - F_0)(da) (Q_n - Q_0)(dw)\ .
\end{align*}
Furthermore, $R_{n,a_0,3}$ can be rewritten as claimed by adding and subtracting terms.
\end{proof}

Lemma~\ref{lemma:marginal_class} below indicates that the entropy of a uniformly bounded class over a product space, when marginalized over one component of the product space with respect to a fixed probability measure, is bounded above by the entropy of the original class.
\begin{lemma}\label{lemma:marginal_class}
Let $\s{F}$ be a uniformly bounded class of functions $f : \s{Z}_1 \times \s{Z}_2 \to \d{R}$, with $|f| \leq K < \infty$ for all $f \in \s{F}$. Let $R$ be a fixed probability measure on $\s{Z}_2$, and define $\s{F}^* := \{ z_1 \mapsto \int f(z_1, z_2)R(dz_2) : f \in \s{F}\}$. Then, we have that
\[\sup_{Q} N(\varepsilon K, \s{F}^*, L_2(Q))\ \leq\ \sup_{Q} N(\varepsilon K/2, \s{F}, L_2(Q))\ .\]
\end{lemma}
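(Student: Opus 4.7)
The strategy is to compare $L_2(Q)$-norms on $\s{F}^*$ with $L_2(Q\otimes R)$-norms on $\s{F}$ via Jensen's inequality, and then to discretize the product measure $Q\otimes R$ by an empirical approximation. First I would apply Cauchy--Schwarz (equivalently, Jensen with the convex map $x\mapsto x^2$): for any probability measure $Q$ on $\s{Z}_1$ and any $f_1,f_2\in\s{F}$,
\[\|f_1^* - f_2^*\|_{L_2(Q)}^2 = \int\left[\int (f_1 - f_2)(z_1,z_2)\,R(dz_2)\right]^2 Q(dz_1) \leq \|f_1 - f_2\|_{L_2(Q\otimes R)}^2.\]
Consequently, marginalization turns any $\eta$-cover of $\s{F}$ in $L_2(Q\otimes R)$ into an $\eta$-cover of $\s{F}^*$ in $L_2(Q)$ of the same cardinality, so $N(\eta,\s{F}^*,L_2(Q)) \leq N(\eta,\s{F},L_2(Q\otimes R))$.

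The main obstacle is that $Q\otimes R$ need not be a finite discrete measure even when $Q$ is, so the right-hand side above is not directly controlled by the discrete uniform entropy of $\s{F}$. I would handle this through an empirical approximation argument. Set $N_* := \sup_{Q'} N(\eta/2,\s{F},L_2(Q'))$, where the supremum ranges over finite discrete probability measures $Q'$ on $\s{Z}_1\times\s{Z}_2$, and suppose for contradiction that $N(\eta,\s{F},L_2(Q\otimes R)) > N_*$. Since the covering number is bounded by the packing number at the same scale, this produces elements $f_1,\ldots,f_{N_*+1}$ of $\s{F}$ that are pairwise strictly more than $\eta$ apart in $L_2(Q\otimes R)$. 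Let $\rho_m$ denote the empirical measure of $m$ i.i.d.\ draws from $Q\otimes R$. Since each $(f_i-f_j)^2$ is bounded by $4K^2$, the strong law of large numbers applied to the finitely many pairs yields a realization for which $\|f_i-f_j\|_{L_2(\rho_m)} > \eta$ for every pair $i\neq j$ as soon as $m$ is large enough. The elementary inequality $D(\eta,\cdot,d)\leq N(\eta/2,\cdot,d)$ between packing and covering numbers then gives $N(\eta/2,\s{F},L_2(\rho_m)) \geq N_*+1$, contradicting the definition of $N_*$ since $\rho_m$ is a finite discrete probability measure.

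Combining the two reductions with $\eta=\varepsilon K$ and taking the supremum over finite discrete $Q$ on $\s{Z}_1$ yields
\[\sup_Q N(\varepsilon K,\s{F}^*,L_2(Q)) \leq \sup_{Q'} N(\varepsilon K/2,\s{F},L_2(Q')),\]
which is precisely the statement. The factor-of-two loss in the covering radius arises solely from the packing-to-covering conversion in the contradiction step, and matches the gap between $\varepsilon K$ and $\varepsilon K/2$ in the inequality to be proved. The uniform boundedness of $\s{F}$ by $K$ is used only to justify applying the strong law of large numbers to bounded summands and to normalize the radii in the covering numbers.
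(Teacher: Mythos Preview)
Your argument is correct and more self-contained than the paper's. The paper's proof is a one-line citation: it notes that the statement follows immediately from Lemma~5.2 of \cite{van2006survival} with $r=s=t=2$. Your route instead builds the inequality from first principles: Jensen's inequality gives $\|f_1^*-f_2^*\|_{L_2(Q)}\le\|f_1-f_2\|_{L_2(Q\otimes R)}$, and then the empirical-approximation/packing argument handles the passage from the possibly non-discrete product measure $Q\otimes R$ back to a discrete measure, which is exactly where the factor of two in the radius enters via $D(\eta,\cdot)\le N(\eta/2,\cdot)$. The only implicit assumption worth making explicit is that $N_*:=\sup_{Q'}N(\varepsilon K/2,\s{F},L_2(Q'))$ is finite, since otherwise the inequality is trivial and the contradiction step is vacuous; with $N_*<\infty$ the finitely-many-pairs SLLN argument goes through as you describe. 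What your approach buys is transparency about the source of the factor-of-two loss, at the cost of a slightly longer argument; the paper's citation is shorter but opaque on this point.
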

\begin{proof}
The statement follows immediately from Lemma 5.2 of \cite{van2006survival} by taking $r=s=t=2$.
\end{proof}

The final lemma concerns so-called \emph{degenerate U-processes}, and is a slight extension of Theorem 6 of \cite{nolan1987uprocess}. A $P_0$-degenerate $U$-process for a class of functions $\s{F}$ is defined as a sum of the form $\left\{S_n(f) : f \in \s{F}\right\}$, where 
\[ S_n(f) := \sum_{1 \leq i \neq j \leq n} f(O_i, O_j)\ ,\]
and where each $f \in \s{F}$ is a function from $\s{O} \times \s{O} \to \d{R}$ satisfying that: (i) $f$ is symmetric in its arguments, meaning that $f(o, \tilde{o}) = f(\tilde{o}, o)$ for all $o, \tilde{o} \in \s{O}$, and (ii) $\int f(o, \tilde{o})P_0(d\tilde{o}) = 0$ for all $o \in \s{O}$. For such processes, we have the following result.
\begin{lemma}\label{lemma:u_process}
Suppose $\left\{S_n(f) : f \in \s{F}\right\}$ be a $P_0$-degenerate $U$-process. If $F$ is an envelope function for $\s{F}$, then we have that
\[ \frac{1}{[n(n-1)]^{1/2}}E_0\left[ \sup_{f \in \s{F}} |S_n(f) | \right]\ \lesssim\ \|F \|_{P_0 \times P_0, 2} \int_0^1 \left[ 1 + \log  \sup_Q N(\varepsilon \|F\|_{Q, 2},\s{F}, L_2(Q)) \right] d\varepsilon  \ .\]
\end{lemma}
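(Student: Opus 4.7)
The approach is to follow the classical decoupling--symmetrization--chaining pipeline for degenerate $U$-processes and then reformulate the resulting bound in terms of the uniform entropy. First, I would apply the de la Peña decoupling inequality for degenerate $U$-statistics to pass to an independent copy: letting $O_1', O_2', \ldots, O_n'$ denote an independent copy of the sample, there is a universal constant $c$ such that
\[E_0\left[\sup_{f \in \s{F}} |S_n(f)|\right] \leq c \cdot E_0\left[\sup_{f \in \s{F}} \bigg|\sum_{1\leq i \neq j \leq n} f(O_i, O_j')\bigg|\right].\]
Next, since $P_0$-degeneracy gives $\int f(o, \tilde o) P_0(d\tilde o) = 0$ for all $o$ (and by symmetry of $f$), the standard two-step Rademacher symmetrization yields
\[E_0\left[\sup_{f \in \s{F}} \bigg|\sum_{i \neq j} f(O_i, O_j')\bigg|\right] \lesssim E_0\left[\sup_{f \in \s{F}} \bigg|\sum_{i \neq j} \varepsilon_i \varepsilon_j' f(O_i, O_j')\bigg|\right],\]
where $(\varepsilon_i)$ and $(\varepsilon_j')$ are independent Rademacher sequences, independent of the data.

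Conditionally on the data $(O_i, O_j')$, the process $f \mapsto \sum_{i \neq j} \varepsilon_i \varepsilon_j' f(O_i, O_j')$ is a Rademacher chaos of order two indexed by $\s{F}$. The key ingredient is the chaining bound for such chaos (Arcones--Giné; see also de la Peña--Giné, Chapter~5), which, unlike the sub-Gaussian case, produces an entropy integral with $\log N$ rather than $[\log N]^{1/2}$ because a homogeneous Rademacher chaos of order two has sub-exponential rather than sub-Gaussian tails. Concretely, setting $\d{P}_n$ and $\d{P}_n'$ for the empirical measures associated with $(O_i)$ and $(O_j')$, the conditional chaining bound gives
\[E_\varepsilon\left[\sup_{f \in \s{F}} \bigg|\sum_{i \neq j} \varepsilon_i \varepsilon_j' f(O_i, O_j')\bigg|\right] \lesssim n \cdot \|F\|_{\d{P}_n \otimes \d{P}_n', 2} \int_0^1 \log N\left(\varepsilon \|F\|_{\d{P}_n \otimes \d{P}_n', 2}, \s{F}, L_2(\d{P}_n \otimes \d{P}_n')\right) d\varepsilon,\]
with an additive term involving the diameter that is absorbed by the leading ``$1 +$'' in the integrand of the target bound.

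To finish, I would bound the empirical entropy by the uniform entropy, pulling the supremum over discrete $Q$ outside. By Jensen's inequality applied to the concave function $x \mapsto x^{1/2}$, one has $E_0[\|F\|_{\d{P}_n \otimes \d{P}_n', 2}] \leq \|F\|_{P_0 \otimes P_0, 2}$, and a change of variables $\varepsilon \mapsto \varepsilon \|F\|_{Q, 2}$ re-normalizes the integral so that the uniform entropy appears in its envelope-normalized form. Dividing by $[n(n-1)]^{1/2}$ and collecting constants produces the stated bound.

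The principal obstacle is the chaining step for the second-order Rademacher chaos: verifying the non-square-root dependence on the entropy (which is what makes this distinct from the ordinary empirical process bound) requires exploiting the sub-exponential concentration of such chaos and a careful generic chaining argument that handles the fact that the natural metric is not a single pseudo-norm but combines an $\ell^\infty \to \ell^2$ operator-type norm and a Hilbert--Schmidt-type norm. Ensuring the bound can be cleanly stated in terms of a single uniform $L_2(Q)$ covering number (the ``slight extension'' of Nolan and Pollard) requires controlling the operator norm component by the Hilbert--Schmidt component, which is immediate here because $\s{F}$ is uniformly bounded by the envelope $F$.
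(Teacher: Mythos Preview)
Your proposal is correct, but it takes a genuinely different route from the paper. The paper's proof is essentially a two-step argument: it invokes Theorem~6 of Nolan and Pollard (1987) as a black box to obtain
\[ \frac{1}{[n(n-1)]^{1/2}}E_0\left[ \sup_{f \in \s{F}} |S_n(f) | \right]\ \lesssim\ E_0\left[ \vartheta_n + \tau_n J_n(\vartheta_n / \tau_n)\right], \]
where $\d{T}_n$ is the empirical measure on pairs, $\tau_n = \|F\|_{\d{T}_n,2}$, $\vartheta_n = \tfrac{1}{4}\sup_{f \in \s{F}}\|f\|_{\d{T}_n,2}$, and $J_n$ is the random $\log N$ entropy integral in the envelope-normalized $L_2(\d{T}_n)$ metric. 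The remaining manipulations are to (i) bound the random covering numbers by the uniform ones, (ii) use $\vartheta_n \leq \tau_n$ to extend the integration range to $[0,1]$ and absorb the diameter term into the ``$1+$'', and (iii) apply Jensen to obtain $E_0(\tau_n) \leq \|F\|_{P_0 \times P_0,2}$. Your decoupling--symmetrization--chaining pipeline is essentially a re-derivation of the Nolan--Pollard maximal inequality itself (in the later Arcones--Gin\'e / de la Pe\~na--Gin\'e style), after which you perform the same uniform-entropy and Jensen steps. Your route is considerably more work but is self-contained and makes transparent \emph{why} the entropy appears as $\log N$ rather than $[\log N]^{1/2}$; the paper's route is much shorter but buries that mechanism inside the citation.
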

\begin{proof}
We let $\d{T}_n f := \tfrac{1}{n(n-1)}\sum_{i \neq j} f(O_i, O_j)$, and also define $\vartheta_n := \tfrac{1}{4} \sup_{f \in \s{F}}\| f\|_{\d{T}_n, 2}$, $\tau_n := \| F \|_{\d{T}_n,2}$ and $J_n(s) := \int_0^s \log N(\varepsilon, \s{F}, d_{\d{T}_n, 2, F})\, d\varepsilon$, where 
\[ d_{\d{T}_n, 2, F}(f, g) := \left[ \frac{\d{T}_n (f - g)^2}{\d{T}_n F ^2} \right]^{1/2} = \frac{\| f - g \|_{\d{T}_n, 2}}{\|F\|_{\d{T}_n,2}} \ .\] 
Theorem 6 of \cite{nolan1987uprocess} then states that
\[ \frac{1}{[n(n-1)]^{1/2}}E_0\left[ \sup_{f \in \s{F}} |S_n(f) | \right]\ \lesssim\ E_0\left[ \vartheta_n + \tau_n J_n(\vartheta_n / \tau_n)\right] \ . \]
Now, we note that 
\[J_n(s)\ =\ \int_0^s \log N(\varepsilon \|F\|_{\d{T}_n, 2},\s{F}, L_2(\d{T}_n))\, d\varepsilon\ \leq\ \int_0^s\sup_Q \log N(\varepsilon \|F\|_{Q, 2},\s{F}, L_2(Q))\, d\varepsilon
\ ,\]
where the supremum is taken over all finite, discrete $Q$ such that $Q F > 0$. Next, since $\vartheta_n \leq \tau_n$, we have
\[  E_0\left[ \vartheta_n + \tau_n J_n(\vartheta_n / \tau_n)\right]\ \leq\ E_0\left(\tau_n\right)\left[ 1 +  \int_0^1\sup_Q \log N(\varepsilon \|F\|_{Q, 2},\s{F}, L_2(Q))\, d\varepsilon \right] .\]
By Jensen's inequality, we have that $E_0\left(\tau_n\right) \leq \| F \|_{P_0 \times P_0, 2}$, which then implies the claimed result.

\end{proof}

\subsection*{Proof of Theorem 1}

We use Theorem 1 of WC for both the pointwise and uniform consistency statements. Since $F_n$ is the empirical distribution function, $\sup_{a_0 \in \s{A}}| F_n(a_0) - F_0(a_0)| \inprob 0$ by the Glivenko-Cantelli Theorem. Hence, we only need to show that $\sup_{a_0 \in \s{A}}|\Gamma_{n}(a_0) - \Gamma_0(a_0)| \inprob 0$. 

We first establish that $\{ \phi_{\infty, a_0}^* : a_0 \in \s{A}\}$ is a $P_0$-Donsker class. The class $\{o\mapsto I_{(-\infty, a_0]}(a): a_0 \in \s{A}\}$ is a VC class and hence also $P_0$-Donsker. Since $\mu_{\infty}$ is a bounded, fixed function, $\{o\mapsto I_{(-\infty, a_0]}(a) \mu_{\infty}(a, w) : a_0 \in \s{A}\}$ is also $P_0$-Donsker, which implies that $\{ o\mapsto\int_{-\infty}^{a_0} \mu_{\infty}(a, w) F_0(da) : a_0 \in \s{A}\}$ is $P_0$-Donsker by Lemma~\ref{lemma:marginal_class}. Hence, by the permanence properties of Donsker classes, we find that $\{ \phi_{\infty, a_0}^* : a_0 \in \s{A}\}$ is a $P_0$-Donsker class and thus that $\sup_{a_0 \in \s{A}} |\d{P}_n \phi_{\infty,a_0}^*| = \bounded(n^{-1/2})$.

We first focus on studying remainder term $R_{n,a_0,1}$, which can be uniformly bounded by
\begin{align*}
\sup_{a_0 \in\s{A}}|R_{n,a_0, 1}|\ \leq&\ \ \iint_{\s{S}_1}\left|\mu_{n}(a, w) - \mu_{\infty}(u, w)\right| \left| 1 - \frac{g_0(a, w)}{g_{n}(a,w)}\right| F_0(da) Q_0(dw) \\
&\ \ \quad+ \iint_{\s{S}_2}\left|\mu_{n}(a, w) - \mu_0(u, w)\right| \left| 1 - \frac{g_{\infty}(a, w)}{g_{n}(a,w)}\right| F_0(da)Q_0(dw)\\
&\ \ \quad+ \iint_{\s{S}_3}\left|\mu_{n}(a, w) - \mu_\infty(u, w)\right| \left| 1 - \frac{g_{\infty}(a, w)}{g_{n}(a,w)}\right| F_0(da)Q_0(dw)\\
\leq&\ \ K_1^{-1}\left[P_0(\mu_{n} - \mu_{\infty})^2 P_0(1 - g_0/g_{n})^2\right]^{1/2} +K_1^{-1}\left[P_0(\mu_{n} - \mu_{0})^2 P_0(1 - g_{\infty}/g_{n})^2\right]^{1/2} \\
&\ \ \quad+K_1^{-1}\left[P_0(\mu_{n} - \mu_{\infty})^2 P_0(1 - g_{\infty}/g_{n})^2\right]^{1/2}\ .
\end{align*}
By assumption, $P_0(\mu_{n} - \mu_{\infty})^2 = \fasterthan(1)$, and since $g_{n}$ is eventually bounded uniformly above and away from zero almost surely,  $P_0(1 - g_{\infty}/g_{n})^2 = \fasterthan(1)$ as well. Also, $ P_0(1 - g_0/g_{n})^2 = \bounded(1)$ and $P_0(\mu_{n} - \mu_{0})^2 = \bounded(1)$ since $\mu_{n}$, $g_{n}$, $\mu_0$ and $g_0$ are all bounded for $n$ large enough. Hence, $\sup_{a_0 \in\s{A}}|R_{n,a_0, 1}|  = \fasterthan(1)$.

For the remainder term $R_{n,a_0,2}$, we define the stochastic process $\{\d{G}_n \phi_{\mu, g, a_0}' : \mu \in \s{F}_{0}, g \in \s{F}_{1}, a_0 \in \s{A} \}$. We will use  Lemma 4 of WC to establish that $\sup_{a_0 \in \s{A}} |n^{1/2}R_{n,a_0,2}| = \fasterthan(1)$. In their notation, we set $\s{U} := \s{A}$, equipped with the usual Euclidean norm, and $\s{F} = \s{F}_{0} \times \s{F}_{1}$, equipped with the product $L_2(P_0)$ semi-metric $d((\mu, g), (\tilde{\mu}, \tilde{g})) = [P_0(\mu - \tilde{\mu})^2]^{1/2} + [P_0(g-\tilde{g})^2]^{1/2}$. Application of this result requires showing that the process is uniformly asymptotically $\rho$-equicontinuous for $\rho$ the product semi-metric. This would be implied if the class $\{\phi_{\mu, g, a_0}' : \mu \in \s{F}_{0}, g \in \s{F}_{1}, a_0 \in \s{A}\}$ were $P_0$-Donsker. Note that condition (A1) implies that $\s{F}_{0}$ and $\s{F}_{1}$ are $P_0$-Donsker classes by Theorem 2.5.2 of \cite{van1996weak}. Since $\{o\mapsto I_{(-\infty, a_0]}(a) : a_0 \in \s{A}\}$ is a $P_0$-Donsker, as established above, the classes $\{o\mapsto \int I_{(-\infty, a_0]}(a) \mu( a, \tilde{w})Q_0(d\tilde{w}) : \mu \in \s{F}_{0}, a_0 \in \s{A}\}$ and $\{o\mapsto \int_{-\infty}^{a_0}\mu(a, w) F_0(da)  : \mu \in\s{F}_{0}, a_0 \in \s{A}\}$ are also $P_0$-Donsker by Lemma~\ref{lemma:marginal_class}. Since $\s{F}_{1}$ is bounded below, the class $\{o\mapsto I_{(-\infty, a_0]}(a) [y - \mu(a, w)] / g(a,w) : \mu \in \s{F}_{0}, g \in \s{F}_{1}, a_0 \in \s{A}\}$ is also $P_0$-Donsker. This then yields that the original class is $P_0$-Donsker. The second requirement of Lemma 4 of WC is satisfied by assumption.

Finally, we analyze the remainder term $R_{n,a_0,3}$, which itself has three components, as decomposed before the presentation of Lemma~\ref{lemma:marginal_class}. Its second component is an ordinary empirical process involving function classes discussed in the preceding paragraph. Using these results yields the second component to be $\bounded(n^{-3/2})$.  Its third sub-component is a bias term which, in view of the uniform boundedness of $\mu_n$, is $\bounded(n^{-1})$. Its first sub-component is a $P_0$-degenerate $U$-process as defined above, to which we will apply Lemma~\ref{lemma:u_process}. The function $\gamma_{\mu_n, a}$ is contained in the class $\big\{(a_1, w_1, a_2, w_2) \mapsto \gamma_{\mu, a_0}(a_1, w_1, a_2, w_2) : a_0 \in \s{A} ,\mu \in \s{F}_{0}\big\}$. As we discuss in more detail below, by Lemma~\ref{lemma:marginal_class} and Lemma 5.1 of \cite{van2006survival}, and in view of condition (A1), this class has uniform entropy bounded up to a constant by $\varepsilon^{-V/2}-\log\varepsilon$ relative to a constant envelope. Therefore, Lemma~\ref{lemma:u_process} implies that
\[ E_0 \left[ \sup_{\mu \in \s{F}_{0}, a_0 \in \s{A}} \left| \sum_{i \neq j}\gamma_{\mu, a_0}(O_i, O_j)\right|\right] \lesssim [n (n-1)]^{1/2} \ .\]
Therefore, the first sub-component of $R_{n,a_0,3}$ is $\bounded(n^{-1})$. Thus, we have that $\sup_{a_0 \in \s{A}} |R_{n,a_0,3}| = \bounded(n^{-1})$. 

In conclusion, we have shown that conditions (A1)--(A3) imply that all three remainder terms are controlled, so that $\sup_{a_0 \in \s{A}} | \Gamma_n(a_0) - \Gamma_0(a_0) | \inprob 0$. \qed

\subsection*{Proof of Theorem 2}

We will use Theorem 4 of WC to establish Theorem 2 stated in the main text. In what follows, we verify conditions (B1)--(B5) and (A4)--(A5) of WC, which we refer to as (WC.B1), (WC.B2) and so on.

\paragraph{Conditions (WC.B1) and (WC.B2).} Define pointwise $I_{a_0, u}(a) := I_{(-\infty, a_0 + u]}(a) - I_{(-\infty, a_0]}(a)$ and $g_{a_0,u}(o) := [\phi_{\infty, a_0 + u}^*(o) - \phi_{\infty,a_0}^*(o)] - \theta_0(a) I_{a_0, u}(a)$. Since $F_0$ is by assumption strictly increasing at $a$, we then have that
\begin{align*}
g_{a_0,u}(o)\  =&\ \ I_{a_0,u}(a) \left[ \frac{y - \mu_{\infty}(a,w)}{g_{\infty}(a,w)}  + \theta_{\infty}(a) - \theta_0(a)\right] + \int I_{a_0,u}(v) \mu_{\infty}(v, w)F_0(dv) \\
&\ \ \quad-[ \Gamma_{\infty}(a_0 + u) -\Gamma_{\infty}(a_0)]- [ \Gamma_0(a_0 + u) -\Gamma_0(a_0)] + [F_0(a_0 + u) - F_0(a_0)]\ ,
\end{align*}
where we define $\Gamma_{\infty}(a_0):=\int_{-\infty}^{a_0} \theta_{\infty}(a)F_0(da)$.

The class $\s{I}_R = \{o\mapsto I_{a_0, u}(a) : |u| \leq R\}$ is a VC class of functions by a slight extension of Example 2.6.1 of \cite{van1996weak}. Its envelope function is $J_{a_0, u}:a\mapsto I_{[0,R]}(|a - a_0|)$, and hence, we have that $\sup_Q \log N(\varepsilon \| J_R\|_{Q,2}, \s{I}_R, L_2(Q)) \lesssim  -\log(\varepsilon)$ by Theorem 2.6.7 of \cite{van1996weak}. The class $\{o\mapsto \int I_{a_0,u}(v) \mu(v, w) F_0(dv)  : |u|\leq R\}$ thus satisfies the same inequality by Lemma~\ref{lemma:marginal_class}. The classes $\{\Gamma_{\infty}(a_0 + u) -\Gamma_{\infty}(a_0) : |u| \leq R\}$, $\{\Gamma_0(a_0 + u) -\Gamma_0(a_0) : |u| \leq R\}$ and $\{ F_0(a_0 + u) - F_0(a_0): |u| \leq R\}$ are sets of constants not depending on the data, bounded up to a constant by $R$ for $R$ small enough since $\Gamma_0$  and $F_0$ are continuously differentiable in a neighborhood of $a_0$. Hence, they also have uniform entropy bounded up to a constant by $-\log(\varepsilon)$. Finally, the class $\s{G}_R$ is a linear combination of the above classes, and so, by Lemma 5.1 of \cite{van2006survival}, $\s{G}_R$ satisfies that $\sup_Q \log N(\varepsilon \| G_R\|_{Q,2}, \s{G}_R, L_2(Q)) \lesssim -\log(\varepsilon)$ as well. This verifies condition (WC.B1).

Since $\Gamma_0$, $\Gamma_{\infty}$ and $F_0$ are continuously differentiable in a neighborhood of $a_0$, an envelope function for the class $\s{G}_R = \{ g_{a_0, u} : |u| \leq R\}$ is
\[ G_R:o\mapsto J_{a_0, R}(a)\left| \frac{y - \mu_{\infty}(a,w)}{g_{\infty}(a,w)}  + \theta_{\infty}(a) - \theta_0(a)\right| + \int J_{a_0,R}(v) |\mu_{\infty}(v, w)| F_0(dv)  + K_1R\]
for some $0<K_1<+\infty$. Using the triangle inequality on $\|G_R\|_{P_0, 2}$, we first note that 
\[ E_{0} \left\{ J_{a_0, R}(A) \left[\frac{Y - \mu_{\infty}(A,W)}{g_{\infty}(A,W)}\right]^2 \right\} = E_{0} \left[ J_{a_0, R}(A) \left\{\frac{\sigma_0^2(A,W) + [ \mu_\infty(A,W) - \mu_0(A,W)]^2}{g_{\infty}(A,W)^2}\right\} \right]  \leq K_2 R \]
for some $0<K_2<+\infty$ by the boundedness of $\sigma_0^2$, $1/g_{\infty}$, $\mu_{\infty}$, $\mu_0$ and the conditional density $\pi_0$ in a neighborhood of $a_0$ uniformly over almost every $w$ under $Q_0$. Similar bounds hold for the other terms, yielding that $P_0 G_R^2 \lesssim R$ for all $R$ small enough, as required.

For the second requirement of (WC.B2), we note that $0\leq G_R(o) \leq J_{a_0, R} (|y|/C_1 + C_2) + C_3R$ for all $R$ small enough and some constants $0<C_1, C_2, C_3<+\infty$. By assumption, and in view of properties of probability densities, for all $R$ small enough and for all $\varepsilon > 0$, there is a $C_0$ such that $P_0[ J_{a_0, R}(A)|Y| > C_0]< \varepsilon$. This implies that for any $\eta > 0$, $P_0 G_R^2 I_{(\eta / R, \infty)}(G_R) < \varepsilon R$ for all $R$ small enough.

\paragraph{Condition (WC.B3).} Next, we need to study the covariance $\Sigma(s, t) := P_0 [\phi_{\infty,s}^* - \theta_0(a_0)\gamma_s^*]  [\phi_{\infty,t}^* - \theta_0(a_0)\gamma_t^*]$ for $s,t$ near $a_0$, where $\gamma_s^* :o\mapsto I_{(-\infty, s]}(a) - F_0(s)$, and where we may ignore any terms in the covariance function that are continuously differentiable in a neighborhood of $(a_0,a_0)$. We thus have
\[\phi_{\infty,s}^*(o) - \theta_0(a_0)\gamma_s^*(o)\ =\   [ \phi_{\infty,s}(o) - \Gamma_{\infty}(s) - \Gamma_0(s)] - \theta_0(a_0)[ I_{(-\infty, s]}(a)- F_0(s) ] \ . \]
Since $\Gamma_{\infty}$, $\Gamma_0$ and $F_0$ are continuously differentiable in a neighborhood of $a_0$, expanding $\Sigma(s, t)$, it is straightforward to see that we may focus on 
\begin{align*}
&\hspace{-0.25in}E_{0} \left[ \phi_{\infty,s}(O)  - \theta_0(a_0)I_{(-\infty, s]}(A) \right]  \left[ \phi_{\infty,t}(O) - \theta_0(a_0)I_{(-\infty, t]}(A)  \right] \\
&=\ E_{0} \left\{ I_{(-\infty, s \wedge t]} (A) \left[ \frac{Y - \mu_{\infty}(A,W)}{g_{\infty}(A,W)}  + \theta_{\infty}(A) - \theta_0(a_0)\right]^2 \right\}\\
&\ \ \ \ +E_{0} \left\{ I_{(-\infty, s]}(A) \left[ \frac{\mu_0(A, W) - \mu_{\infty}(A,W)}{g_{\infty}(A,W)}  + \theta_{\infty}(A) - \theta_0(a_0)\right] \right\} \int_{-\infty}^t \mu_{\infty}(a, W) F_0(da) \\
&\ \ \ \ + E_{0} \left\{I_{(-\infty, t]}(A)\left[ \frac{\mu_0(A,W) - \mu_{\infty}(A,W)}{g_{\infty}(A,W)}  + \theta_{\infty}(A) - \theta_0(a_0)\right] \right\}\int_{-\infty}^s \mu_{\infty}(a, W) F_0(da) \\
&\ \ \ \ +E_{0}\left[  \int_{-\infty}^s \mu_{\infty}(a, W) \,F_0(da)\int_{-\infty}^t \mu_{\infty}(a, W) F_0(da) \right].
\end{align*}
The bottom three lines are continuously differentiable in $(s, t)$  in a neighborhood of $(a_0, a_0)$ since $\mu_{\infty}$, $\mu_0$, $g_{\infty}$ and $g_0$ are all continuous in a neighborhood of $a_0$, uniformly over almost every $w$ under $Q_0$. As such, they do not contribute to the scale parameter of the limit.

By Fubini's theorem, the first line can be rewritten as
\begin{align*}
&\int_{-\infty}^{s \wedge t} \int E_{0} \left\{ \left[ \frac{Y - \mu_{\infty}(A,W)}{g_{\infty}(A,W)}  + \theta_{\infty}(a) - \theta_0(a_0)\right]^2 \middle| A = a, W = w\right\} g_0(a, w) Q_0(dw) F_0(da) \ .
\end{align*}
In view of (A5), this satisfies (WC.B3), and so, the limit distribution is $\left[4\theta_0'(a) \tilde\kappa_0(a) / f_0(a)^2\right]^{1/3} \d{W}$, where 
\begin{align*}
\tilde\kappa_0(a) &:= E_{0} \left[ E_{0} \left\{ \left[ \frac{Y - \mu_{\infty}(A,W)}{g_{\infty}(A,W)}  + \theta_{\infty}(A) - \theta_0(A)\right]^2 \middle| A = a, W = w\right\} g_0(a, W)\right] f_0(a)\ .
\end{align*}
We can thus simplify the scale factor $[4\theta_0'(a) \tilde\kappa_0(a) / f_0(a)^2]^{1/3}$ to $[4\theta_0'(a) \kappa_0(a) / f_0(a)]^{1/3}$, where $\kappa_0(a)$ is as defined in the statement of Theorem 2.

\paragraph{Conditions (WC.B4) and (WC.B5).}  Defining 
\[K_{n,j}(\delta) := n^{2/3} \sup_{|u|\leq \delta n^{-1/3}}  \left| R_{n,a +u,j} - R_{n,a,j} \right|,\]
for each $j$, we must show that $K_{n,j}(\delta) \inprob 0$ for all $\delta$ small enough and that, for some $\beta\in(1,2)$, $\delta\mapsto\delta^{-\beta}E[ K_{n,j}(\delta)]$ is decreasing for all $\delta$ small enough and $n$ large enough. For $K_{n,1}(\delta)$, by Fubini's theorem and taking supremum bounds, for $n$ large enough and $\delta$ small enough, we find that
\begin{align*}
 K_{n,1}(\delta)\ &\lesssim\ \delta n^{1/3} \sup_{|a - a_0| \leq \varepsilon_0} E_{0} \left[ \left| \mu_{n}(a, W) - \mu_0(a, W)\right| \left|g_{n}(a,W) - g_0(a, W) \right| \right] \\
 &=\  \delta n^{1/3} \sup_{|a - a_0| \leq \varepsilon_0} E_{0} \left[ I_{\s{S}_1}(a,W)\left| \mu_{n}(a, W) - \mu_\infty(a, W)\right| \left|g_{n}(a,W) - g_0(a, W) \right| \right] \\
 &\quad\quad + \delta n^{1/3} \sup_{|a - a_0| \leq \varepsilon_0} E_{0} \left[ I_{\s{S}_2}(a,W)\left| \mu_{n}(a, W) - \mu_0(a, W)\right| \left|g_{n}(a,W) - g_\infty(a, W) \right| \right] \\
&\quad\quad + \delta n^{1/3} \sup_{|a - a_0| \leq \varepsilon_0} E_{0} \left[ I_{\s{S}_3}(a,W)\left|\mu_{n}(a, W) - \mu_\infty(a, W)\right| \left|g_{n}(a,W) - g_\infty(a, W) \right| \right]\\
&\lesssim\ \delta n^{1/3} \left[ d(\mu_{n}, \mu_\infty; a_0, \varepsilon_0, \s{S}_1) +d(g_{n},g_\infty; a_0, \varepsilon_0, \s{S}_2) +d(\mu_{n}, \mu_\infty; a_0, \varepsilon_0, \s{S}_3) d(g_{n},g_\infty; a, \varepsilon_0, \s{S}_3)\right]\ .
\end{align*}
Hence, under conditions (A4a), (A4b) and (A4c),  $K_{n,1}(\delta) \inprob 0$ for each $\delta>0$. Furthermore, $\delta\mapsto \delta^{-\beta}E \left[ K_{n,1}(\delta)\right]$ is decreasing for any $\beta\in(1,2)$ by the assumed uniform boundedness of $\mu_{n}$, $g_{n}$, $\mu_\infty$, $g_\infty$, $\mu_0$ and $g_0$.

We will use Theorem 6 of WC to establish negligibility of the empirical process term $K_{n,2}(\delta)$, which requires checking conditions (WC.C1)--(WC.C4). Let $\omega := (\mu, g)$, which is contained in the product class $\s{P} := \s{F}_{0} \times \s{F}_{1}$ almost surely for all $n$ large enough, itself equipped with the semi-metric
\[ d^*:(\omega_1, \omega_2) \mapsto d(\mu_1, \mu_2; a_0, \varepsilon_0, \s{A} \times \s{W}) +d(g_1, g_2; a_0, \varepsilon_0, \s{A} \times \s{W})\ . \]
Next, we define $\s{G}_{R} := \left\{s_{u}(\mu, g)  : | u| \leq R,\mu  \in \s{F}_{0}, g \in \s{F}_{1}\right\}$,
where
\[ s_{u}(\mu, g):o \mapsto  I_{a_0, u}(a)\left[  \frac{ y - \mu(a,w)}{g(a,w)} + \int\mu(a,w)Q_0(dw)\right] + E_{0}[I_{a_0,u}(A) \mu(A,w)]\ .\]
We let $G_{R}$ be the envelope function for $\s{G}_{R}$ obtained by combining the assumed uniform bounds on $\s{F}_{0}$ and $\s{F}_{1}$ along with the natural envelope for $I_{a_0, u}$. Specifically, we have $G_R(y,a,w) =  I_{[0, R]}(|a - a_0|)\left(C_4|y| +C_5\right)$ for some $0<C_4, C_5 < \infty$.  For all $R$ small enough and some $V < 1$, $\s{G}_R$ is a Lipschitz transformation of the following classes:
\begin{itemize}
\item $\s{F}_{0}$, which has uniform entropy bounded up to a constant by $\varepsilon^{-V}$;
\item $\s{F}_{1}$, which has uniform entropy bounded up to a constant by $\varepsilon^{-V}$;
\item  $\{a \mapsto \int \mu(a,w)Q_0(dw) : \mu \in \s{F}_{0}\}$, which has uniform entropy bounded up to a constant by $\varepsilon^{-V}$ in view of Lemma~\ref{lemma:marginal_class};
\item  $\{ I_{a_0, u} :  | u| \leq R\}$, which has polynomial covering number;
\item $\{ w\mapsto \int I_{a_0,u}(a) \mu(a,w)F_0(da) : \mu \in \s{F}_{0}, |u| \leq R\}$, which has uniform entropy bounded up to a constant by $\varepsilon^{-V}-\log \varepsilon $ in view of Lemma 5.1 of \cite{van2006survival} and our Lemma~\ref{lemma:marginal_class};
\item  $\{ w\mapsto \int I_{a_0, u}(a) \mu_{\infty}(a,w)F_0(da):  | u| \leq R\}$, which has polynomial covering number;
\item the singleton class $\{y\}$, with covering number equal to one.
\end{itemize}
Thus, by Lemma 5.1 of \cite{van2006survival}, the $L_2$ covering number of $\s{G}_{R}$ relative to $G_R$ is bounded up to a constant by $\varepsilon^{-V} + \varepsilon^{-V/2}-\log \varepsilon$. Since $V < 2$, $\int_0^1 [\log \sup_Q N(\varepsilon \| G_{R}\|_{Q,2}, \s{G}_{R}, L_2(Q))]^{1/2} d\varepsilon$ is uniformly bounded above for all $R$ small enough with probability tending to one. This establishes (WC.C1). 

Existence of the conditional variance of $Y$ given $(A, W)$ and positivity of $f_0$ in a neighborhood of $a_0$ yields that $P_0 G_R^2 \leq c R$  and that, for any $\varepsilon>0$, there exists $\varepsilon'>0$ such that $P_0[ G_R^2 I_{(\varepsilon' / R, \infty)}(G_R)] \leq \varepsilon R$ for all $R$ small enough. Hence, condition (WC.C2) is satisfied.

Turning to (WC.C3), we note that $\{ P_0\left[ s_{u}(\mu, g) - s_{v}(\mu, g)\right]^2\}^{1/2}$ is bounded above by
\begin{align*}
&\left\{ \int\! \left[\int_{a_0 + v}^{a_0 + u} \mu(a,w) \,F_0(da) \right]^2\, Q_0(dw) \right\}^{1/2}\\
&\qquad+ \left[ \int_{a_0 + v}^{a_0 + u} \iint E_0\left\{\left[  \frac{ Y - \mu(a,w)}{g(a,w)} + \int \mu(a,w)Q_0(dw)\right]^2\middle|A=a,W=w\right\}  g_0(a,w) Q_0(dw) F_0(da)\right]^{1/2},
\end{align*}
and by the finite conditional second moment of $Y$ given $(A,W)$, the boundedness of $g_0$, the uniform boundedness of $\mu$ and $g$, and the positivity of $f_0$ near $a_0$, we find that $P_0[ s_{u}(\mu, g) - s_{v}(\mu, g)]^2 \lesssim |u -v|$ for all $u, v$ in a neighborhood of 0. Similarly, we can bound $\{P_0[ s_{u}(\mu_1, g_1) - s_{u}(\mu_2, g_2)]^2\}^{1/2}$ above by
\begin{align*}
&\left\{ \int \left[\int_{a_0}^{a_0 +v} \{ \mu_1(a, w) - \mu_2(a,w)\} \, F_0(da) \right]^2 \, dQ_0(w) \right\}^{1/2} \\
&\qquad + \left\{ \int_{a_0}^{a_0 +v} \left[\int \{ \mu_1(a, w) - \mu_2(a,w) \}\, Q_0(dw) \right]^2 \, F_0(da) \right\}^{1/2} \\
&\qquad +\left[ \int_{a_0}^{a_0 + v}\!\iint E\left\{\left[\frac{Y-\mu_2(a,w)}{g_1(a,w) g_2(a,w)} \{g_2(a, w) -g_1(a,w )\} \right]^2  \middle| A=a,W=w\right\}g_0(a,w) Q_0(dw) F_0(da) \right]^{1/2} \\ 
&\qquad +\left\{ \int_{a_0}^{a_0 + v}\!\iint \left[\frac{\mu_1(a, w) -\mu_2(a,w )}{g_1(a,w) g_2(a,w)}\right]^2 Q_0(dw) F_0(da) \right\}^{1/2} .
\end{align*}
We find that, for $v$ small enough, this is bounded up to a constant by
\[ |v|^{1/2}\left\{ \sup_{|a- a_0|\leq \varepsilon_0} \left[ E_{0} \{ \mu_1(a, W) - \mu_2(a,W)\} ^2\right]^{1/2} + \sup_{|a- a_0|\leq \varepsilon_0} \left[ E_{0} \{g_1(a, W) - g_2(a,W)\}^2\right]^{1/2} \right\},\]
as required. Finally, (WC.C4) is satisfied by assumption. 

For $K_{n,3}(\delta)$, we first note that (WC.B4) has already been shown to hold in the proof of Theorem 4, since 
\[n^{2/3}\sup_{|u| \leq \delta n^{-1/3}} |R_{n, a_0 +u, 3} - R_{n, a_0, 3}|\ \leq\ 2n^{2/3} \sup_{a_0 \in \s{A}} |R_{n,a_0,3}|\ =\  \bounded(n^{-1/3})\ .\]
We verify (WC.B5) for each of the three sub-components of $K_{n,3}(\delta)$ defined by the three sub-components of $R_{n,a_0,3}$. Due to the assumed boundedness of $|\mu_n|$, the contribution of the third component is bounded for all $\delta$ small enough up to a constant (not depending on $\delta$ or $n$) by $n^{-1/3}P_0\left(|A - a| \leq \delta n^{-1/3}\right) \lesssim n^{-2/3} \delta$, which satisfies (WC.B5). For the second component, by Lemma 4 of WC, $E_0\left[\sup_{|u| \leq \delta n^{-1/3}} | \d{G}_n I_{a, u} \mu_n| \right]\lesssim \delta^{1/2}$, and so, the expectation of the second component is bounded up to a constant by $\delta^{1/2}n^{-1}$ for all $\delta$ small enough and $n$ large enough, which is also sufficient for (WC.B5).

The first component requires controlling $\sum_{i \neq j} \gamma_{\mu_n, a_0, u}^*(O_i, O_j)$, where we define
\begin{align*}
\gamma_{\mu, a_0, u}^*(o_i, o_j)\ &:=\ I_{a_0, u}(a_i) \mu(a_i, w_j) + I_{a_0,u}(a_j) \mu(a_j, w_i) \\
&\qquad - \int \left[ I_{a_0, u}(a_i) \mu(a_i, w) + I_{a_0, u}(a_j) \mu(a_j, w)\right] Q_0(dw) \\
&\qquad - \int  I_{a_0, u}(a)\left[ \mu(a, w_i) + \mu(a, w_j)\right] F_0(da)+  2\iint I_{a_0, u}(a) \mu(a, w) F_0(da) Q_0(dw) \ .
\end{align*}
The function $\gamma_{\mu_n, a_0, u}^*$  falls in the class
$ \s{H}_{\delta} := \left\{\gamma_{\mu, a_0, u}^* : |u| \leq \delta, \mu \in \s{F}_{0} \right\}$.
Thus, $\{ \sum_{i \neq j} \gamma^*(O_i, O_j): \gamma^* \in \s{H}_{\delta} \}$ is a $P_0$-degenerate $U$-process.  By a similar argument as made above, the class $H_{\delta}$ has uniform entropy $\log \sup_Q N(\varepsilon \|H_{\delta}\|_{Q,2}, \s{H}_{\delta}, L_2(Q))$ bounded up to a constant by $\varepsilon^{-V/2}-\log\varepsilon$ relative to the envelope 
\[H_{\delta}:(a_1,w_1, a_2, w_2) \mapsto 2K_{\mu}I_{[0, \delta]}(|a_1- a_0|) + 2K_{\mu}I_{[0, \delta]}(|a_2- a_0|)  + 4K_{\mu} P_0\left(|A - a_0| \leq \delta\right).\]
Since $-V/2 > -1$ and $\| H_{\delta} \|_{P_0 \times P_0,2} \lesssim \delta^{1/2}$, Lemma~\ref{lemma:u_process} yields that
\[ n^{2/3}E_0 \left[ \sup_{\gamma^* \in \s{H}_{\delta}} \left|\frac{1}{n^2} \sum_{i \neq j} \gamma^*(O_i, O_j)\right| \right] \lesssim n^{-1/3}\delta^{1/2}\]
for all $\delta$ small enough. Hence, (WC.B5) is satisfied for this $U$-process term.

\paragraph{Conditions (WC.A4) and (WC.A5).} Condition (WC.A4) is trivially satisfied since the transformation used here is the empirical distribution function. Condition (WC.A5) was established in the proof of Theorem 1 under our conditions (A1)--(A3). We have now checked all the conditions of Theorem 4 of WC and verified the stated limit distribution in the course of checking condition (WC.B3). This concludes the proof. \qed

\subsection*{Proof of Theorem~3}

We first note that $n^{1/3} \left[ \theta_n^\circ(a_1) - \theta_0(a_1)\right] > \eta_1$ and  $n^{1/3} \left[ \theta_n^\circ(a_1) - \theta_0(a_1)\right] > \eta_2$ if and only if $\theta_n^\circ(a_1) > \theta_0(a_1) + n^{-1/3} \eta_1$ and $\theta_n^\circ(a_2) > \theta_0(a_2) + n^{-1/3} \eta_2$.  By Lemma~1 of WC, this holds if and only if the set of inequalities
\begin{align*}
&\sup\argmax_{v_1 \in \s{A}} \left\{ \left[\theta_0(a_1) + n^{-1/3} \eta_1\right] F_n(v_1) - \Gamma_n^\circ(v_1)\right\} < F_n^-\left(F_n(a_1)\right)\\
&\sup\argmax_{v_2 \in \s{A}} \left\{ \left[\theta_0(a_2) + n^{-1/3} \eta_2\right] F_n(v_2) - \Gamma_n^\circ(v_2)\right\} < F_n^-\left(F_n(a_2) \right)
\end{align*} holds true. Standard manipulation of the argmax (see the proof of Theorem~3 of WC) yields that this is equivalent to the set of inequalities
\begin{align*}
&\hat{v}_{n, a_1, \eta_1}:=\sup\argmax_{v_1 \in n^{1/3}(\s{A} - a_1)} \left\{ H_{n,a_1, \eta_1}(v_1)  + \sum_{j=1}^3 S_{n,a_1,\eta_1, j}(v_1) \right\}  <n^{1/3} \left[ F_n^-\left(F_n(a_1) \right) - a_1\right]\\
&\hat{v}_{n, a_2, \eta_2} :=\sup\argmax_{v_2 \in n^{1/3}(\s{A} - a_2)} \left\{ H_{n,a_2, \eta_2}(v_2)  + \sum_{j=1}^3 S_{n,a_2,\eta_2, j}(v_2) \right\}  <n^{1/3} \left[ F_n^-\left(F_n(a_2) \right) - a_2\right], \end{align*}
where we have defined the terms
\begin{align*}
H_{n,a,\eta}(v)\ &:=\ -W_{n,a}(v) + \left[ \eta f_0(a) \right] v - \left[ \tfrac{1}{2} f_0(a) \theta_0'(a) \right] v^2; \\
W_{n,a}(v)\ &:=\ n^{2/3} \left\{ \left[ \Gamma_n^\circ(a + n^{-1/3} v) - \Gamma_n^\circ(a) \right] - \left[ \Gamma_0(a + n^{-1/3} v) - \Gamma_0(a) \right] \right\}; \\
S_{n, a, \eta, 1}(v)\ &:=\ n^{1/3} \eta \left[F_n(a + n^{-1/3} v) - F_0(a +  n^{-1/3}v) \right];\\
S_{n, a, \eta, 2}(v)\ &:=\ n^{1/3}\eta \left[ F_0(a  + n^{-1/3}v) - F_0(a) - f_0(a) (n^{-1/3}v ) \right];\\
S_{n, a, \eta, 3}(v)\ &:=\ -n^{2/3} \left[ M_{0,a}(n^{-1/3}v)- \tfrac{1}{2}f_0(x)\theta_0'(x)(n^{-1/3} v)^{2}\right];\\
M_{0,a}(u)\ &:=\ [\Gamma_0(a + u) - \theta_0(a) F_0(a + u)] - [\Gamma_0(a) - \theta_0(a)F_0(a)] \ .
\end{align*}
We have that $\sup_{|v| \leq M} |S_{n, a, \eta, 1}(v)| = \fasterthan(1)$ for $(a, \eta) \in \{ (a_1, \eta_1), (a_2, \eta_2)\}$ and any $M \in (0, \infty)$ by uniform consistency of $F_n$, and similarly for $S_{n, a, \eta, 2}$ and $S_{n, a, \eta, 3}$ using the continuous differentiability of $F_0$ and differentiability of $\theta_0$ at $a_1$ and $a_2$. See the proof of Theorem~3 of WC for additional details.

The core of the argument is to demonstrate that $W_{n,a_1}$ and $W_{n,a_2}$ converge jointly (as processes) to independent Brownian motions $W_{a_1} = \kappa_0(a_1)^{1/2}Z_1$ and $W_{a_2} = \kappa_0(a_2)^{1/2} Z_2$, where $Z_1$ and $Z_2$ are two independent standard two-sided Brownian motions originating from zero. If this holds, it would follows that \[ \left\{ \begin{pmatrix} H_{n,a_1, \eta_1}(v)  + \sum_{j=1}^3 S_{n,a_1,\eta_1, j}(v) \\ H_{n,a_2, \eta_2}(v)  + \sum_{j=1}^3 S_{n,a_2,\eta_2, j}(v) \end{pmatrix} : |v| \leq M \right\} \rightsquigarrow \left\{ \begin{pmatrix} H_{a_1, \eta_1}(v)  \\ H_{a_2, \eta_2}(v)  \end{pmatrix} : |v| \leq M \right\} \]
in $\ell^{\infty}([-M, M]) \times\ell^{\infty}([-M, M])$ for all $M\in(0,\infty)$,  where $H_{a, \eta}(v) :=  -W_{a}(v) + \left[ \eta f_0(a) \right] v - \left[ \tfrac{1}{2} f_0(a) \theta_0'(a) \right] v^2$.  An adaptation of the argmax continuous mapping theorem (i.e.\ Theorem~3.2.2 of \cite{van1996weak}) and the arguments of Theorem~3 of WC imply that under the stated conditions, \[(\hat{v}_{n, a_1, \eta_1}, \hat{v}_{n, a_2, \eta_2}) \indist (\hat{v}_{a_1, \eta_1}, \hat{v}_{a_2, \eta_2})\ ,\] where $\hat{v}_{a,\eta} := \sup\argmax_{v \in \d{R}} H_{a, \eta}(v)$. Since $H_{a_1, \eta_1}$ and $H_{a_2, \eta_2}$ are independent, so are $\hat{v}_{a_1, \eta_1}$ and $\hat{v}_{a_2, \eta_2}$, and
\begin{align*}
&P_0\left( n^{1/3} \left[ \theta_n^\circ(a_1) - \theta_0(a_1)\right] > \eta_1,\ n^{1/3} \left[ \theta_n^\circ(a_1) - \theta_0(a_1)\right] > \eta_2\right)\\
 &\qquad=\ P_0\left( \hat{v}_{n, a_1, \eta_1} < n^{1/3} \left[ F_n^-\left(F_n(a_1) \right) - a_1\right],\  \hat{v}_{n, a_2, \eta_2} < n^{1/3} \left[ F_n^-\left(F_n(a_2) \right) - a_2\right]\right) \\
 &\qquad\longrightarrow\ P_0\left( \hat{v}_{a_1, \eta_1}  < 0,\ \hat{v}_{a_2, \eta_2}  < 0\right)\ =\ P_0\left( \hat{v}_{a_1, \eta_1}  < 0\right) P_0\left( \hat{v}_{a_2, \eta_2}  < 0\right)  \ .
\end{align*}
From there, standard manipulations of Brownian motion yield the result (see, for example, the proof of Theorem~3 of WC) applied to each $a_1$ and $a_2$ separately.

We now show that $W_{n,a_1}$ and $W_{n,a_2}$ converge jointly as processes to independent Brownian motions $W_{a_1} = \kappa_0(a_1)^{1/2}Z_1$ and $W_{a_2} = \kappa_0(a_2)^{1/2} Z_2$. We note that 
\[ \sup_{|v| \leq M} \left| W_{n,a}(v) -  \d{G}_n n^{1/6}\left( \phi_{\infty, a + v n^{-1/3}}^* - \phi_{\infty, a}^*\right) \right| \inprob 0\]
 for $a \in \{a_1, a_2\}$ by our derivations in the proof of Theorem~2. Furthermore, since $F_0$ is Lipschitz in neighborhoods of $a_1$ and $a_2$, we have 
\[\sup_{|v| \leq M} \left| \d{G}_n n^{1/6}\left( \phi_{\infty, a + v n^{-1/3}}^* - \phi_{\infty, a}^*\right) -  \d{G}_n n^{1/6} \phi_{\infty, a, v n^{-1/3}}^\dagger \right|\inprob 0\]
 for $a \in \{a_1, a_2\}$, where we define $\phi_{\infty, a_0, v n^{-1/3}}^\dagger : (y, a, w) \mapsto I_{a_0, v n^{-1/3}}(a) \left[ \frac{y - \mu_{\infty}(a,w)}{g_{\infty}(a,w)} +\int \mu_\infty(a,\tilde{w}) \, Q_0(d\tilde{w})\right]$.
Now, for all $n > 2M / |a_2 - a_1|$, $\left[a_1 - v n^{-1/3}, a_1 + v n^{-1/3}\right] \cap \left[a_2 - u n^{-1/3}, a_2 + u n^{-1/3}\right] = \emptyset$ for all $u,v$ such that $|u| \leq M$ and $|v| \leq M$. Thus, for all such $n$ and $u,v$, $ \d{G}_n n^{1/6}\phi_{\infty, a_1, u n^{-1/3}}^\dagger$ and $ \d{G}_n n^{1/6}\phi_{\infty, a_2, u n^{-1/3}}^\dagger$ depend on disjoint sets of the observations $O_1, \dotsc, O_n$, which implies that they are independent. This implies that $\{ \d{G}_n n^{1/6} \phi_{\infty, a_1, v n^{-1/3}}^\dagger : |v| \leq M\}$ and $\{ \d{G}_n n^{1/6} \phi_{\infty, a_2, v n^{-1/3}}^\dagger : |v| \leq M\}$ are independent for all $n > 2M / |a_2 - a_1|$, and hence asymptotically independent for all $M \in (0, \infty)$. Furthermore, the proof of Theorem~4 of WC demonstrates that conditions WC.B1--WC.B4 and WC.A4--WC.A5 imply that the processes $\{ \d{G}_n n^{1/6} \phi_{\infty, a_1, v n^{-1/3}}^\dagger  : |v| \leq M\}$ for $a \in \{a_1, a_2\}$ converge marginally as processes in $\ell^{\infty}([-M, M])$ to $W_{a}$. Therefore, by Example 1.4.6 of VW, the two processes converge jointly to independent Brownian motions. We have thus found that
\[ \sup_{|v| \leq M} \left| W_{n,a}(v) -  \d{G}_n n^{1/6} \phi_{\infty, a, v n^{-1/3}}^\dagger \right| \inprob 0\]
 for $a \in \{a_1, a_2\}$, where $\{ \d{G}_n n^{1/6} \phi_{\infty, a_1, v n^{-1/3}}^\dagger : |v| \leq M\}$ and $\{ \d{G}_n n^{1/6} \phi_{\infty, a_2, v n^{-1/3}}^\dagger : |v| \leq M\}$  converge jointly to independent Brownian motions. Therefore, $\{ W_{n,a_1}(v) : |v| \leq M\}$ and $\{ W_{n,a_2}(v) : |v| \leq M\}$ converge jointly to this same limit. 
 \qed
 
 \subsection*{First-order expansion of cross-fitted estimator}

Next, we provide proofs of Theorems~4 and~5 for the estimator $\theta_n^\circ$, which is based upon the cross-fitted nuisance estimators $\mu_{n,v}$ and $g_{n,v}$. We recall that $\s{T}_{n,v}$ is the training set for fold $v$, that is, the subset of observations in $\{O_1, O_2,\dotsc, O_n\}$  used to estimate $\mu_{n,v}$ and $g_{n,v}$, and $\s{V}_{n,v}$ is the vector of indices of the validation set for fold $v$, that is, $\{1, 2,\dotsc, n\} \backslash \{i : O_i \in \s{T}_{n,v}\}$. We note that $\cup_{v=1}^V \s{V}_{n,v} = \{1, 2,\dotsc, n\}$ and $\s{V}_{n,v} \cap \s{V}_{n, u} = \emptyset$ for each $u,v$. We denote by $\d{P}_n^v$ the empirical measure corresponding to the observations with indices in $\s{V}_{n,v}$, and we let $Q_n^v$ and $F_n^v$ denote the marginal empirical measures of $\{ W_i : i \in \s{V}_{n,v}\}$ and $\{A_i : i \in \s{V}_{n,v}\}$.

Before proving our results, we derive a first-order expansion of $\Gamma_{n}^\circ(a_0)$ that we will rely upon. 
\begin{lemma}
If condition (A3) holds, then $\Gamma_{n}^\circ(a_0) -  \Gamma_0(a_0) = \d{P}_n\phi_{\infty, a_0}^* + R_{n,a_0}^\circ$, where $R_{n,a_0}^\circ = R_{n,a_0,1}^\circ+R_{n,a_0,2}^\circ+R_{n,a_0,3}^\circ$ for
\begin{align*}
R_{n,a_0,1}^\circ\ &:=\ \frac{1}{V}\sum_{v=1}^V \iint_{-\infty}^{a_0} \left[\mu_{n,v}(a, w) - \mu_0(u, w)\right] \left[ 1 - \frac{g_0(a, w)}{g_{n,v}(a,w)}\right] F_0(da) Q_0(dw)\ ,\\
R_{n,a_0,2}^\circ\ &:=\ \frac{1}{V}\sum_{v=1}^V (\d{P}_n^v - P_0) ( \phi_{\mu_{n,v}, g_{n,v},a_0,v}' - \phi_{\mu_\infty, g_\infty, a_0}')\ ,\ \ R^\circ_{n,a_0,3}\ :=\  \frac{1}{V}\sum_{v=1}^V R^{\circ,v}_{n,a_0,3}\ ,\\
R_{n,a_0,3}^{\circ,v}\ &:=\ \frac{1}{2N^2} \sum_{\stackrel{i, j \in \s{V}_{n,v}}{i \neq j}} \gamma_{\mu_{n,v}, a_0}(O_i, O_j)+\frac{1}{2N^{3/2}} \d{G}_n^v \gamma_{\mu_{n,v}, a_0}  + \frac{1}{N}E_{0} \left[I_{(-\infty, a_0]}(A) \mu_{n,v}(A, W) \left[ 1 - \frac{1}{g_0(A, W)} \right] \right],
\end{align*}
where $\gamma_{\mu, a_0}$ is as defined in the proof of Theorem~1.
\end{lemma}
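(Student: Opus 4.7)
The plan is to mirror the proof of the first supporting lemma fold-by-fold and then average over folds, exploiting the identity $\d{P}_n f = \frac{1}{V}\sum_{v=1}^V \d{P}_n^v f$ that holds for every fixed function $f$. To set up the fold-level expansion, I would define the fold-specific analogue of $\phi_{n,a_0}$ by
\[\phi_{n,a_0}^v(y,a,w) := I_{(-\infty, a_0]}(a)\left[\frac{y - \mu_{n,v}(a,w)}{g_{n,v}(a,w)} + \int \mu_{n,v}(a, \tilde w)\, Q_n^v(d\tilde w)\right] + \int_{-\infty}^{a_0} \mu_{n,v}(a, w)\, F_n^v(da) - \iint_{-\infty}^{a_0} \mu_{n,v}(a, \tilde w)\, F_n^v(da)\, Q_n^v(d\tilde w),\]
so that $\Gamma_n^\circ(a_0) = \frac{1}{V}\sum_{v=1}^V \d{P}_n^v \phi_{n,a_0}^v$.

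Next, for each fold $v$, I would reproduce verbatim the algebra from the first supporting lemma applied with the fold-specific nuisance estimators $(\mu_{n,v}, g_{n,v})$, the fold-specific empirical measures $(F_n^v, Q_n^v)$, and the fold-specific empirical process $\d{P}_n^v - P_0$. Under (A3), the identity $P_0 \phi_{\infty, a_0} = \Gamma_0(a_0)$ continues to hold without modification, and adding and subtracting terms yields the fold-level decomposition
\[\d{P}_n^v \phi_{n,a_0}^v - \Gamma_0(a_0) = \d{P}_n^v \phi_{\infty, a_0}^* + R_{n,a_0,1}^{\circ,v} + R_{n,a_0,2}^{\circ,v} + R_{n,a_0,3}^{\circ,v},\]
where each $R_{n,a_0,j}^{\circ,v}$ is the exact fold-$v$ analogue of $R_{n,a_0,j}$ from the first lemma, with $N$ replacing $n$ throughout the third term. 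Averaging these $V$ identities over $v$ and using $\frac{1}{V}\sum_v \d{P}_n^v \phi_{\infty,a_0}^* = \d{P}_n \phi_{\infty,a_0}^*$ yields the claimed expansion, since $R_{n,a_0,1}^\circ$, $R_{n,a_0,2}^\circ$ and $R_{n,a_0,3}^\circ = \frac{1}{V}\sum_v R_{n,a_0,3}^{\circ,v}$ are by construction the fold averages of the corresponding fold-specific remainders.

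The main obstacle will be the bookkeeping for the within-fold $V$-statistic $\frac{1}{N^2}\sum_{i,j \in \s{V}_{n,v}} I_{(-\infty, a_0]}(A_i)\mu_{n,v}(A_i, W_j)$ that appears inside $\d{P}_n^v \phi_{n,a_0}^v$: it must be split into an off-diagonal $U$-statistic piece that, after symmetrizing and centering by $\d{P}_n^v \times P_0$ and $P_0 \times \d{P}_n^v$, contributes the $\gamma_{\mu_{n,v}, a_0}$ terms $\frac{1}{2N^2}\sum_{i \neq j}\gamma_{\mu_{n,v}, a_0}(O_i, O_j)$ and $\frac{1}{2N^{3/2}} \d{G}_n^v \gamma_{\mu_{n,v}, a_0}$, plus a diagonal contribution absorbed into the $1/N$-order bias correction appearing in $R_{n,a_0,3}^{\circ,v}$. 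Apart from this $V$-statistic decomposition, the argument is routine substitution, aided by the observation that, conditionally on $\s{T}_{n,v}$, the estimators $\mu_{n,v}$ and $g_{n,v}$ are nonrandom while the validation observations $\{O_i : i \in \s{V}_{n,v}\}$ remain i.i.d.\ from $P_0$, so the algebra of the original proof goes through unchanged at the fold level.
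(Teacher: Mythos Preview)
Your proposal is correct and follows essentially the same approach as the paper: define the fold-specific influence function $\phi_{n,a_0,v}$, write $\Gamma_n^\circ(a_0) = \frac{1}{V}\sum_v \d{P}_n^v \phi_{n,a_0,v}$, invoke (A3) to get $P_0\phi_{\infty,a_0} = \Gamma_0(a_0)$, and then add and subtract terms fold-by-fold before averaging. In fact, the paper's proof is terser than yours---it simply states ``The expansion follows by adding and subtracting terms''---whereas you have spelled out the fold-level decomposition and the $V$-statistic bookkeeping for $R_{n,a_0,3}^{\circ,v}$ in more detail than the paper does.
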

\begin{proof}
We define
\begin{align*}
\phi_{n,a_0,v}(y, a, w)\ :=&\ \  I_{(-\infty, a_0]}(a) \left[  \frac{ y - \mu_{n,v}(a,w)}{g_{n,v}(a,w)}+ \int \mu_{n,v}( a, \tilde{w}) Q_n^v(d\tilde{w})\right]\\
&\ \ \quad+ \int_{-\infty}^{a_0} \mu_{n,v}(a, w) F_n^v(da) -  \iint_{-\infty}^{a_0} \mu_{n,v}(a, \tilde{w}) F_{n}^v(da) Q_{n}^v(d\tilde{w})\ ,\\
\phi_{\mu, g, a_0}(y, a, w)\ :=&\ \ I_{(-\infty, a_0]}(a) \left[\frac{ y - \mu(a,w)}{g(a,w)}+ \int \mu( a, \tilde{w})  Q_0(d\tilde{w})\right]\\
&\ \ \quad+ \int_{-\infty}^{a_0}\mu(a, w)  F_0(da) - \iint_{-\infty}^{a_0} \mu(a, \tilde{w}) F_0(da) Q_0(d\tilde{w})\ ,
\end{align*}
so that $\Gamma_{n}^\circ(a_0) = \frac{1}{V} \sum_{v=1}^V\d{P}_n^v \phi_{n,a_0,v}$. Writing $\phi_{\infty, a_0} := \phi_{\mu_\infty, g_\infty, a_0}$, in view of condition (A3), we have that
\[ P_0 \phi_{\infty,a_0}  = \iint_{-\infty}^{a_0}\left[\mu_{\infty}(a,w) - \mu_0(a,w)\right] \left[1 - \frac{g_0(a,w)}{g_{\infty}(a,w)}\right] F_0(da) \, Q_0(dw) + \Gamma_0(a_0) = \Gamma_0(a_0)\ . \]
The expansion follows by adding and subtracting terms.
%
%
\end{proof}

\subsection*{Proof of Theorem~4}

As before, we use Theorem~1 of WC for both the pointwise and uniform consistency statements. We only need to show that $\sup_{a_0 \in \s{A}}|\Gamma_n^\circ(a_0) - \Gamma_0(a_0)| \inprob 0$. 

In the proof of Theorem~1, we established that $\sup_{a_0 \in \s{A}} |\d{P}_n \phi_{\infty,a_0}^*| = \bounded(n^{-1/2})$. Since the analysis of the remainder term $R_{n,a_0,1}^\circ$ is entirely analogous to that provided in the proof of Theorem~1, we begin by looking at the remainder term $R_{n,a_0,2}^\circ$ instead. We define $\s{F}_{n, v} := \{\phi_{\mu_{n,v}, g_{n,v}, a_0}'  - \phi_{\mu_{\infty}, g_{\infty}, a_0}' : a_0 \in \s{A} \}$. We then have $\sup_{a_0 \in \s{A}} \left|R_{n,a_0,2}^\circ\right| \leq n^{-1/2} \max_v \sup_{f \in \s{F}_{n, v}} \left| \d{G}_n^v f\right|$.  We will demonstrate that $E_0 \left[ \sup_{f \in \s{F}_{n,v}} \left| \d{G}_n^v f \right| \right] = \fasterthandet(1)$ using Theorem 2.14.2 of VW. By the tower property,
\[ E_0 \left[\sup_{f \in \s{F}_{n,v}} \left| \d{G}_n^v f \right| \right]  = E_0 \left\{ E_0\left[\sup_{f \in \s{F}_{n,v}} \left| \d{G}_n^v f  \right| \Bigg| \s{T}_{n,v} \right] \right\} \ .\]
Here, the inner expectation is with respect to the distribution of the observations in the validation sample $\s{V}_{n,v}$ given the training sample $\s{T}_{n,v}$, while the outer expectation is with respect to the observations in the training sample. Since $\mu_{n,v}$ and $g_{n,v}$ are constructed only using $\s{T}_{n,v}$, they are fixed when conditioning on $\s{T}_{n,v}$. We note that with probability one, for all $n$ large enough,
\begin{align*} 
&\left| \phi_{\mu_{n,v}, g_{n,v}, a_0}'(o)  - \phi_{\mu_{\infty}, g_{\infty}, a_0}'(o)\right| \\
&=\ \left| I_{(-\infty, a_0]}(a)\left\{  \frac{ y - \mu_{n,v}(a,w)}{g_{n,v}(a,w)}- \frac{ y - \mu_\infty(a,w)}{g_\infty(a,w)} + \int\left[ \mu_{n,v}( a, \tilde{w}) - \mu_{\infty}(a, \tilde{w})\right] \,Q_0(d\tilde{w})\right\} \right.\\
&\qquad\left. + \int_{-\infty}^{a_0} \left[\mu_{n,v}(\tilde{a}, w) - \mu_\infty(\tilde{a},w) \right]\, F_0(d\tilde{a}) \right| \\
&\leq\   I_{(-\infty, a_0]}(a)\left| \frac{ y - \mu_{n,v}(a,w)}{g_{n,v}(a,w)}- \frac{ y - \mu_\infty(a,w)}{g_\infty(a,w)}\right| +I_{(-\infty, a_0]}(a)\int\left| \mu_{n,v}( a, \tilde{w}) - \mu_{\infty}(a, \tilde{w})\right| \,Q_0(d\tilde{w}) \\
&\qquad + \int_{-\infty}^{a_0} \left|\mu_{n,v}(\tilde{a}, w) - \mu_\infty(\tilde{a},w) \right|\, F_0(d\tilde{a}) \\
&\leq\ \left|  \frac{ y - \mu_{n,v}(a,w)}{g_{n,v}(a,w)}- \frac{ y - \mu_\infty(a,w)}{g_\infty(a,w)}\right| +\int\left| \mu_{n,v}( a, \tilde{w}) - \mu_{\infty}(a, \tilde{w})\right| \,Q_0(d\tilde{w})  + \int \left|\mu_{n,v}(\tilde{a}, w) - \mu_\infty(\tilde{a},w) \right|\, F_0(d\tilde{a}) \\
&\leq\ \left| \left[ y - \mu_\infty(a, w) \right] \left[ \frac{1}{g_{n,v}(a,w)} - \frac{1}{g_{\infty}(a,w)}\right]\right| + \frac{1}{g_{n,v}(a,w)} \left| \mu_{n,v}(a,w) - \mu_\infty(a,w)\right| \\
&\qquad +\int\left| \mu_{n,v}( a, \tilde{w}) - \mu_{\infty}(a, \tilde{w})\right| \,Q_0(d\tilde{w})+ \int \left|\mu_{n,v}(\tilde{a}, w) - \mu_\infty(\tilde{a},w) \right|\, F_0(d\tilde{a}) 
 \end{align*}
 for all $a_0 \in \s{A}$. We then define $F_{n,v}$ pointwise by taking $F_{n,v}(o)$ to be the sum of terms on the right-hand side of the last inequality above. $F_{n,v}$ ultimately serves as an envelope function for $\s{F}_{n,v}$, so that 
 by Theorem 2.14.1 of VW we have that, for $n$ large enough,
 \[ E_0\left[\sup_{f \in \s{F}_{n,v}} \left| \d{G}_n^v f  \right|\Bigg| \s{T}_{n,v} \right]\  \leq\ C \| F_{n,v}\|_{P_0, 2} J(1, \s{F}_{n,v}) \]
 for a universal constant $C\in(0,\infty)$, where $J(1, \s{F}_{n,v})$ is the uniform entropy integral as defined in Chapter 2.14 of VW. The class $\s{F}_{n,v}$ is a convex combination of the class $\{ I_{(-\infty, a_0]}(a) : a_0 \in \s{A}\}$, which is well-known to be VC and hence possess polynomial covering numbers, plus the class $\{  \int  I_{(-\infty, a_0]}(a) \mu(a, w) \, F_0(da) : a_0 \in \s{A}\}$ for $\mu =\mu_\infty$ and $\mu = \mu_{n,v}$ (both of which are fixed functions), so in view of Lemma~\ref{lemma:marginal_class} this class also possesses polynomial covering numbers. Thus, $J(1, \s{F}_{n,v})$ is uniformly bounded for all $n$ and $v$.  It follows then that, for some constant $C'\in(0,\infty)$ and large enough $n$,
 \[  E_0 \left[\sup_{f \in \s{F}_{n,v}} \left| \d{G}_n^v f \right| \right] \ \leq\ C' E_0 \left[  \| F_{n,v}\|_{P_0, 2}  \right]. \]
It remains to demonstrate that $\max_v  E_0 \left[  \| F_{n,v}\|_{P_0, 2}  \right] \longrightarrow 0$. We have that $\| F_{n,v}\|_{P_0,2}$ is bounded above by 
\begin{align*}
&3 K_1^{-1} \left\{ \int \left[ \mu_{n,v}(a,w) - \mu_\infty(a, w)\right]^2 \, dP_0(o) \right\}^{1/2}+\left\{\int  \sigma_0^2(a, w) \left[ \frac{1}{g_{n,v}(a,w)} - \frac{1}{g_{\infty}(a, w)}\right]^2 \, dP_0(o) \right\}^{1/2} \\
&\quad\ + \left\{ \int  \left[\mu_0(a, w) - \mu_\infty(a, w) \right]^2\left[ \frac{1}{g_{n,v}(a,w)} - \frac{1}{g_{\infty}(a, w)}\right]^2 \, dP_0(o)\right\}^{1/2} \\
&\leq\ 3 K_1^{-1}  \left\{ \int \left[ \mu_{n,v}(a,w) - \mu_\infty(a, w)\right]^2   \, dP_0(o)\right\}^{1/2}+ (K_0 + K_3) \left\{\int \left[ \frac{1}{g_{n,v}(a,w)} - \frac{1}{g_{\infty}(a, w)}\right]^2 \, dP_0(o)\right\}^{1/2}.
\end{align*}
Both terms tend to zero in probability by condition (B2), and since all involved terms are uniformly bounded by condition (B1), they also tend  to zero in expectation. Therefore, we have that $\max_v  E_0 \left[  \| F_{n,v}\|_{P_0, 2}  \right] \longrightarrow 0$, which implies that $\sup_{a_0 \in \s{A}} \left|R_{n,a_0,2}^\circ\right| = \fasterthan(n^{-1/2})$.

Finally, we analyze the remainder term $R_{n,a_0,3}^\circ$, which itself has three subcomponents, as decomposed before the presentation of Lemma~\ref{lemma:marginal_class}. Its second subcomponent, $\frac{1}{2N^{3/2}} \d{G}_n^v \gamma_{\mu_{n,v}, a_0}$, is an ordinary empirical process and can be analyzed in a manner analogous to that used for $R_{n,a_0,2}^\circ$. Doing so yields that the second subcomponent is $\bounded(n^{-3/2})$. The third subcomponent of $R_{n,a_0,3}^\circ$ is a bias term which, in view of the uniform boundedness of $\mu_{n,v}$ and $g_0^{-1}$, is $\bounded(n^{-1})$. The first subcomponent of $R_{n,a_0,3}^\circ$ is a $P_0$-degenerate $U$-process as defined above. We denote $\s{F}_{n,v}' := \{ \gamma_{\mu_{n,v}, a_0} : a_0 \in\s{A} \}$ and $S_{n,v}(\gamma) :=  \sum_{i, j \in \s{V}_{n,v}.i \neq j} \gamma(O_i, O_j)$. We then have 
$\sup_{a_0 \in\s{A}} \left| \sum_{i, j \in \s{V}_{n,v},i \neq j} \gamma_{\mu_{n,v}, a_0}(O_i,O_j) \right| = \sup_{\gamma \in \s{F}_{n,v}'} \left|S_{n,v}(\gamma) \right|.$
As before, we begin by conditioning on $\s{T}_{n,v}$ so that $\mu_{n,v}$ is a fixed function:
\[ E_{0} \left[\sup_{\gamma \in \s{F}_{n,v}'} \left|S_{n,v}(\gamma) \right| \right] = E_{0} \left\{E_0\left[\sup_{\gamma \in \s{F}_{n,v}'} \left|S_{n,v}(\gamma) \right|\  \middle|\ \s{T}_{n,v} \right] \right\} \ .\]
We apply Lemma~\ref{lemma:u_process} to the inner expectation. First, we note that $\s{F}_{n,v}'$ is a uniformly bounded class of functions by the uniform boundedness of $\mu_{n,v}$. Second, the class $\s{F}_{n,v}'$ can be formed as a sequence of compositions of the class $\{a \mapsto I_{(-\infty, a_0]}(a) : a_0 \in \s{A}\}$, which, as discussed above, has polynomial uniform entropy numbers. This implies that the uniform entropy integral in the upper bound of Lemma~\ref{lemma:u_process} is finite. Therefore, we have that
\[  E_{0} \left[\sup_{\gamma \in \s{F}_{n,v}'} \left|S_{n,v}(\gamma) \right|\ \middle|\ \s{T}_{n, v} \right] \ \lesssim\ [N (N-1)]^{1/2} \]
for some universal constant. Thus, the first subcomponent of $R_{n,a_0,3}^\circ$ is also $\bounded(n^{-1})$, and we conclude that $\sup_{a_0 \in \s{A}} |R_{n,a_0,3}^\circ| = \bounded(n^{-1})$. 

We have now shown that, under conditions (B1)--(B2) and (A3), all three remainder terms are at least $\fasterthan(1)$, and thus, Theorem~1 of WC yields the result. \qed

\subsection*{Proof of Theorem~5}

As before, we use Theorem~4 of WC to establish the result. Verification of the conditions (WC.B1)--(WC.B3) and (WC.A4)--(WC.A5) is identical as in the proof of Theorem~2. Hence, we focus on conditions (WC.B4)--(WC.B5). Specifically, defining 
\[K_{n,j}^\circ(\delta) := n^{2/3} \sup_{|u|\leq \delta n^{-1/3}}  \left| R_{n,a +u,j}^\circ - R_{n,a,j}^\circ \right|,\]
for each $j$, we must show that $K_{n,j}^\circ(\delta) \inprob 0$ for all $\delta$ small enough and that, for some $\beta\in(1,2)$, $\delta\mapsto\delta^{-\beta}E[ K_{n,j}^\circ(\delta)]$ is decreasing for all $\delta$ small enough and $n$ large enough. Verification for the term $K_{n,1}^\circ$ is nearly identical to the analysis presented for $K_{n,1}$ in the proof of Theorem~2. For $K_{n,2}^\circ(\delta)$, we first define 
\[\s{G}_{n,v, R} := \left\{ \left(\phi_{\mu_{n,v}, g_{n,v}, a_0 + u}'  - \phi_{\mu_{\infty}, g_{\infty}, a_0 + u}'\right) -\left( \phi_{\mu_{n,v}, g_{n,v}, a_0}'  - \phi_{\mu_{\infty}, g_{\infty}, a_0}' \right): |u| \leq R\right\}\]
for each $R > 0$, where we have suppressed dependence on $a_0$. We then have that \[K_{n,2}^\circ(\delta) \leq n^{1/6} \max_v \sup_{g \in \s{G}_{n,v, \delta  n^{-1/3}}} \left| \d{G}_n^v g\right|.\] As before, we condition on $\s{T}_{n,v}$, and write 
\[ E_0 \left[ \sup_{g \in \s{G}_{n,v, \delta  n^{-1/3}}} \left| \d{G}_n^v g\right| \right] = E_0\left\{ E_0 \left[ \sup_{g \in \s{G}_{n,v, \delta  n^{-1/3}}} \left| \d{G}_n^v g\right|\ \middle|\ \s{T}_{n,v} \right]  \right\} \ .\]
Thus, $\mu_{n,v}$ and $g_{n,v}$ are fixed with respect to the inner expectation. We note that 
\begin{align*} 
&\left| \left[\phi_{\mu_{n,v}, g_{n,v}, a_0 + u}'(o)  - \phi_{\mu_{\infty}, g_{\infty}, a_0 + u}'(o)\right] -\left[ \phi_{\mu_{n,v}, g_{n,v}, a_0}'(o)  - \phi_{\mu_{\infty}, g_{\infty}, a_0}'(o)\right]\right| \\
&\qquad=\ \left| I_{a_0,u}(a)\left\{  \frac{ y - \mu_{n,v}(a,w)}{g_{n,v}(a,w)}- \frac{ y - \mu_\infty(a,w)}{g_\infty(a,w)} + \int\left[ \mu_{n,v}( a, \tilde{w}) - \mu_{\infty}(a, \tilde{w})\right] \,Q_0(d\tilde{w})\right\} \right.\\
&\qquad\qquad \left. + \int_{a_0}^{a_0 + u} \left[\mu_{n,v}(\tilde{a}, w) - \mu_\infty(\tilde{a},w) \right]\, F_0(d\tilde{a}) \right| \\
&\qquad\leq\   I_{[a_0-u, a_0+u]}(a)\left\{\left| \left[ y - \mu_\infty(a, w) \right] \left[ \frac{1}{g_{n,v}(a,w)} - \frac{1}{g_{\infty}(a,w)}\right]\right| + \frac{1}{g_{n,v}(a,w)} \left| \mu_{n,v}(a,w) - \mu_\infty(a,w)\right| \right\} \\
&\qquad\qquad +I_{[a_0-u, a_0+u]}(a)\int \left| \mu_{n,v}( a, \tilde{w}) - \mu_{\infty}(a, \tilde{w})\right| \,Q_0(d\tilde{w})+ \int_{a_0 - u}^{a_0 + u}\ \left|\mu_{n,v}(\tilde{a}, w) - \mu_\infty(\tilde{a},w) \right|\, F_0(d\tilde{a}) \ .
 \end{align*}
We will take as envelope function $G_{n,v, R}$ for $\s{G}_{n,v,R}$ the sum of terms on the right-hand side of the last inequality above, with $u$ replaced by $R$. We then have by Theorem 2.14.1 of VW that
 \[ E_0\left[\sup_{g \in \s{G}_{n,v, \delta  n^{-1/3}}} \left| \d{G}_n^v g  \right|\ \middle|\ \s{T}_{n,v} \right] \ \leq\ C \| G_{n,v, \delta  n^{-1/3}}\|_{P_0,2} J(1, \s{G}_{n,v, \delta  n^{-1/3}}) \ . \]
The class $\s{G}_{n,v,\delta n^{-1/3}}$ is once again contained in a sequence of Lipschitz transformations of the class $\{ a \mapsto I_{(a_0, a_0 + u]}(a) : u \in \d{R}\}$ and various fixed functions, so that the class has polynomial uniform entropy numbers and $J(1, \s{G}_{n,v, \delta  n^{-1/3}})$ is uniformly bounded for all $n$. We then have for some $C_4 < \infty$ that
\[ E_0 \left[ \sup_{g \in \s{G}_{n,v, \delta  n^{-1/3}}} \left| \d{G}_n^v g\right| \right] \ \leq\ C_4  E_0 \left[  \| G_{n,v, \delta  n^{-1/3}}\|_{P_0,2} \right]\ .\]
By the boundedness condition (B1), for all $n$ large enough, we have that 
\begin{align*}
\| G_{n,v, \delta  n^{-1/3}}\|_{P_0, 2}\ \leq&\ \ (K_0 + K_3) \left\{ \int I_{[a_0-\delta  n^{-1/3}, a_0+\delta  n^{-1/3}]}(a) \left[ \frac{1}{g_{n,v}(a,w)} - \frac{1}{g_{\infty}(a,w)}\right]^2 \, dP_0(o) \right\}^{1/2}\\
&\quad+ 3K_1^{-1} \left\{ \int I_{[a_0-\delta  n^{-1/3}, a_0+\delta  n^{-1/3}]}(a)   \left[  \mu_{n,v}(a,w) - \mu_\infty(a,w)\right]^2 \, dP_0(o) \right\}^{1/2}  \\
\leq&\ \ K_1^{-2}(K_0 + K_3) \left\{ P_0\left ( | A - a_0| \leq \delta  n^{-1/3}\right)\int \left[ \frac{1}{g_{n,v}(a,w)} - \frac{1}{g_{\infty}(a,w)}\right]^2 \, dP_0(o) \right\}^{1/2}\\
&\quad+ 3K_1^{-1} K_0^2\left\{ P_0\left ( | A - a_0| \leq \delta  n^{-1/3}\right) \int  \left[ \mu_{n,v}(a,w) - \mu_\infty(a,w)\right]^2 \, dP_0(o) \right\}^{1/2}  \\
\lesssim&\ \ \delta^{1/2} n^{-1/6} \left\{ \left[ P_0 \left( g_{n,v}^{-1}  -g_\infty^{-1}\right)^2 \right]^{1/2}+ \left[ P_0\left(  \mu_{n,v} - \mu_\infty\right)^2 \right]^{1/2}\right\}  \ .
\end{align*}
Since all terms involved are uniformly bounded, we therefore have 
\[ E_0 \left[ K_{n,2}^\circ(\delta)\right]\ \lesssim\ \delta^{1/2} \max_v \left\{ \left[ P_0 \left( g_{n,v}^{-1}  -g_\infty^{-1}\right)^2 \right]^{1/2}+ \left[ P_0\left(  \mu_{n,v} - \mu_\infty\right)^2 \right]^{1/2}\right\} \ ,\]
so that $K_{n,2}^\circ(\delta) \inprob 0$ for each $\delta > 0$ and $\delta \mapsto \delta^{-\beta} E_0 \left[ K_{n,2}^\circ(\delta)\right]$ is decreasing for any $\beta \in (1,2)$ and all $n$ large enough.

For $K_{n,3}^\circ(\delta)$, we first note that (WC.B4) has already been shown to hold in the proof of Theorem 1, since 
\[n^{2/3}\sup_{|u| \leq \delta n^{-1/3}} |R_{n, a_0 +u, 3}^\circ - R_{n, a_0, 3}^\circ|\ \leq\ 2n^{2/3} \sup_{a_0 \in \s{A}} |R_{n,a_0,3}^\circ|\ =\  \bounded(n^{-1/3})\ .\]
We verify (WC.B5) for each of the three subcomponents of $K_{n,3}^\circ(\delta)$ defined by the three subcomponents of $R_{n,a_0,3}^\circ$. Due to the assumed boundedness of $\mu_{n,v}$ and $g_0^{-1}$, the contribution of the third subcomponent is bounded for all $\delta$ small enough up to a constant (not depending on $\delta$ or $n$) by $n^{-1/3}P_0\left(|A - a| \leq \delta n^{-1/3}\right) \lesssim n^{-2/3} \delta$, which satisfies (WC.B5). For the second subcomponent, which is an ordinary empirical process term, analogous methods to that used for $K_{n,2}^\circ$ can be used to verify (WC.B5). The first subcomponent requires controlling $\sum_{i, j \in \s{V}_{n,v}, i \neq j} \gamma_{\mu_{n,v}, a_0, u}^*(O_i, O_j)$, where we define
\begin{align*}
\gamma_{\mu, a_0, u}^*(o_i, o_j)\ &:=\ I_{a_0, u}(a_i) \mu(a_i, w_j) + I_{a_0,u}(a_j) \mu(a_j, w_i) \\
&\qquad - \int \left[ I_{a_0, u}(a_i) \mu(a_i, w) + I_{a_0, u}(a_j) \mu(a_j, w)\right] Q_0(dw) \\
&\qquad - \int  I_{a_0, u}(a)\left[ \mu(a, w_i) + \mu(a, w_j)\right] F_0(da)+  2\iint I_{a_0, u}(a) \mu(a, w) F_0(da) Q_0(dw) \ .
\end{align*}
Conditioning upon $\s{T}_{n,v}$, $\mu_{n,v}$ becomes fixed, so that the function $\gamma_{\mu_{n,v}, a_0, u}^*$  falls in the class $\s{H}_{\delta,n,v}^\circ := \{\gamma_{\mu_{n,v}, a_0, u}^* : |u| \leq \delta \}$ for all $|u| \leq \delta$. Thus, 
\[ \left\{ \sum_{i, j \in \s{V}_{n,v},i \neq j} \gamma^*(O_i, O_j): \gamma^* \in \s{H}_{\delta, n,v} \right\}\] 
is a $P_0$-degenerate $U$-process conditional on $\s{T}_{n,v}$. The class $H_{\delta, n,v}$ has uniform entropy bounded up to a constant by $-\log\varepsilon$ relative to the envelope 
\[H_{\delta, n, v}:(a_1,w_1, a_2, w_2) \mapsto 2K_{0}I_{[0, \delta]}(|a_1- a_0|) + 2K_{0}I_{[0, \delta]}(|a_2- a_0|)  + 4K_{0} P_0\left(|A - a_0| \leq \delta\right).\]
Since $\| H_{\delta} \|_{P_0 \times P_0,2} \lesssim \delta^{1/2}$, Lemma~\ref{lemma:u_process} yields that
\[n^{2/3}E_0 \left[ \sup_{\gamma^* \in \s{H}_{\delta, n, v}} \left|\frac{1}{N^2} \sum_{{\stackrel{i, j \in \s{V}_{n,v}}{i \neq j}}} \gamma^*(O_i, O_j)\right|\ \middle|\ \s{T}_{n,v}\right]\ \lesssim\  n^{-1/3}\delta^{1/2}\]
for all $\delta$ small enough. Hence, (WC.B5) is satisfied for this $U$-process term. \qed

\clearpage

\section*{Supplementary material: additional simulation results}

Figure~\ref{fig:psiprime} presents boxplots of the estimator $\psi_n'(a)$ of the true derivative $\psi_0'(a)$ for each combination of nuisance function estimators used. The estimators are taken to the one-third power because that is what appears in the estimator of the pointwise confidence intervals. The estimators are roughly centered around the truth (shown in red), except for values of $a$ in the tails of the distribution of $A$.
\begin{figure}[ht!]
\centering
\includegraphics[width=\linewidth]{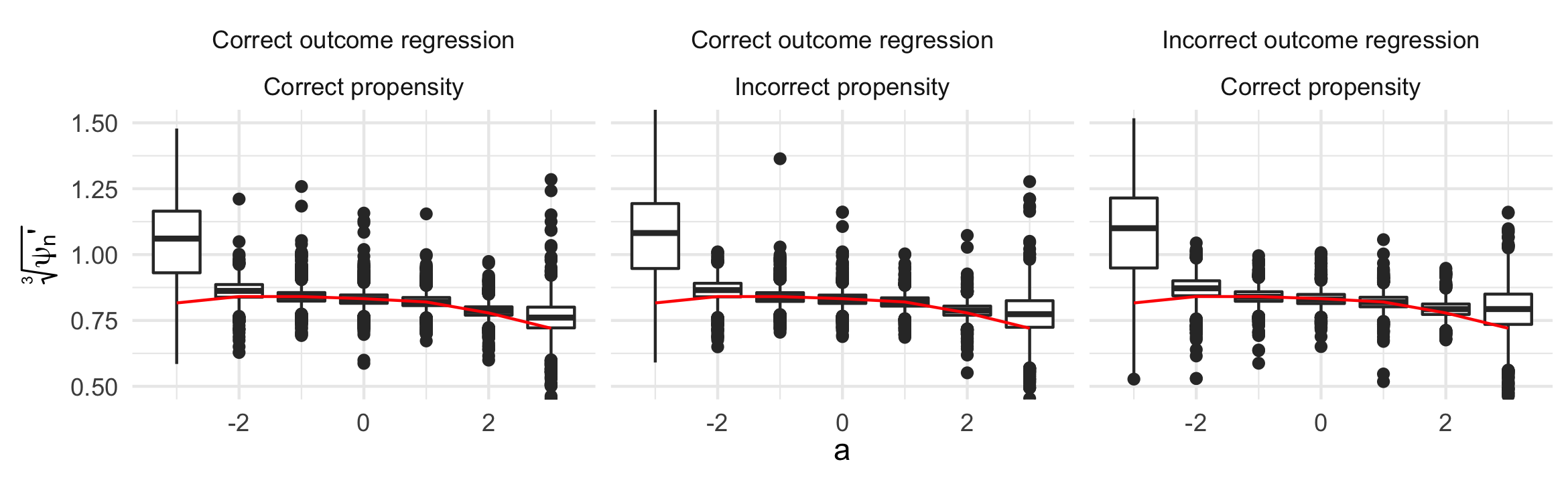}
\caption{Distribution of the estimator $\psi_n'(a)$ of $\psi_0'(a)$ for different values of $a$ over 1000 datasets simulated as described in the text. Red lines show the true values $\psi_0'(a)$.}
\label{fig:psiprime}
\end{figure} 

Figure~\ref{fig:plug_in_kappa} shows histograms of the plug-in estimator of $\kappa_0(a)$. The estimators are taken to the one-third power because that is what appears in the estimator of the pointwise confidence intervals. The estimators are centered around the truth (show in red) when both $\mu_n$ and $g_n$ are consistent, but are biased for some values of $a$ when either $\mu_n$ or $g_n$ is inconsistent.
\begin{figure}[ht!]
\centering
\includegraphics[width=\linewidth]{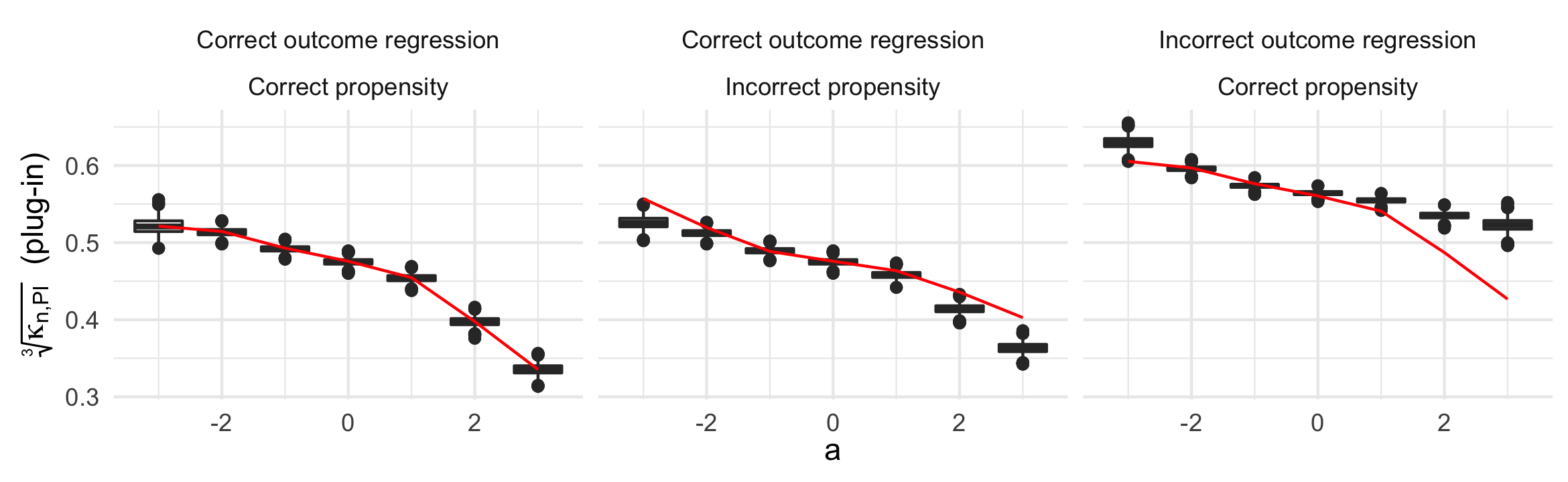}
\caption{Distribution of the plug-in estimator $\kappa_n(a)$ of $\kappa_0(a)$ for different values of $a$  over 1000 datasets simulated as described in the text. Red lines show the true values $\kappa_0(a)$.}
\label{fig:plug_in_kappa}
\end{figure}

Figure~\ref{fig:dr_kappa} shows histograms of the doubly-robust estimator of $\kappa_0(a)$. Once again, the estimators are taken to the one-third power because that is what appears in the estimator of the pointwise confidence intervals. In all settings considered, the estimators are roughly centered around the truth, which is shown in red. However, the spread of the estimator around the true scale is substantially larger than that of the plug-in estimator.
\begin{figure}[ht!]
\centering
\includegraphics[width=\linewidth]{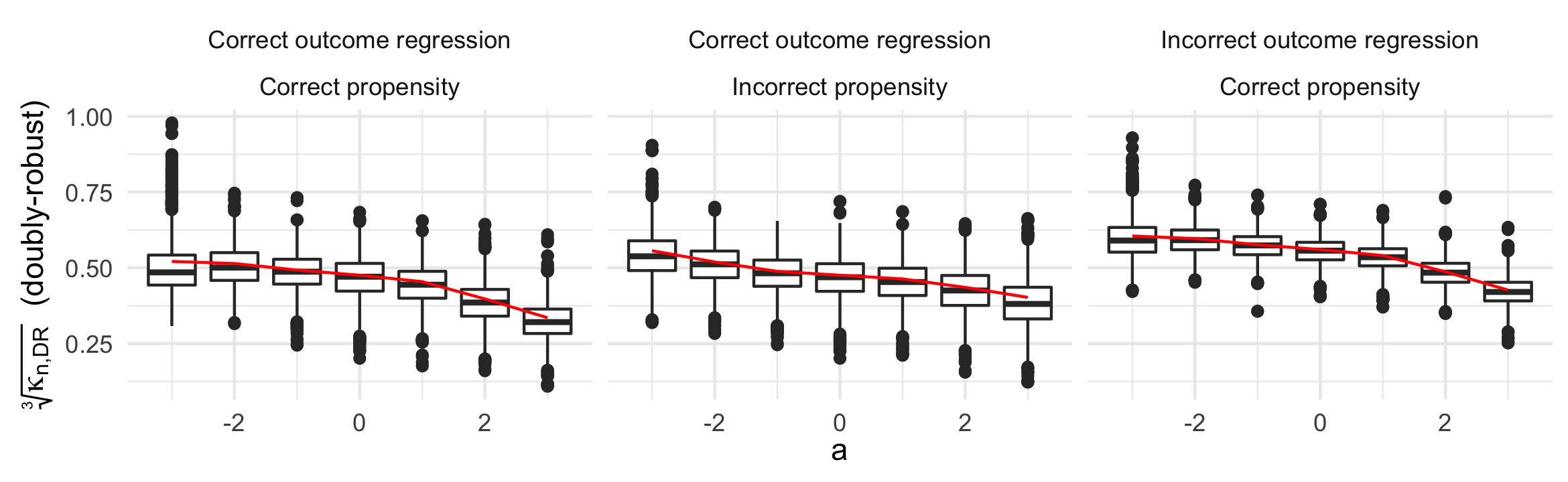}
\caption{Distribution of the doubly-robust estimator $\kappa_n(a)$ of $\kappa_0(a)$ for different values of $a$ over 1000 simulated datasets as described in the text. Red lines show the true values $\kappa_0(a)$.}
\label{fig:dr_kappa}
\end{figure}

\clearpage
\section*{Supplementary material: additional data analyses}

Figure~\ref{fig:tcell_responses} presents the estimated probability of a positive CD8+ T-cell response as a function of BMI for BMI values between the 0.05 and 0.95 quantile of the marginal empirical distribution of BMI using our estimator (left panel), the local linear estimator (middle panel), and the sample-splitting estimator (right panel). Pointwise 95\% confidence intervals are shown as dashed/dotted lines.


\begin{figure}[h!]
\centering
\includegraphics[width=6.5in]{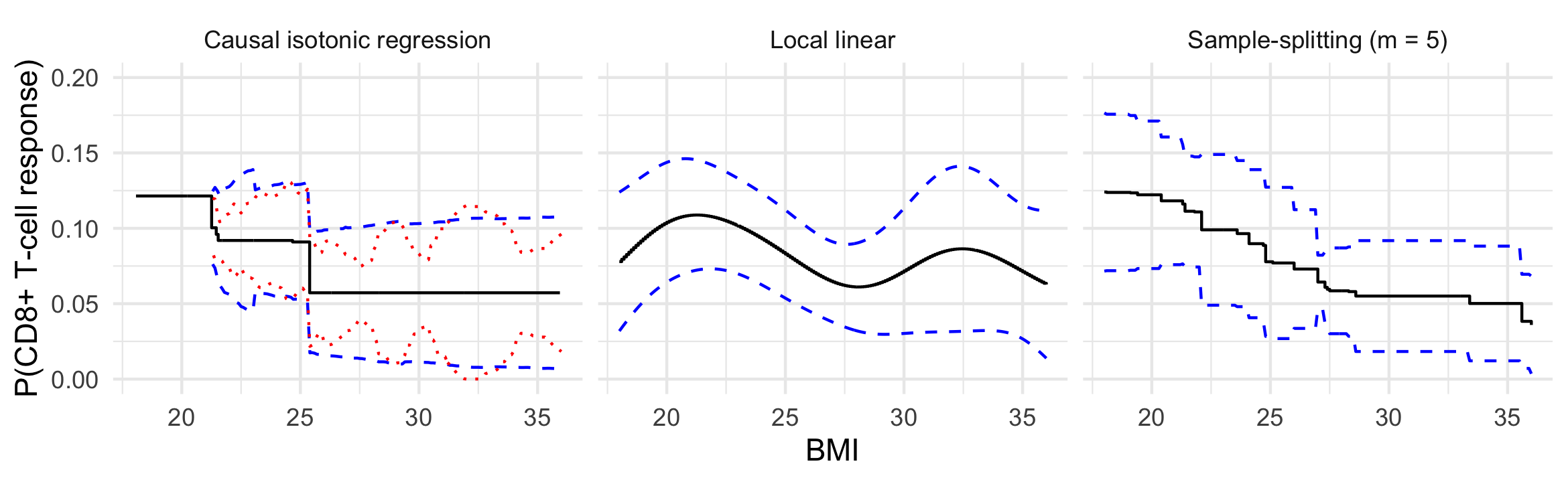}
\caption{Estimated probabilities of CD8+ T-cell response and 95\% pointwise confidence intervals as a function of BMI, adjusted for sex, age, number of vaccinations received, vaccine dose, and study. The left panel displays the estimator proposed here, the middle panel the local linear estimator of \cite{kennedy2016continuous}, and the right panel the sample-splitting version of our estimator with $m=5$ splits. In the left panel, the blue dashed lines are confidence intervals based on the plug-in estimator of the scale parameter, and the dotted lines are based on the doubly-robust estimator of the scale parameter.}
\label{fig:tcell_responses}
\end{figure}

\end{document}